\documentclass[a4paper,USenglish,cleveref,autoref,thm-restate]{lipics-v2021}
\usepackage{amsmath}
\usepackage{amssymb}
\usepackage{amsthm}
\usepackage{mathtools}
\usepackage{cite}
\usepackage{xspace}
\usepackage{siunitx}
\usepackage{tcolorbox}

\DeclareFontShape{T1}{lmr}{m}{scit}{<->ssub * lmr/m/scsl}{}

\usepackage{complexity}
\usepackage{booktabs}

\theoremstyle{definition}

\newcommand{\ETH}{\textup{ETH}\xspace}

\DeclareMathOperator{\fst}{fst}
\DeclareMathOperator{\snd}{snd}
\DeclareMathOperator{\val}{val}
\DeclareMathOperator{\opt}{opt}
\DeclareMathOperator{\copt}{copt}

\DeclareMathOperator{\first}{first}
\DeclareMathOperator{\last}{last}
\DeclareMathOperator{\cost}{cost}

\newcommand{\dist}{\operatorname{dist}}

\newtheorem{longtheorem}{Theorem}
\newtheorem{longlemma}[longtheorem]{Lemma}
\newtheorem{longredrule}{Reduction Rule}

\newtheorem{longdefinition}[longtheorem]{Definition}
\newtheorem{longobservation}[longtheorem]{Observation}

\usepackage{alphalph}

\crefname{longtheorem}{Theorem}{Theorems}
\Crefname{longtheorem}{Theorem}{Theorems}
\crefname{longlemma}{Lemma}{Lemmas}
\Crefname{longlemma}{Lemma}{Lemmas}
\crefname{longdefinition}{Definition}{Definitions}
\Crefname{longdefinition}{Definition}{Definitions}
\crefname{longobservation}{Observation}{Observations}
\Crefname{longobservation}{Observation}{Observations}
\crefname{longclaim}{Claim}{Claims}
\Crefname{longclaim}{Claim}{Claims}
\crefname{longcorollary}{Corollary}{Corollaries}
\Crefname{longcorollary}{Corollary}{Corollaries}
\crefname{longredrule}{Rule}{Rules}
\Crefname{longredrule}{Rule}{Rules}
\crefrangeformat{longlemma}{Lemmas~#3#1#4--#5#2#6}

\newcommand{\sel}{\mathsf{sel}}

\newcommand{\traces}{\mathsf{traces}}
\newcommand{\data}{\mathsf{data}}
\newcommand{\startsLeft}{\mathsf{startsLeft}}
\newcommand{\stopsRight}{\mathsf{stopsRight}}
\newcommand{\openPred}{\mathsf{openPred}}
\newcommand{\openSucc}{\mathsf{openSucc}}
\newcommand{\forgotten}{\mathsf{forgotten}}

\newcommand{\prob}{\textup{\textsc{TTPVS}}\xspace}
\newcommand{\problong}{\textup{\textsc{Tree-constrained Two-layer Planar Vertex Splitting}}\xspace}
\newcommand{\probStar}{\textup{\textsc{Two-layer Planar Vertex Splitting}}\xspace}
\newcommand{\povs}{\textup{\textsc{Pathwidth-One Vertex Splitting}}\xspace}

\newcommand{\wts}{\textup{\textsc{WTS}}\xspace}
\newcommand{\wtsLong}{\textup{\textsc{Weighted Tuple Sorting}}\xspace}

\newcommand{\problembox}[3]{%
  \begin{tcolorbox}[
    colback=white,
    colframe=black,
    boxrule=\fboxrule,
    boxsep=0.5em,
    left=0pt,
    right=0pt,
    top=0pt,
    bottom=0pt,
    sharp corners,
    before skip=3mm,
    after skip=2mm
  ]
    #1\par\smallskip
    \textbf{Input:} #2\par
    \textbf{Task:} #3
  \end{tcolorbox}%
}

\hideLIPIcs

\title{Two-Layer Drawings with a Tree on Top: \texorpdfstring{\\}{ }
Vertex Splits and Fixed-Parameter Algorithms
}

\titlerunning{Two-Layer Drawings with a Tree on Top}

\author{Alexander Firbas}{Algorithms and Complexity Group, TU Wien, Austria}{afirbas@ac.tuwien.ac.at}{https://orcid.org/0009-0007-2049-2144}{}

\author{Robert Ganian}{Algorithms and Complexity Group, TU Wien, Austria}{rganian@gmail.com}{https://orcid.org/0000-0002-7762-8045}{}

\author{Sylvain Meunier}{Univ Rennes, F-35000 Rennes, France}{sylvain.meunier@ens-rennes.fr}{https://orcid.org/0009-0004-6563-3359}{}

\author{Martin  Nöllenburg}{Algorithms and Complexity Group, TU Wien, Austria}{noellenburg@ac.tuwien.ac.at}{https://orcid.org/0000-0003-0454-3937}{}

\authorrunning{A. Firbas, R. Ganian, S. Meunier, and M. Nöllenburg}

\ccsdesc[500]{Human-centered computing~Graph drawings}
\ccsdesc[500]{Theory of computation~Fixed parameter tractability}

\keywords{two-layer graph drawing, vertex splitting, fixed-parameter algorithms}

\funding{Alexander Firbas, Robert Ganian, and Martin Nöllenburg are supported by the Vienna Science and Technology Fund (WWTF) [10.47379/ICT22029]. Alexander Firbas and Robert Ganian are also supported by the Austrian Science Fund (FWF) [10.55776/Y1329].}

\begin{document}

\maketitle

\begin{abstract}
Two-layer drawings of bipartite graphs place the vertices of each part on one of two parallel lines and draw the edges as straight-line links. Traditionally, the optimization goal is to find vertex permutations on one or both layers that minimize the induced number of edge crossings. This problem is \NP-hard, and crossing-minimal solutions may still contain many crossings. Recently, there has been growing interest in an orthogonal optimization goal, namely removing all crossings by vertex splitting, i.e., replacing original vertices by two or more copies and distributing the adjacencies among them.
In this paper, we study a natural extension of the two-layer vertex splitting problem in which the vertex order on one layer is constrained by a given auxiliary tree $T$, motivated by applications such as the visualization of anatomical hierarchies in the Human Reference Atlas. We investigate the parameterized complexity of this problem and obtain two main contributions: (1) a fixed-parameter algorithm with respect to the number $k$ of splits, and (2) an \ETH-tight single-exponential fixed-parameter algorithm with respect to the maximum degree of $T$. Moreover, we build on the latter result to obtain an \ETH-tight single-exponential algorithm for the classical unconstrained version of the problem, improving upon the previous $\mathcal{O}^*(2^{k\cdot \log k})$ algorithms. Finally, we also implement our algorithm and show that it performs well in practice.%

\end{abstract}

\section{Introduction}

Two-layer drawings are a classic representation of bipartite graphs that have been extensively studied from both theoretical and practical points of view.
Their popularity in graph drawing is due both to their natural use in visualizing bipartite graphs~\cite{ew-ecdbg-94} and their widespread occurrence as a subproblem in more general layered graph drawing~\cite{stt-mvuhss-81,hn-hda-13}, track layouts~\cite{dw-sqtlgs-05}, or storyline layouts~\cite{gjlm-cmsv-16}.

A nearly ubiquitous high-level aim when devising two-layer drawings is the avoidance of edge crossings. In fact,
crossing minimization in two-layer drawings with straight-line edges can be phrased as a purely combinatorial problem, asking for a permutation for each of the vertex subsets of the bipartite graph such that the number of edge crossings, defined as edge pairs whose incident vertices are inverted in the two permutations, is minimized.
It is known that testing if a planar, i.e., crossing-free, two-layer drawing exists can be done in linear time~\cite{emw-ecp-86}; these graphs are exactly forests of caterpillars or, equivalently, graphs of pathwidth at most~1.
General crossing minimization, however, is \NP-complete even if one of the permutations is fixed~\cite{ew-ecdbg-94}, and even if the graph is a tree~\cite{d-ncocmt-25}; it admits constant-factor approximations~\cite{ew-ecdbg-94,n-ibomcntd-05}, a fixed-parameter algorithm parameterized by the number of crossings~\cite{dw-efpta1cm-02,dw-efpta1cm-04}, and it has recently been the topic of the PACE challenge~\cite{KindermannKT24}.

\begin{figure}[t]
    \centering
    \includegraphics[page=1,width=\linewidth]{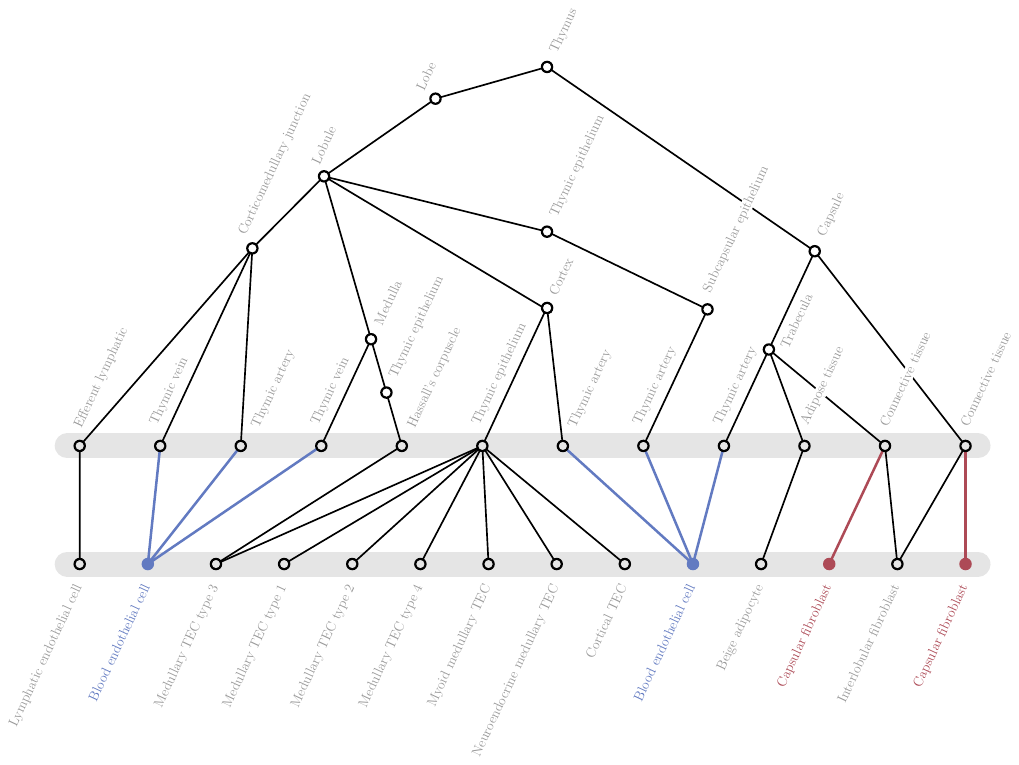}
    \caption{An illustration of an optimal solution for an instance of \prob\ modeling the Thymus organ, its sub-organs, and associated cell types (Source: \url{https://apps.humanatlas.io/kg-explorer/asct-b/thymus/v1.5}), based on our implementation described in Section~\ref{sec:implementation}. The solution uses two splits, shown in blue and red.
    }
    \label{fig:introfig}
\end{figure}

A by now well-established alternative approach for avoiding edge crossings is \emph{vertex splitting}~\cite{em-vtl-95,nstvww-pgtdvs-22,evrw-irssgdflecrtivs-23,nstvww-pgtdvs-25}. The vertex splitting operation (or \emph{split} for short) replaces a vertex $v$ of the given graph by two new, non-adjacent vertices $v'$ and $v''$ and distributes the edges originally incident to $v$ among $v'$ and $v''$, thus linking them to the former neighbors of $v$. The new vertices  jointly represent the original $v$. Since they can be placed in different positions, a single split can substantially decrease the required number of crossings of the new graph; in this sense, vertex splitting is well-suited even for two-layer drawings that may otherwise have potentially many crossings and thus suffer from poor readability in applications~\cite{jm-2scmpeha-97}. Because of this, the use of vertex splitting to avoid crossings in two-layer drawings has already been extensively studied, especially in the case where splits are only permitted on one side of the bipartition (see also the Related Work paragraph later in this section).

\subparagraph{Problem Formulation.}
While the classic two-layer vertex splitting problem admits arbitrary vertex permutations on both of its layers, here we target a variation of the problem where the admissible permutations on one side must be compatible with some planar embedding of a given tree.
Let $G=(V_t \cup V_b, E)$ be a bipartite graph and $k\in \mathbb{N}$.
Further, we are given a rooted tree $T$ whose leaf set is the \emph{constrained}, or \emph{top}, side $V_t$ of $G$.
As in the one-sided splitting setting for classical two-layer drawings~\cite{2layersplitting}, we restrict the split operation to the \emph{unconstrained}, or \emph{bottom}, side $V_b$.
We study \problong (\prob), which asks for a sequence of at most $k$ splits such that the resulting bipartite graph $G'$ admits a two-layer crossing-free drawing \emph{where the permutation~$\pi_t$ of $V_t$ must be compatible with $T$}---that is, for every internal node $v$ of $T$, the leaves of the full subtree $T_v$ rooted at $v$ must be consecutive in $\pi_t$; see \cref{fig:introfig} for an example.

Note that if $T$ is a star, there are no constraints on $\pi_t$ and our problem precisely corresponds to the classic two-layer vertex splitting problem; we denote this special case by \probStar~\cite{2layersplitting,bipartite_vertex_splitting,lxhw-pap2dvsgs-24,lxlwh-fpab1ve-23,lxlwh-fpab1ve-23-journal,povs}.
Moreover, \prob\ is solvable in linear time in the base case of $k=0$, i.e., when we simply have to find a crossing-free two-layer drawing compatible with $T$; this is a direct consequence of linear-time upward planarity testing for single-source digraphs~\cite{bbmt-ouptsd-98}.

The problem at hand is motivated by visualizing networks of anatomical
structures, cell types, and biomarkers in the context of the Human Reference
Atlas project~\cite{btqgbo-asctbhra-21}, see also
\url{https://apps.humanatlas.io/asctb-reporter/}. In this work, we focus on
the anatomical-structure--cell-type part of the ASCT+B data, where a
hierarchy of anatomical structures, called a \emph{partonomy tree}, is linked
to the different cell types of the human body. For example,
\cref{fig:introfig} shows the thymus, whose partonomy tree branches into
regions such as the capsule, lobe, lobule, cortex, and medulla, and whose
leaves include structures such as Hassall's corpuscle, thymic arteries and
veins, connective tissue, and adipose tissue. A leaf is adjacent exactly to
the cell types occurring in the corresponding structure, such as thymic
epithelial cells, endothelial cells, and fibroblasts. Thus, the visualization
should make both the partonomy tree and the cell-type composition clearly
readable. This is modeled by \prob, which asks for a planar drawing of the
tree and a crossing-free two-layer drawing after splitting a minimum number
of cell-type vertices.
We note that tree-constrained two-layer drawing problems also occur in the context of tanglegrams~\cite{bbbnosw-dcbth-09,csw-tkt-19} with phylogenetic trees constraining the permutations on both layers and in the multilayer setting of storyline layouts~\cite{gjlm-cmsv-16}.

\subparagraph{Related Work.}
Natural computational questions that have been studied for vertex splitting include determining the \emph{splitting number} of a given graph, i.e., the minimum number of splits necessary to obtain a planar drawing~\cite{ekkllm-pstg-18,ffm-snn-01}, minimizing the number of vertices to split into an arbitrary number of copies at once (also known as \emph{vertex explosions}), and minimizing the number of remaining crossings with a given budget of $k$ splits.
All of these three problems have also been considered before in the context of two-layer drawings, especially for the one-sided case, where only vertices of one side of the bipartition can be split, and all three problems are \NP-complete~\cite{2layersplitting}. In view of this, previous research---as well as our current work---attacks the vertex splitting problem through the lens of \emph{fixed-parameter tractability}~\cite{CyganFKLMPPS15,DowneyF13}.

Baumann, Pfretzschner, and Rutter~\cite{povs} studied splitting graphs to pathwidth~1, which can be seen as a generalization of the two-layer splitting problem: they provided a linear-size kernel and an $\mathcal{O}((6k+12)^k m)$-time fixed-parameter algorithm for that problem, where $m$ is the number of edges. The latter algorithm directly implies an algorithm with the same running time for the two-layer vertex splitting problem, answering an
open question of Ahmed, Kobourov, and Kryven~\cite{bipartite_vertex_splitting}. Liu, Xiao, Huang, and Wang improved this result to an $\mathcal{O}((3k+11)^k m)$-time fixed-parameter algorithm for the pathwidth-1 splitting problem and a faster $\mathcal{O}((k+5)^k m)$-time algorithm for the more restricted two-layer splitting problem~\cite{lxhw-pap2dvsgs-24}.
The related vertex explosion problem for two-layer drawings has been shown to admit a polynomial kernel, and thus is \FPT\ parameterized by the number of vertex explosions~\cite{bipartite_vertex_splitting}.
This result was first improved to a quadratic-size kernel and a single-exponential fixed-parameter algorithm~\cite{povs}, and then further to a linear-size kernel~\cite{lxlwh-fpab1ve-23,lxlwh-fpab1ve-23-journal}.

\subparagraph{Contributions.}
In the context of \prob, the previous $k^{\mathcal{O}(k)}m$-time algorithms for the two-layer splitting problem~\cite{povs,lxhw-pap2dvsgs-24} solve a special case of our problem of interest---specifically, the case where $T$ is a star. However, for general choices of $T$, it seems impossible to generalize or adapt the techniques employed there: at their core, these algorithms attack the graph-theoretic problem of splitting to a graph of pathwidth~$1$, and it is difficult to reconcile that perspective with the drawing/permutation constraints imposed by $T$. In fact, it is not even immediately obvious why \prob\ should be solvable in polynomial time for every fixed value of $k$ (i.e., lie in the complexity class \XP): while one can employ trivial branching to determine which vertices of $V_b$ need to be split, it is not possible to exhaustively branch in order to determine what these splits should look like.

As our \textbf{first contribution}, we obtain a fixed-parameter algorithm for \prob. Our result relies on a new approach that dynamically Turing-reduces the problem to the task of finding a minimum-cost permutation of tuples with constraints---a problem we call \wtsLong---and then establishing the fixed-parameter tractability of this problem. The latter then follows via a non-trivial reduction to a graph problem for which we show that the constructed graph $H$ must have pathwidth at most $\mathcal{O}(k)$, in turn allowing us to solve it via dynamic programming. In fact, we prove that $H$ must not only have bounded pathwidth but also must have $\mathcal{O}(k)$ vertex deletion distance to an interval graph---a property which allows us to leverage the state-of-the-art algorithm of Cao~\cite{cao2016linear} in order to obtain a running time of $\mathcal{O}^*(2^{k\cdot \log k})$.\footnote{Here, $\mathcal{O}^*$ notation suppresses polynomial dependencies on the number $n$ of vertices and $m$ of edges.}

\begin{restatable}{theorem}{thmsplits}
\label{thm:thmsplits}
\prob\ is fixed-parameter tractable with respect to the number $k$ of splits, and in particular can be solved in time $\mathcal{O}(2^{\mathcal{O}(k\cdot \log k)}\cdot n+m)$.
\end{restatable}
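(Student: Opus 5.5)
We follow the route announced in the contributions paragraph: first guess the combinatorial skeleton of a solution, then reduce the remaining freedom to \wtsLong, and finally solve that problem by dynamic programming over a graph of small pathwidth.

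\textbf{Step 1: preprocessing and guessing which vertices are split.} Vertices of $V_b$ of degree~$1$ are harmless. We can apply reduction rules that contract or remove parts of $T$ and $G$ which are forced (for instance, subtrees of $T$ all of whose leaves see only private bottom vertices) so that the instance size relevant to $k$ becomes small. In any solution, every split vertex $v$ is replaced by at most $k+1$ copies, and each copy covers a contiguous interval of leaves in $\pi_t$. A non-split bottom vertex must have its neighbourhood consecutive in $\pi_t$, so it behaves like a consecutive-ones constraint compatible with $T$. The conflicts that force splits can be detected via obstructions, namely small configurations certifying that no $T$-compatible order makes all neighbourhoods consecutive, in the style of PQ-tree failures. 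Branching on obstructions of size $O(1)$ gives $k^{O(k)}$ candidate sets $S\subseteq V_b$ of vertices to split, each with a guessed number of copies. This step avoids the hopeless branching over how edges are distributed.

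\textbf{Step 2: Turing reduction to \wts.} Fix the guess. The non-split vertices, together with $T$, define a PQ-tree-like structure (obtained by refining $T$ with consecutiveness constraints in linear time) describing all admissible orders $\pi_t$. A split vertex $v$ with $c_v$ copies is fine exactly when $N(v)$ occupies at most $c_v$ maximal blocks in $\pi_t$. The cost is therefore a sum, over $v\in S$, of the number of blocks minus one. We encode this by processing the tree bottom-up. At each node, its children must be permuted (arbitrarily at P-nodes, up to reversal at Q-nodes), and each child is summarised by a tuple recording, for every $v\in S$, whether $v$ touches its left boundary, its right boundary, and its interior. Only children intersecting $N(S)$ nontrivially matter, and there are $O(k)$ relevant boundary states. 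Choosing the order of children so as to minimise the number of block boundaries is precisely a weighted tuple sorting instance, and each subtree's optimal costs are stored for the $2^{O(k)}$ boundary states it may expose. This dynamic Turing reduction over $T$ costs linear time times the \wts\ oracle.

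\textbf{Step 3: solving \wts, which is the main obstacle.} Model a \wts\ instance as a graph $H$ whose vertices are the tuples (children) and whose edges join tuples sharing a coordinate for some $v\in S$. An ordering is cheap only if the tuples containing $v$ are nearly consecutive. A solution with budget $k$ therefore means that, after deleting the $O(k)$ tuples at which blocks break, every coordinate's tuples form an interval. Consequently $H$ has $O(k)$ vertex deletion distance to an interval graph and pathwidth $O(k)$. The hard part is proving this structural bound rigorously, including the weights and the left/right-end constraints of tuples, and then designing the dynamic program. I would first run Cao's algorithm~\cite{cao2016linear} to find the modulator $X$, with $|X|=O(k)$, in time $2^{O(k\log k)}$, rejecting the guess if none exists. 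I would then guess the relative order of $X$ in $k^{O(k)}$ ways and sweep the interval model of $H-X$ with a DP whose states record the open coordinates, of which there are $O(k)$. Summing the branching factors gives $2^{O(k\log k)}\cdot n$ time after an $O(n+m)$ preprocessing that computes the reduced instance, which yields \cref{thm:thmsplits}.
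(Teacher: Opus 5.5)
Your proposal has genuine gaps. The most basic one is that Steps~1--2 model crossing-freeness incorrectly. You treat it as a consecutive-ones condition: each unsplit bottom vertex has a consecutive neighbourhood in $\pi_t$, and a split vertex with $c_v$ copies is fine iff $N(v)$ forms at most $c_v$ blocks, with each $v\in S$ charged its number of blocks minus one. Consecutiveness is necessary but far from sufficient. Directly from the definition of a crossing, whenever $b\prec_{\pi_b}b'$, every neighbour of $b$ must precede or equal every neighbour of $b'$ in $\pi_t$. So two neighbourhood intervals may share at most a single endpoint. Two counterexamples show the failure:
\begin{itemize}
\item Take $V_t=\{u,w\}$, $V_b=\{a,b\}$, all four edges, and $T$ a star. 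Both neighbourhoods are one block in every $\pi_t$, so your model reports $0$ splits. Yet this $4$-cycle has no crossing-free drawing and needs one split.
\item In the star case, take a bottom vertex adjacent to three top vertices, each of which has another private bottom neighbour. This forces a split although every neighbourhood is consecutive.
\end{itemize}
Hence the objective of your \wts\ instances is not the split number, and no choice of $S$ repairs this. The number of copies is governed by how the $\pi_b$-intervals of the children's bottom sets abut, not by blocks of $N(v)$ in $\pi_t$. Step~1 fails independently: there is no family of $\mathcal O(1)$-size obstructions. Already for a star $T$ every cycle of $G$ is an obstruction (\cref{prop:pw1-char}), and cycles can be arbitrarily long, so bounded branching cannot produce $k^{\mathcal O(k)}$ candidate sets $S$.

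Step~3 has a further hole even for a correct tuple model. An $\mathcal O(k)$ interval-deletion set bounds pathwidth only together with a clique bound, and there is none a priori. Many children whose whole bottom set is the same vertex $v$ form an arbitrarily large clique in the interaction graph of a yes-instance (in your $H$ as soon as $v\in S$). The paper needs \cref{red:duplicate-single-neighbor} exactly for this. After that rule, each vertex of the split graph lies in at most two non-singleton child bottom sets and one singleton child bottom set, giving $\omega\le 3$ after deleting at most $6k$ children (\cref{lem:interval-yes-instance}). Also, ``sweeping the interval model'' does not say how the children are ordered and interleaved. The paper resolves this via colorful directed linear forests (\cref{lem:wts-forest}), with records storing orders of active colors.

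The idea you are missing is that $S$ never needs to be guessed:
\begin{itemize}
\item The paper's $\opt_X(v_L,v_R)$ counts the bottom copies adjacent to $L(X)$ in a best sub-drawing whose extreme bottom vertices descend from $v_L$ and $v_R$.
\item \cref{lem:consecutive} makes children interact only through shared interval endpoints, so each node yields a $2$-tuple \wts\ instance (\cref{obs:reduction-to-wts}).
\item \cref{lem:boundary-size} bounds the relevant endpoints by $k+3$ after compressing non-boundary vertices into $\lozenge$ (\cref{lem:compute-copt}).
\end{itemize}
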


While the preceding works on two-layer splitting target $k$ as the most natural parameterization for the problem and Theorem~\ref{thm:thmsplits} follows that line, the tree-constrained setting also allows for a second perspective which can be employed even if the number $k$ of required splits is large. In particular, the maximum degree $\Delta$ in $T$ can be assumed to be relatively low, among others, in the partonomy tree setting (see, e.g., Figure~\ref{fig:large} in Section~\ref{sec:implementation}).
This raises the question of whether $\Delta$ can be used as an alternative parameter toward tractability. As our \textbf{second contribution}, we provide a positive answer to this:

\begin{restatable}{theorem}{thmdegree}
\label{thm:thmdegree}
\prob\ is fixed-parameter tractable with respect to the maximum degree $\Delta$ of $T$, and in particular can be solved in time
$\mathcal{O}\bigl(2^\Delta \cdot \Delta^3 \cdot k^3 \cdot n + m\bigr)$.
\end{restatable}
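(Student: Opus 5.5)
We would prove \cref{thm:thmdegree} with a bottom-up dynamic program over $T$. At each node $v$ it records, for each of a bounded number of \emph{boundary states}, the minimum number of splits needed to draw the subgraph induced by the leaves of $T_v$ and their neighbors. Since all leaves of $T_v$ are consecutive in $\pi_t$, the drawing restricted to $T_v$ occupies an interval of the top layer. The rest of the drawing interacts with this interval only through the bottom vertices (or copies) that are adjacent both inside and outside $T_v$. In a crossing-free two-layer drawing, at most two such copies can ``escape'' the interval: the leftmost and the rightmost bottom copies of the block. A copy whose edges go both into the block and outside it must be positioned at the extreme left or extreme right; every other copy adjacent to $V(T_v)$ lies strictly inside, and all its edges go to leaves of $T_v$.

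\textbf{State.} The DP state at $v$ is therefore a pair $(\ell,r)$, where $\ell$ (resp.\ $r$) is either ``none'' or a bottom vertex $b$ whose left (resp.\ right) extremal copy is shared with the outside. We additionally record whether that copy must still connect to leaves outside $T_v$, or merely may do so. Naively the number of candidate vertices is unbounded. To handle this we observe that a vertex $b$ with neighbors both inside and outside $T_v$ either occupies one of the two boundary slots or pays an extra split at $v$. This lets us charge splits locally and keep only $O(k)$ relevant candidates per side; this is the source of the $k^3$ factor. A crossing-free drawing with $n$ leaves has $O(n)$ edges after splitting, so the preprocessing costs $O(m)$ and everything afterwards is linear in $n$.

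\textbf{Combining children.} At an internal node $v$ with children $c_1,\dots,c_d$ ($d\le\Delta$), we must choose a permutation of the children and, for each adjacent pair $c_i,c_{i+1}$ in that order, decide whether the right boundary vertex of $c_i$ and the left boundary vertex of $c_{i+1}$ merge into one copy. Merging is only useful if they are the same bottom vertex. A bottom vertex $b$ adjacent to several children costs (number of maximal consecutive runs of children containing $b$, glued through boundary slots) minus one splits, plus the splits from inside. To avoid enumerating $d!$ orders, we run a Held--Karp style DP over subsets $S\subseteq\{c_1,\dots,c_d\}$, indexed additionally by the last child placed and its right-boundary state. Processing a new child's left state against the current right state costs $O(\Delta\cdot k)$ per transition, giving $2^\Delta\cdot\Delta^3\cdot k^{O(1)}$ per node, with total size bounded by $n$.

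\textbf{Main obstacle.} The hardest step is the correctness of the cost accounting across children for bottom vertices shared by many children. A vertex may also need to continue through an entire middle child (adjacent to leaves of $c_i$ and $c_{i+2}$ and bridging $c_{i+1}$); this is only possible if it is simultaneously the left and right boundary of $c_{i+1}$, i.e.\ $c_{i+1}$ is drawn entirely by edges to $b$ in a caterpillar-like fashion. We would handle this with an exchange argument. First we show that some optimal solution never lets a copy span more than its two boundary slots in a nontrivial way. Then we show that the number of splits of $b$ equals the number of its copies minus one, and that this decomposes additively over the tree nodes as the number of ``breaks'' created when merging children. Proving this decomposition lemma, and the exchange argument bounding relevant boundary candidates to $O(k)$ so the state space stays $O(k)$ per side, is what we expect to require the most care. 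The remaining steps are routine bookkeeping of the subset DP and the running-time sum over all nodes.
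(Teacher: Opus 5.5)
Your plan is in substance the paper's proof. It is a bottom-up DP over $T$ whose state at a node $X$ is the pair of extreme bottom endpoints, compressed to $\partial B_G(X)\cup\{\lozenge\}$, with your ``none'' playing the role of $\lozenge$. It uses the bound $|\partial B_G(X)|\le k+2$ in yes-instances (\cref{lem:boundary-size}) and a Held--Karp subset DP over the at most $\Delta$ children (\cref{thm:wts-fpt-d}).

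The step you expect to be hardest is actually short and needs neither an exchange argument nor a must/may flag:
\begin{itemize}
\item Count bottom vertices rather than splits. This is the paper's $\opt_X$, which differs from your split count by the constant $|B_G(X)|$.
\item Award the merge bonus automatically whenever consecutive endpoints coincide.
\item Use \cref{lem:consecutive}: the children's bottom intervals meet only in shared endpoints, so a suitable child order (\cref{lem:helper}) matches the merge count of any given drawing.
\end{itemize}
Your ``bridging'' child is then just a child in state $(b,b)$ with a single bottom vertex, i.e.\ $\opt_{X_i}(b,b)=1$. Also, the $k+2$ bound is a global counting argument used as a rejection test: every boundary vertex not occupying one of the two extreme slots must be split somewhere. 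It is not a split charged at $v$.

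What your sketch does not yet establish is the stated bound $\mathcal O(2^\Delta\Delta^3k^3n+m)$, for two reasons.

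First, you need all boundary sets in $\mathcal O(|E|+k|V_t|)$ time. Your remark that a solution has $\mathcal O(n)$ edges does not give this, because computing $B_G(X)$ naively for every node is quadratic on deep trees. The paper merges only the children's boundaries, computes $d_X(v)=\sum_i d_{X_i}(v)$ for these candidates, and keeps $v$ iff $d_X(v)<d_T(v)$ (\cref{lem:compute-boundaries}).

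Second, implement the transitions directly as you describe them: the state is (subset, last child, right state), it is extended by a new child together with its left and right state, and this is repeated for each of the $k+3$ left endpoints of $X$. That costs $2^\Delta\Delta^2k^4$ per node, which exceeds the claim when $k\gg\Delta$; your $k^{\mathcal O(1)}$ hides exactly this. The fix is the decoupling used in \cref{thm:wts-fpt-d}. For the current subset and each possible left symbol $a$ of the child being appended, precompute $\min_{x'}\bigl(R[\cdot,x']-\langle x'=a\neq\lozenge\rangle\bigr)$. This is either the minimum of the previous row or its entry at $a$ minus one. Each of the $\mathcal O(k^2)$ tuples of the new child is then handled in constant time.
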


We remark that, in contrast to all fixed-parameter algorithms mentioned up to this point, the single-exponential parameter dependence achieved in Theorem~\ref{thm:thmdegree} is optimal under the Exponential Time Hypothesis (\ETH)~\cite{IMPAGLIAZZO2001512}, as follows from a known reduction from \textsc{Hamiltonian Path}~\cite{2layersplitting} (see Lemma~\ref{lem:eth-lower-probstar}).
The algorithm also leverages the aforementioned connection to \wtsLong, but employs Held--Karp-style dynamic programming over subsets~\cite{HeldK61} in the last step.

As our \textbf{third contribution}, we show that Theorem~\ref{thm:thmdegree} can be leveraged to improve the state of the art for \probStar, the classical unconstrained two-layer vertex splitting problem considered in previous works. In particular, we adapt Baumann, Pfretzschner, and Rutter's linear kernel for splitting to pathwidth~1~\cite{povs} to obtain a linear kernel for the two-layer vertex splitting problem (both with respect to $k$). By applying Theorem~\ref{thm:thmdegree} on this kernel, we obtain an \ETH-tight (cf.\ \cref{lem:eth-lower-probstar}) single-exponential algorithm:

\begin{restatable}{theorem}{thmclassic}
\label{thm:thmclassic}
\probStar{} admits an $\mathcal{O}\bigl(2^{17k} \cdot k^7 + (n+m)^2\bigr)$-time algorithm.
\end{restatable}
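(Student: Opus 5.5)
The plan is to combine a linear kernel for \probStar{} with Theorem~\ref{thm:thmdegree}. When $T$ is a star whose leaves are $V_t$, \probStar{} is exactly \prob. However, the maximum degree $\Delta$ of a star equals $|V_t|$, so Theorem~\ref{thm:thmdegree} gives a single-exponential running time only once $|V_t|$ is bounded linearly in $k$. The proof therefore has three steps, carried out in order.

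First, we kernelize in time $\mathcal{O}((n+m)^2)$. We would adapt the reduction rules of Baumann, Pfretzschner, and Rutter~\cite{povs} for splitting to pathwidth~1 to the one-sided setting, where only vertices of $V_b$ may be split. This needs three kinds of rules.
\begin{enumerate}
\item Rules that remove isolated vertices and connected components that already admit a crossing-free two-layer drawing, i.e., caterpillar components.
\item Rules that shorten long caterpillar-like paths and shrink large sets of pendant leaves hanging off a vertex, each to a constant number of representatives. These rules are the core of the argument, since they are what make the kernel linear.
\item A rule that answers ``no'' when the instance is too large compared to $k$.
\end{enumerate}
The goal is an equivalent instance with $|V_t| \le 17k$ and polynomially many edges, and in particular $|V_b| = \mathcal{O}(k)$ after cleanup. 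The constant $17$ is what we would aim to extract from the counting.

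For each rule we must check that it is safe in the one-sided setting. The subtle case is the pendant-leaf rule. Pendant vertices in $V_t$ behave differently from pendant vertices in $V_b$, because a split vertex of $V_b$ can carry several neighbours into different positions along the top layer. Safety should follow from an exchange argument: any solution can be modified so that the removed leaves, or the removed inner part of a path, get reattached next to a kept representative without creating crossings and without using additional splits.

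For the size bound, we would use a charging argument. In a yes-instance, consider the final drawing. Each of the at most $k$ splits, and each ``spine'' vertex of a caterpillar, can be responsible for only a constant number of irreducible structures. Summing these contributions should give $|V_t| \le 17k$. This counting step is the main obstacle: it requires carefully redoing the analysis of~\cite{povs}, which bounds the whole vertex set rather than one side, under the one-sided restriction. I would first try to reuse their charging scheme verbatim and only repair the cases that involve splits in $V_t$, which are forbidden here.

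Finally, we apply Theorem~\ref{thm:thmdegree} to the kernel with $T$ a star on $V_t$ and $\Delta \le 17k$. Since the kernel has $n' = \mathcal{O}(k)$ vertices and $m' = \mathcal{O}(k^2)$ edges, this costs $\mathcal{O}(2^{17k}\cdot (17k)^3\cdot k^3\cdot \mathcal{O}(k) + k^2) = \mathcal{O}(2^{17k} k^7)$. Adding the $\mathcal{O}((n+m)^2)$ kernelization time gives the claimed bound.
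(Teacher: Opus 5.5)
Your top-level plan coincides with the paper's: compute an equivalent instance with $|V_t'|,|V_b'|\le 17k$ in $\mathcal{O}((n+m)^2)$ time, then run \cref{thm:thmdegree} with $T$ the star on $V_t'$. Your final running-time arithmetic is correct. The gap is the kernel itself (\cref{thm:linear_kernel} in the paper), which carries essentially all of the content, and your proposal does not supply it. The rules are described only generically, safety is deferred to an unspecified exchange argument, and the size bound is explicitly left open, so $17$ is a target rather than a derived constant. The charging you sketch also cannot close as stated. Charging irreducible structures to the splits \emph{and to spine vertices of caterpillars} in a final drawing gives no bound in terms of $k$, because the number of spine vertices is exactly what is not yet bounded.

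Two ingredients are missing.

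\textbf{An anchor for the counting that does not refer to a solution.} The paper rejects if $\mu(G)=\sum_v\max(d_G^*(v)-2,0)>2k$ (Reduction Rule~9 of~\cite{povs}). This gives $|V_3|\le 2k$ and $\sum_{v\in V_3}d_G^*(v)\le 6k$, and all vertices are then counted by their distance to $V_3$.

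\textbf{Rules that clear out everything at distance $\ge 2$ from $V_3$.} There are three.
\begin{itemize}
\item Pseudo-caterpillar components are deleted at the cost of one split each. They contribute nothing to $\mu$, so without this rule their number is unbounded. Their cycle is even and therefore contains a $V_b$ vertex, so no rejection case arises.
\item \emph{All} pendants of vertices at distance $\ge 2$ from $V_3$ are deleted. Keeping one pendant per vertex, as after Rule~1, is not enough on its own. After this step only bare paths of degree-$2$ vertices remain there.
\item These paths are contracted by replacing an induced stretch $v_0v_1v_2v_3v_4$ with sides $V_tV_bV_tV_bV_t$ by $v_0bv_4$ with $b\in V_b$. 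The reverse direction of the safety proof trades a split of $b$ for a split of $v_1$.
\end{itemize}
This parity/bipartiteness issue is where the one-sided setting genuinely forces deviations from~\cite{povs}. The behaviour of pendant leaves is not the issue, since Rule~1 is symmetric in the two sides.

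With these rules, \cref{lem:kernel-vertex-bound} obtains $|V_t|\le 2k+6k+9k$ from three contributions:
\begin{itemize}
\item $2k$ for $V_3\cap V_t$ together with pendants attached to $V_3$;
\item $6k$ for non-pendant neighbours of $V_3$ together with the pendants hanging off them;
\item $9k$ for at most $3k$ degree-$2$ path pieces, each with at most three top vertices.
\end{itemize}
The symmetric count gives $|V_b|\le 17k$. That is what you need for $|V(G')|=\mathcal{O}(k)$ and hence the $k^7$ factor.
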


Finally, while the focus of our work is the complexity-theoretic foundations of the problem, as an \textbf{additional contribution} we show that the algorithm underlying Theorem~\ref{thm:thmdegree} is efficient enough to be interesting in practice. In particular, we provide an implementation and experiments~\cite{supplement} (\cref{sec:implementation}) showing that even without careful tuning, it can efficiently handle real-world data sets spanning thousands of elements.

\section{Preliminaries}
We use $[i]$ as shorthand for the set $\{1,\dots,i\}$ and fix the minimum over integers in an empty set $Z$ as $\infty$ (typically, the latter will indicate an invalid choice of $Z$). Furthermore, we use the convention that $\infty+a=\infty$ for every integer $a$.
We assume familiarity with basic graph terminology~\cite{Diestelbook} and use $V(G)$ and $E(G)$ to denote the vertex and edge sets of a graph $G=(V,E)$, respectively. A \emph{full subtree} $T'$ of a rooted tree $T$ is the subtree rooted at some node $v$ of $T$ which contains all nodes whose unique path to the root passes through $v$. We identify each node of $T$ with the full subtree it roots, and use the same symbol for both. We use $\omega(G)$ to denote the size of a maximum clique in $G$.

A \emph{path decomposition} of a graph $G=(V,E)$ is a sequence
$\mathcal{X}=(X_1,\dots,X_q)$ of subsets of $V$ such that:
\begin{enumerate}
    \item $\bigcup_{t=1}^q X_t = V$,
    \item for every edge $uv\in E$, there exists $t\in [q]$ with $\{u,v\}\subseteq X_t$,
    \item for every vertex $v \in V$, the set $\{\, t \in [q] \mid v \in X_t \,\}$ forms an interval of $[q]$.
\end{enumerate}
A path decomposition is \emph{nice} if it also satisfies the following two conditions:
\begin{enumerate}
  \setcounter{enumi}{3}
\item $X_1=X_q=\emptyset$,
\item for each $2\leq i\leq q$, $X_i$ is obtained by either \emph{introducing} or \emph{forgetting} a vertex from $X_{i-1}$.
\end{enumerate}
The \emph{width} of $\mathcal{X}$ is $\max_{t \in [q]} |X_t|-1$, and the \emph{pathwidth} of $G$,
denoted $\operatorname{pw}(G)$, is the minimum width of a path decomposition of $G$. It is well-known that a path decomposition can be transformed into a nice one of the same width in linear time.

\subsection{Problem-Specific Definitions}
Let $G$ be a bipartite graph with bipartition $V_b(G)$ and $V_t(G)$; when the graph is clear from context, we use $V_b$ and $V_t$ for brevity. A two-layer drawing of $G$ is a straight-line drawing which places $V_t$ and $V_b$ on two parallel horizontal lines; combinatorially, it is equivalent to two total orders $\pi_t$ on the ``top'' vertices $V_t$ and $\pi_b$ on the ``bottom'' vertices $V_b$.
We say that two edges $us,vt\in E$ \emph{cross} if
$u\prec_{\pi_t} v$ and $t\prec_{\pi_b} s$.
The drawing is \emph{crossing-free} if no two edges cross.
For a vertex $v\in V(G)$, $N_G(v)$ denotes its neighborhood in $G$; for a set $S\subseteq V(G)$, set $N_G(S)\coloneqq \bigcup_{v\in S} N_G(v)$.

\emph{Splitting a vertex $v$} in a bipartite graph $G$ means producing a new bipartite graph $G'$ by replacing $v$ with two non-adjacent vertices $v_1,v_2$ so that $N_{G'}(v_1), N_{G'}(v_2)$ forms a partition\footnote{We remark that in the literature, vertex splits are sometimes allowed to produce vertices with non-disjoint neighborhoods (see e.g.~\cite{cluster_editing_vertex_splitting,splitting_firbas_sorge})---however, disjoint neighborhoods can be assumed w.l.o.g.\ in the context of crossing minimization.} of $N_G(v)$.
We call $v$ the \emph{ancestor}  of $v_1,v_2$ and $v_1,v_2$ \emph{descendants} of $v$; for a sequence of splits, ancestors and descendants are considered in the reflexive and transitive sense.
Let $T$ be a rooted tree with leaf set $V_t$. For any full subtree $X$ of $T$, let $L(X)\subseteq V_t$ be the leaves of~$X$ and set
\[
B_G(X)\coloneqq N_G(L(X))\subseteq V_b.
\]
That is, $B_G(X)$ is the set of bottom vertices adjacent to leaves of $X$. See~\cref{fig:basic_notation}.

\begin{figure}[!b]
    \centering
    \includegraphics[page=3]{figures.pdf}
    \caption{A solution of an instance of \prob. Full subtrees $X_1, X_2$ of $T$ and their leaves $L(\cdot)$ and bottom sets $B(\cdot)$ (with subscripts omitted for brevity) are marked in blue and red, respectively. Observe that the bottom sets appear contiguously along $\pi_b$ and intersect in at most one vertex---this key property of solutions is formalized in \cref{lem:consecutive}.}
    \label{fig:basic_notation}
\end{figure}

Given a rooted tree $T$, we recursively define for each full subtree $X$ of $T$ a set of \emph{admissible top orders} $\mathrm{Adm}_t(X)$ as follows: if $X$ is a leaf, then $\mathrm{Adm}_t(X)=\{(X)\}$; if $X$ has children $X_1,\dots,X_d$, then $\mathrm{Adm}_t(X)$ consists of all total orders on $L(X)$ which can be obtained by first choosing a permutation $(p_1,\dots,p_d)$ of $[d]$ and then, in that order, writing an order from $\mathrm{Adm}_t(X_{p_1})$ followed by an order from $\mathrm{Adm}_t(X_{p_2})$, and so on.

\problembox{\problong~(\prob)}
{A bipartite graph $G=(V_t\cup V_b,E)$, a rooted tree $T$ with leaves $V_t$, an integer $k\ge 0$.}
{Compute (1) a sequence $\zeta$ of at most $k$ vertex splits applied to $V_b(G)$ producing a graph $G'$, (2) a total order $\pi_t\in \mathrm{Adm}_t(T)$, and (3) a total order $\pi_b$ on $V_b(G')$, such that the two-layer drawing of $G'$ with $(\pi_t,\pi_b)$ is crossing-free---or, correctly determine that this is not possible.}

\looseness=-1
We call a triple $(\zeta, \pi_t, \pi_b)$ witnessing that an instance $(G,T,k)$ is a YES-instance a \emph{solution}.
Throughout the paper, we assume that $G$ has $n$ vertices and $m$ edges, contains no isolated vertices, none of the splits producing $G'$ create isolated vertices, and that $T$ has no internal nodes with a single child. This is without loss of generality, as removing an isolated vertex from $G$ preserves the (non-)existence of a solution (regardless of how it interacts with $T$) and the same also holds for contracting the edge connecting a node of $T$ to its single child in $T$.

Crucially, two full subtrees with disjoint sets of leaves have almost disjoint bottoms in $G'$:

\begin{lemma}[Properties of Solutions]

\label{lem:consecutive}
Let $(G,T,k)$ be an instance of \prob with a solution $(\zeta, \pi_t, \pi_b)$ which produces a graph $G'$.
Then:
\begin{enumerate}
    \item For every full subtree $X$ of $T$, the set $B_{G'}(X)$ appears contiguously along $\pi_b$.
    \item If $X_1, X_2$ are full subtrees of $T$ with $L(X_1)\cap L(X_2)=\emptyset$, and $I_i$ denotes the interval of $\pi_b$ spanned by $B_{G'}(X_i)$, then $I_1\cap I_2$ has size at most $1$, and any common vertex is an endpoint of both $I_1$ and $I_2$.
\end{enumerate}
\end{lemma}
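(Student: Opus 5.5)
The plan is to argue directly from crossing-freeness, using only two facts. First, since $\pi_t\in\mathrm{Adm}_t(T)$, the leaf set $L(X)$ of every full subtree is a contiguous block of $\pi_t$. Second, by our standing assumption, no vertex of $G'$ is isolated. I will first isolate the basic tool. Let $u\prec_{\pi_t} v$ be top vertices, $us,vt\in E(G')$, and suppose $t\prec_{\pi_b}s$. Then $us$ and $vt$ cross. Equivalently, in a crossing-free drawing, if $u\prec_{\pi_t}v$ then every neighbor of $u$ is $\preceq_{\pi_b}$ every neighbor of $v$.

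\textbf{Part~1.} Let $s,t\in B_{G'}(X)$ with $s\prec_{\pi_b} w\prec_{\pi_b} t$. Pick $u,v\in L(X)$ with $us,vt\in E(G')$. Since $w$ is not isolated, it has some neighbor $x\in V_t$. Suppose $x\notin L(X)$. Because $L(X)$ is contiguous in $\pi_t$, $x$ lies either before all of $L(X)$ or after all of it.
\begin{itemize}
\item If $x$ lies before all of $L(X)$, then $x\prec_{\pi_t} u$. By the tool, $w\preceq_{\pi_b} s$, contradicting $s\prec_{\pi_b} w$.
\item If $x$ lies after all of $L(X)$, then $v\prec_{\pi_t} x$. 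By the tool, $t\preceq_{\pi_b} w$, contradicting $w\prec_{\pi_b} t$.
\end{itemize}
Hence $x\in L(X)$, so $w\in B_{G'}(X)$, which proves contiguity.

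\textbf{Part~2.} Since $L(X_1)$ and $L(X_2)$ are disjoint contiguous blocks of $\pi_t$, we may assume by symmetry that every leaf of $X_1$ precedes every leaf of $X_2$. Let $a\in B_{G'}(X_1)$ and $b\in B_{G'}(X_2)$, with witnesses $u\in L(X_1)$, $v\in L(X_2)$, $ua,vb\in E(G')$. Then $u\prec_{\pi_t}v$, so the tool gives $a\preceq_{\pi_b} b$. Thus every element of $B_{G'}(X_1)$ is $\preceq_{\pi_b}$ every element of $B_{G'}(X_2)$. By Part~1 these sets are exactly the intervals $I_1$ and $I_2$. It follows that $\max I_1\preceq \min I_2$. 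Consequently $I_1\cap I_2\subseteq\{\max I_1\}=\{\min I_2\}$, so the intersection has at most one vertex, and that vertex is the right endpoint of $I_1$ and the left endpoint of $I_2$.

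\textbf{Main obstacle.} The step needing the most care is Part~1. It uses both that $w$ has a neighbor, which relies on the no-isolated-vertices assumption for $G'$, and that $L(X)$ is contiguous in $\pi_t$. The case split on which side of $L(X)$ the neighbor $x$ lies is what makes the argument go through. Everything else is a direct application of the crossing definition.
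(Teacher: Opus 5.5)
Your proof is correct and is essentially the paper's. Part~1 is the same crossing contradiction, using a neighbor of the intermediate bottom vertex and the contiguity of $L(X)$ in $\pi_t$; you just handle both sides explicitly instead of by symmetry. In Part~2, your fact that $B_{G'}(X_1)$ lies weakly before $B_{G'}(X_2)$ packages the paper's two crossing contradictions into one step.

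One cosmetic fix: the equality $\{\max I_1\}=\{\min I_2\}$ need not hold when the intervals are disjoint. Instead, note that any common vertex $z$ satisfies $\min I_2\preceq z\preceq \max I_1\preceq \min I_2$.
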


\begin{proof}
By the definition of $\mathrm{Adm}_t(T)$, the leaves $L(X)$ of any full subtree $X$ must appear contiguously in $\pi_t$.

\proofsubparagraph*{Statement 1.}
Assume for contradiction that $B_{G'}(X)$ is not contiguous in $\pi_b$. Then there exist $s, t \in B_{G'}(X)$ and $m \in V_b(G') \setminus B_{G'}(X)$ such that $s \prec_{\pi_b} m \prec_{\pi_b} t$.

Because $m$ is not isolated, there exists $vm \in E(G')$ for some $v \in V_t$. Since $m \notin B_{G'}(X)$, it follows that $v \notin L(X)$. Because $L(X)$ is contiguous in $\pi_t$, $v$ must lie entirely before or entirely after $L(X)$. Assume w.l.o.g.\ that $v \prec_{\pi_t} L(X)$.

Since $s \in B_{G'}(X)$, we can choose $u \in L(X)$ such that $us \in E(G')$. We have $v \prec_{\pi_t} u$ and $s \prec_{\pi_b} m$. Thus the edges $vm$ and $us$ cross, contradicting that the drawing is crossing-free. Therefore, $B_{G'}(X)$ is contiguous in~$\pi_b$.

\proofsubparagraph*{Statement 2.}
By Statement 1, the intervals $I_1$ and $I_2$ contain exactly the vertices of $B_{G'}(X_1)$ and $B_{G'}(X_2)$. Since $L(X_1) \cap L(X_2) = \emptyset$, $L(X_1)$ and $L(X_2)$ are disjoint contiguous blocks in $\pi_t$. Assume w.l.o.g.\ that $L(X_1) \prec_{\pi_t} L(X_2)$.

First, towards a contradiction, assume  that $|I_1 \cap I_2| \ge 2$. Then there exist $s, t \in I_1 \cap I_2$ such that $s \prec_{\pi_b} t$. Because $t \in I_1$ and $s \in I_2$, we can choose $u_1 \in L(X_1)$ and $u_2 \in L(X_2)$ such that $u_1t, u_2s \in E(G')$. Since $u_1 \prec_{\pi_t} u_2$ and $s \prec_{\pi_b} t$, the edges cross, a contradiction. Thus, $|I_1 \cap I_2| \le 1$.

Second, suppose $I_1 \cap I_2 = \{v\}$. Towards a contradiction, assume that $v$ is not the rightmost element of $I_1$ with respect to $\pi_b$. Then there exists $t \in I_1$ such that $v \prec_{\pi_b} t$. Choose $u_1 \in L(X_1)$ and $u_2 \in L(X_2)$ such that $u_1t, u_2v \in E(G')$. Since $u_1 \prec_{\pi_t} u_2$ and $v \prec_{\pi_b} t$, the edges cross, a contradiction. Therefore, $v$ must be the rightmost endpoint of $I_1$.

By a symmetric argument, $v$ must be the leftmost endpoint of $I_2$.
\end{proof}

Finally, the reduction of Ahmed et al.~\cite{2layersplitting} together with the \ETH-hardness of constant-degree \textsc{Hamiltonian Path}~\cite{BjorklundHKK18} immediately yields the following lower bounds.

\begin{lemma}[ETH lower bounds inherited from {\normalfont\textsc{Hamiltonian Path}}]

\label{lem:eth-lower-probstar}
Unless the \ETH fails, \probStar admits neither a $2^{o(|V_t|)}\cdot (n+m)^{\mathcal O(1)}$-time algorithm nor a $2^{o(k)}\cdot (n+m)^{\mathcal O(1)}$-time algorithm.
\end{lemma}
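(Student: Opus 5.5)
We plan to prove \cref{lem:eth-lower-probstar} by composing the reduction of Ahmed et al.~\cite{2layersplitting} from \textsc{Hamiltonian Path} to \probStar{} with the \ETH-hardness of \textsc{Hamiltonian Path} on graphs of bounded maximum degree. The known lower bound is as follows. Under \ETH, \textsc{Hamiltonian Path} admits no $2^{o(N)}\cdot N^{\mathcal O(1)}$-time algorithm on $N$-vertex graphs, and this holds even when the maximum degree is bounded by a constant~\cite{BjorklundHKK18}. (It can also be derived from the standard sparse 3-SAT $\to$ Hamiltonicity chain together with the sparsification lemma.) In particular, such graphs have $M=\mathcal O(N)$ edges. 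So it suffices to check that the reduction maps an $N$-vertex, $M$-edge instance to a polynomial-size \probStar{} instance with $|V_t|=\mathcal O(N+M)$ and $k=\mathcal O(N+M)$.

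First we recall the reduction. For an input graph $H$, the instance $G$ places one top vertex per vertex of $H$ and one bottom vertex per edge of $H$, so $G$ is essentially the subdivision/incidence graph of $H$. Depending on the exact variant, it may also attach pendant gadgets of constant size per vertex. The budget $k$ is then chosen so that the splits ``pay'' for all edges except those on a Hamiltonian path. Intuitively, a crossing-free two-layer drawing of a split graph is a caterpillar forest. An edge-vertex that is not split links two consecutive top vertices, so the unsplit edge-vertices must form a path through $V_t$, and each remaining edge-vertex costs exactly one split. Hence $H$ has a Hamiltonian path if and only if $G$ can be made two-layer planar with $k = M-(N-1)$ splits, up to the exact additive constants of their construction.

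We will not re-prove the correctness of the reduction; we take it from~\cite{2layersplitting}. The remaining work is to read off the parameters. We have $|V_t|=\mathcal O(N)$, $k\le M=\mathcal O(N)$ by bounded degree, and $n+m=\mathcal O(N)$.

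Finally we conclude by contraposition. Suppose there were a $2^{o(|V_t|)}(n+m)^{\mathcal O(1)}$-time algorithm, or a $2^{o(k)}(n+m)^{\mathcal O(1)}$-time algorithm, for \probStar. Running the reduction first and then that algorithm would decide bounded-degree \textsc{Hamiltonian Path} in time $2^{o(N)}N^{\mathcal O(1)}$, contradicting \ETH.

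The only step requiring real care is the bounded-degree requirement. Without it, $k$ could be as large as $\Theta(N^2)$, and the $2^{o(k)}$ bound would not follow. This is exactly why we invoke the constant-degree hardness of~\cite{BjorklundHKK18} rather than general \textsc{Hamiltonian Path}. We also need the reduction's parameters to be linear in $N+M$, which we will confirm from its construction.
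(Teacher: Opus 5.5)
Your proposal is correct and is essentially the paper's own proof. Both compose the reduction of Ahmed et al.\ (which gives $|V_t|=N$, $|V_b|=M$, $k=M-N+1$) with the \ETH-hardness of constant-degree \textsc{Hamiltonian Path}, and both use the bounded-degree assumption, which you rightly flag as the essential point, to get $M=\mathcal O(N)$ and hence $|V_t|,k,n+m=\mathcal O(N)$.
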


\begin{proof}
\textsc{Hamiltonian Path} on $n$-vertex graphs of constant maximum degree has no $2^{o(n)}$-time algorithm unless \ETH fails~\cite{BjorklundHKK18}. The reduction of Ahmed et al.~\cite{2layersplitting} maps an instance $H=(V,E)$ of \textsc{Hamiltonian Path} with $|V|=n$ and $|E|=m$ to an equivalent \probStar instance $(G=(V_t\cup V_b,E'),k)$ with $|V_t|=n$, $|V_b|=m$, and $k=m-n+1$. On constant-degree instances, $m=\mathcal O(n)$, so $k=\mathcal O(n)$. Either claimed algorithm for \probStar would therefore give a $2^{o(n)}$-time algorithm for constant-degree \textsc{Hamiltonian Path}, a contradiction.
\end{proof}

\section{A Recurrence-Based Perspective on Vertex Split Minimization}

As an initial step towards obtaining both our algorithms for \prob, we will reformulate the decisions that need to be made at each node of $T$ into an instance of a problem we call \wtsLong. But before we can establish this correspondence, we first need to define an ``alternative objective'' for \prob.

\subsection[The opt Function]{The {\boldmath$\opt$} Function}
Both algorithms proceed bottom-up over $T$. For each full subtree $X$ of $T$ and $v_L, v_R \in B_G(X)$, the function $\opt_X(v_L, v_R)$ defined below equals the minimum number of bottom vertices adjacent to the leaves of $X$ in the resulting split graph, taken over all valid solutions whose leftmost such bottom vertex is a descendant of $v_L$ and whose rightmost is a descendant of $v_R$ ($\infty$ if no such solution exists). Since each split adds one bottom vertex, $\min_{v_L, v_R \in B_G(T)} \opt_T(v_L,v_R) - |B_G(T)|$ then equals the minimum number of splits. At an internal node $X$, computing $\opt_X$ amounts to choosing a solution for each child together with an ordering of the children, so as to maximize the number of shared endpoint vertices between adjacent children's bottom intervals (cf.\ \cref{lem:consecutive}).

\begin{definition}[Definition of $\opt$]\label{def:opt_star}
    Fix an instance $(G,T,k)$ of \prob.
    For each full subtree $X$ of $T$, define a mapping $\opt_X:\ B_G(X)\times B_G(X)\ \to\ \mathbb{N}\cup\{\infty\}$, specified recursively as follows.
    If $X$ is a leaf~$r$, we set:
    \begin{equation*}
    \opt_X(v_L, v_R) \coloneqq
    \begin{cases}
    |N_G(r)|, & \text{if } v_L \neq v_R,\\
    1, & \text{if } v_L = v_R \text{ and } |N_G(r)|=1,\\
    \infty, & \text{if } v_L = v_R \text{ and } |N_G(r)|>1.
    \end{cases}
    \end{equation*}
    If $X$ has children $X_1,\dots,X_d$, we set:
    \begin{equation*}
    \begin{aligned}
    \opt_X(v_L, v_R) \coloneqq
    \min \big\{\, & \left( \sum_{i=1}^{d} \opt_{X_i}\!\big(v_L^{(i)}, v_R^{(i)}\big) \right)
    \;-\; \bigl|\{\,j \in \{1,\dots,d-1\} : v_R^{(p_j)} = v_L^{(p_{j+1})}\,\}\bigr| \ : \\
    & (p_1,\dots,p_d)\ \text{is a permutation of } [d], \\
    & (v_L^{(i)}, v_R^{(i)}) \in B_G(X_i)\times B_G(X_i)\ \text{for all } i \in [d], \\
    & v_L^{(p_1)} = v_L,\ v_R^{(p_d)} = v_R \,\big\}.
    \end{aligned}
    \end{equation*}
\end{definition}

Before directly linking $\opt$ to \prob, we first introduce an intermediate measure $\opt^\star$ that formalizes the intended meaning of $\opt$ given above.

\begin{longdefinition}[Definition of $\opt^\star$]\label{def:opt}
    Let $(G,T,k)$ be an instance, let $X$ be a full subtree of $T$, and let $v_L,v_R\in B_G(X)$.
    We set $\opt^\star_X(v_L,v_R)$ to be the minimum of $|B_{H_X}(X)|$ over all graphs $H_X$ obtained from $G_X \coloneqq G[L(X)\cup B_G(X)]$ by a sequence of vertex splits applied only to descendants of vertices in $B_G(X)$, and over all total orders $\pi_t\in\mathrm{Adm}_t(X)$ and $\pi_b$ on $B_{H_X}(X)$, such that the two-layer drawing of $H_X$ with $(\pi_t,\pi_b)$ is crossing-free and $\pi_b$ begins with a descendant of $v_L$ and ends with a descendant of $v_R$; if no such graph and orders exist, then $\opt^\star_X(v_L,v_R)\coloneqq\infty$.
\end{longdefinition}

We now directly link $\opt$ to \prob\ by showing that it is equal to $\opt^\star$.

\begin{longlemma}[Technical Lemma]\label{lem:helper}
Let $\mathcal{I} = (I_1, \dots, I_m)$ be a finite sequence of nonempty intervals of $[n]$, where
\begin{enumerate}
    \item[(i)] $\bigcup_{i=1}^m I_i = [n]$, and
    \item[(ii)] for any distinct $a, b \in \{1, \dots, m\}$, $|I_a \cap I_b| \le 1$, and furthermore, if $I_a \cap I_b = \{x\}$, then $x = \max(I_a) = \min(I_b)$ or $x = \max(I_b) = \min(I_a)$.
\end{enumerate}
Then there is a permutation $\sigma \in S_m$ such that $\min(I_{\sigma(1)})=1$, $\max(I_{\sigma(m)})=n$, and
\begin{equation*}
n \ge
\sum_{i=1}^m |I_{\sigma(i)}|
-
\bigl|\{\,1 \le i < m : \max(I_{\sigma(i)}) = \min(I_{\sigma(i+1)})\,\}\bigr|.
\end{equation*}
Consequently,
\begin{equation*}
n \ge \min_{\sigma \in S_m} \left( \sum_{i=1}^m |I_{\sigma(i)}| - \bigl|\{\,1 \le i < m : \max(I_{\sigma(i)}) = \min(I_{\sigma(i+1)})\,\}\bigr| \right),
\end{equation*}
where $S_m$ is the symmetric group on $m$ elements.
\end{longlemma}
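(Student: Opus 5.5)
The plan is to take for $\sigma$ the lexicographic order of the intervals by $(\min,\max)$. We then verify the endpoint conditions and the inequality by a charging argument over the elements $x\in[n]$. Ties between identical intervals (several copies of a singleton $\{x\}$) are broken arbitrarily. The final claim is immediate from the first, since the minimum over $S_m$ is at most the value at this particular $\sigma$.

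\textbf{Step 1 (local structure at a point).} Fix $x\in[n]$ and let $\mathcal{C}_x$ be the set of intervals in $\mathcal{I}$ containing $x$, with $c_x=|\mathcal{C}_x|\ge 1$ by (i). By (ii), any two members of $\mathcal{C}_x$ meet exactly in $\{x\}$, with $x$ the maximum of one and the minimum of the other. Hence no member of $\mathcal{C}_x$ can have $x$ in its interior when $c_x\ge 2$. Moreover, at most one non-singleton member has $\max=x$, and at most one non-singleton member has $\min=x$. So $\mathcal{C}_x$ consists of an optional $A=[a,x]$ with $a<x$, some number of singletons $\{x\}$, and an optional $B=[x,b]$ with $b>x$.

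\textbf{Step 2 (the block is consecutive in $\sigma$).} This is the main point to check. Suppose some interval $J$ had $\min J\in(a,x)$. Then $J$ meets $A$ in $\min J$, which is not $\max A$, contradicting (ii). Hence in the sorted order $A$ is followed directly by the intervals with $\min=x$, namely the singletons $\{x\}$ and then $B$, since $(x,x)<(x,b)$ lexicographically. If $A$ is absent, the members of $\mathcal{C}_x$ are exactly the intervals with $\min = x$, which are consecutive in any case.

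Consecutive members of $\mathcal{C}_x$ satisfy $\max(I_{\sigma(i)})=x=\min(I_{\sigma(i+1)})$. So the block contributes $c_x-1$ indices to the set of ``joins'' in the statement. A join index $i$ is attributed to the unique $x=\max(I_{\sigma(i)})$, so distinct $x$ yield distinct indices. Therefore the number of joins is at least $\sum_x (c_x-1)$. Since $\sum_i |I_{\sigma(i)}|=\sum_x c_x$, we get
\[
\sum_i |I_{\sigma(i)}| - \#\text{joins} \le \sum_x c_x - \sum_x (c_x-1) = n .
\]

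\textbf{Step 3 (endpoints).} By (i), some interval has minimum $1$, so the first interval in the sorted order has $\min=1$. For the last position, let $K$ be an interval with $\max K=n$. Any interval $J$ sorted after $K$ has $\min J\ge \min K$, so $\min J\in K$. If $\min J>\min K$, then (ii) forces $\min J=\max K=n$, so $J=\{n\}$. If $\min J=\min K$, then $\max J>\max K=n$ is impossible, so $J$ is a copy of $K$. In either case the last interval has $\max=n$, as required.
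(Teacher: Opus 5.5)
Your proof is correct and takes essentially the same route as the paper. The paper sorts the intervals by increasing minimum with ties broken by increasing cardinality, which coincides with your lexicographic $(\min,\max)$ order; it observes that the intervals containing a point $x$ form a consecutive block (one ending at $x$, singletons $\{x\}$, one starting at $x$) contributing $q_x-1$ joins, and finishes with the same double count $\sum_x q_x=\sum_i |I_i|$. Your Steps~2 and~3 just spell out the consecutiveness and endpoint claims that the paper states as observations. One small point in Step~2: an interval $[a,c]$ with $c>x$ could a priori also sit between $A$ and the $\min=x$ block, but your Step~1 already excludes it, since it would contain $x$ in its interior.
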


\begin{proof}
Order the intervals by increasing minimum, breaking ties by increasing cardinality, and let $\sigma$ be the resulting permutation. Observe that the first interval starts at $1$, and the last interval ends at $n$.

For $x\in[n]$, let $q_x$ be the number of intervals containing $x$. Observe that, if $q_x>1$, then no interval contains $x$ as an interior point; otherwise its intersection with any other interval containing $x$ would violate (ii). Hence all intervals containing $x$ consist of at most one interval ending at $x$, all singleton intervals $\{x\}$, and at most one interval starting at $x$. In the chosen order, they appear consecutively and contribute $q_x-1$ adjacent pairs with $\max(I_{\sigma(i)})=\min(I_{\sigma(i+1)})=x$.

Observe that adjacent pairs counted for distinct values of $x$ are distinct. Therefore the number of adjacent pairs with $\max(I_{\sigma(i)})=\min(I_{\sigma(i+1)})$ is at least $\sum_{x=1}^n(q_x-1)$. Since $\sum_{x=1}^n q_x=\sum_{i=1}^m |I_i|$, we obtain
\begin{equation*}
\sum_{i=1}^m |I_{\sigma(i)}|
-
\bigl|\{\,1 \le i < m : \max(I_{\sigma(i)}) = \min(I_{\sigma(i+1)})\,\}\bigr|
\le
\sum_{x=1}^n q_x-\sum_{x=1}^n(q_x-1)
=
n. \qedhere
\end{equation*}
\end{proof}

\begin{lemma}

\label{lem:opt_equals_optstar}
    An instance $(G,T,k)$ of \prob\ is positive if and only if
    there are $v_L, v_R \in B_G(T)$ with $\opt_T(v_L,v_R)-|B_G(T)| \leq k$.
\end{lemma}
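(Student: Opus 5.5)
The plan is to prove, by induction over the full subtrees $X$ of $T$ (bottom-up), that $\opt_X(v_L,v_R)=\opt^\star_X(v_L,v_R)$ for all $v_L,v_R\in B_G(X)$. The lemma then follows by applying this at the root. For $X=T$ we have $G_T=G$, since $G$ has no isolated vertices. Any solution produces a graph $G'$ with $|B_{G'}(T)|=|V_b(G')|=|B_G(T)|+|\zeta|$, because no split creates isolated vertices. Conversely, a witness for $\opt^\star_T(v_L,v_R)$ uses exactly $|B_{H_T}(T)|-|B_G(T)|$ splits. Hence positivity is equivalent to $\min_{v_L,v_R}\opt^\star_T(v_L,v_R)-|B_G(T)|\le k$, where $v_L,v_R$ are the ancestors in $G$ of the first and last vertices of $\pi_b$.

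\textbf{Leaf case.} Let $X$ be a leaf $r$. A crossing-free drawing of a star centered at $r$ is always possible. Splitting never helps, since every bottom vertex keeps degree $\ge 1$ and is adjacent only to $r$; so the minimum is $|N_G(r)|$. If $v_L\neq v_R$, we order the bottom vertices with $v_L$ first and $v_R$ last. If $v_L=v_R$, the first and last vertices are descendants of the same vertex. When $|N_G(r)|=1$ this holds trivially with cost $1$. When $|N_G(r)|>1$, the ancestor $v$ of the first and last vertices would have to be split into two copies, both adjacent to $r$. This is impossible because splits partition the neighborhood and $v$ has the single neighbor $r$. So the value is $\infty$, matching \cref{def:opt_star}.

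\textbf{Inductive step, $\opt\ge\opt^\star$.} Take an optimal choice of the permutation $p$ and the endpoints $(v_L^{(i)},v_R^{(i)})$, together with witnesses $H_{X_i}$ for the children. We concatenate the child drawings in the order $p$, placing the top orders side by side; this gives an admissible order for $X$. A bottom vertex $v$ of $G$ may be adjacent to several children. Its descendants in the different $H_{X_i}$ become separate copies, except where $v_R^{(p_j)}=v_L^{(p_{j+1})}$. In that case we identify the last vertex of child $p_j$ with the first vertex of child $p_{j+1}$ whenever both descend from the same vertex. This gluing needs care: a single shared vertex is created by merging two descendants of the same original, which is a legal split structure and adds no crossings, since it sits at the boundary. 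The resulting count is exactly the sum minus the number of glued pairs. The splits can be realized as a sequence of splits of descendants of vertices of $B_G(X)$, because the resulting copies partition the original neighborhoods.

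\textbf{Inductive step, $\opt\le\opt^\star$.} This direction is the main obstacle. Take a witness $(H_X,\pi_t,\pi_b)$ and restrict it to each child $X_i$. By \cref{lem:consecutive} applied within $H_X$, each $B_{H_X}(X_i)$ is an interval $I_i$ of $\pi_b$. Since $\pi_t$ is built from $\mathrm{Adm}_t$ of the children, the children are pairwise leaf-disjoint, so condition (ii) of \cref{lem:helper} holds, and (i) holds because $B_{H_X}(X)=\bigcup_i B_{H_X}(X_i)$. The restriction of $H_X$ to $X_i$ must be turned into a legal witness for $\opt^\star_{X_i}$, with endpoints the ancestors of the endpoints of $I_i$. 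A bottom vertex shared by two children carries edges to both, so we split it into its child-wise parts. This gives $|B(X_i)|\ge\opt^\star_{X_i}=\opt_{X_i}$ by induction.

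Applying \cref{lem:helper} yields a permutation whose first interval starts at position $1$ and whose last ends at $|B_{H_X}(X)|$. It gives $|B_{H_X}(X)|\ge\sum_i|I_i|-\#\{\text{shared adjacent endpoints}\}$. A shared endpoint means the same vertex of $H_X$, so its ancestors in $G$ coincide, and the pair is counted in \cref{def:opt_star}. The endpoint conditions $v_L^{(p_1)}=v_L$ and $v_R^{(p_d)}=v_R$ follow from the first and last intervals containing the extreme vertices of $\pi_b$. One subtle point must be checked: several singleton intervals may sit at the first position, in which case we must pick the right one. The tie-breaking by cardinality in \cref{lem:helper} places singletons first, and any of them has the ancestor $v_L$, so the condition still holds.
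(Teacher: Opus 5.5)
Your proposal is correct and takes essentially the same route as the paper. You prove $\opt^\star_X=\opt_X$ by structural induction, using the same three-case leaf analysis; for $\opt^\star\le\opt$ you concatenate child witnesses and merge adjacent shared endpoints, and for the reverse inequality you restrict $H_X$ to each child and combine \cref{lem:consecutive} with \cref{lem:helper}. Your worry about several singleton intervals at position $1$ is already covered by the statement of \cref{lem:helper}: any interval $I_{\sigma(1)}$ with $\min(I_{\sigma(1)})=1$ has the first vertex of $\pi_b$ as its left endpoint, whose ancestor is $v_L$.
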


\begin{proof}
By \cref{def:opt}, the instance $(G,T,k)$ is positive if and only if there exist $v_L, v_R \in B_G(T)$ with $\opt^\star_T(v_L,v_R)-|B_G(T)| \leq k$. We establish the lemma by showing $\opt^\star_X = \opt_X$ for every full subtree $X$ of $T$ by structural induction.

\proofsubparagraph*{Base Case.} Let $X$ be a leaf $r \in V_t$, and fix $v_L,v_R\in B_G(X)=N_G(r)$.

\emph{Case 1: $v_L \neq v_R$.}
For every graph $H_X$ satisfying the conditions in \cref{def:opt}, we have $|B_{H_X}(X)|\ge |N_G(r)|$.
We can achieve equality  without any split by ordering the vertices of $N_G(r)$ so that $v_L$ is first and $v_R$ is last. Since $r$ is the only top vertex, the drawing is crossing-free. Thus, $\opt^\star_X(v_L,v_R)=|N_G(r)|=\opt_X(v_L,v_R)$.

\emph{Case 2: $v_L = v_R$ and $|N_G(r)|=1$.}
Observe that the graph with no split satisfies the conditions in \cref{def:opt} and has one bottom vertex. Thus, $\opt^\star_X(v_L,v_R)=1=\opt_X(v_L,v_R)$.

\emph{Case 3: $v_L = v_R$ and $|N_G(r)|>1$.}
An order satisfying the conditions in \cref{def:opt} would need the edge $rv_L$ to be assigned to two distinct descendants of $v_L$ and thus would require a non-disjoint split, which is not allowed. Thus, $\opt^\star_X(v_L,v_R)=\infty=\opt_X(v_L,v_R)$.

\proofsubparagraph*{Inductive Step.} Let $X$ be a full subtree with children $X_1, \dots, X_d$. Assume the lemma holds for all children.

We first prove $\opt^\star_X(v_L, v_R) \le \opt_X(v_L, v_R)$. If $\opt_X(v_L,v_R)=\infty$, there is nothing to prove. Otherwise, let $(p_1, \dots, p_d)$ and $(v_L^{(i)}, v_R^{(i)})$ minimize the expression for $\opt_X(v_L, v_R)$. By the inductive hypothesis, for each $i \in [d]$, there exists a graph $H_{X_i}$ obtained from $G_{X_i}$ by splits applied only to descendants of vertices in $B_G(X_i)$ and orders $\pi_t^{(i)} \in \mathrm{Adm}_t(X_i)$ and $\pi_b^{(i)}$ on $B_{H_{X_i}}(X_i)$ witnessing
\begin{equation*}
    |B_{H_{X_i}}(X_i)| = \opt^\star_{X_i}(v_L^{(i)}, v_R^{(i)}) = \opt_{X_i}(v_L^{(i)}, v_R^{(i)}).
\end{equation*}

Take disjoint copies of the child witnesses and concatenate $\pi_b^{(p_1)}, \dots, \pi_b^{(p_d)}$ to form an order $\pi_b$. Concatenate $\pi_t^{(p_1)}, \dots, \pi_t^{(p_d)}$ to form an order $\pi_t \in \mathrm{Adm}_t(X)$. The resulting graph is obtainable from $G_X$ by splits applied only to descendants of vertices in $B_G(X)$. By assumption, each child drawing is crossing-free. Moreover, if two edges belong to different children, then their top endpoints and bottom endpoints appear in the same child-block order; hence the two edges do not cross. Thus the concatenated drawing is crossing-free.

We now identify consecutive bottom vertices that descend from the same vertex of $B_G(X)$. For each $j \in \{1, \dots, d-1\}$ with $v_R^{(p_j)} = v_L^{(p_{j+1})}$, replace the last bottom vertex of the block for $X_{p_j}$ and the first bottom vertex of the block for $X_{p_{j+1}}$ by one vertex whose neighborhood is the union of their neighborhoods. The two vertices are adjacent in $\pi_b$, so the drawing remains crossing-free. The resulting graph is still obtainable from $G_X$ by vertex splits, since all incident edges remain assigned to descendants of the same vertex of $B_G(X)$.

After applying this replacement for all such $j$, the resulting graph satisfies the conditions in \cref{def:opt} for $X,v_L,v_R$ and has
\begin{equation*}
    |B_{H_X}(X)|
    =
    \sum_{i=1}^d \opt_{X_i}(v_L^{(i)},v_R^{(i)})
    -
    \bigl|\{\,j\in\{1,\dots,d-1\}: v_R^{(p_j)}=v_L^{(p_{j+1})}\,\}\bigr|
    =
    \opt_X(v_L,v_R).
\end{equation*}
Hence $\opt^\star_X(v_L,v_R)\le \opt_X(v_L,v_R)$.

To prove $\opt^\star_X(v_L, v_R) \ge \opt_X(v_L, v_R)$, there is nothing to prove if $\opt^\star_X(v_L,v_R)=\infty$. Otherwise, let $H_X$, $\pi_t$, and $\pi_b$ witness $|B_{H_X}(X)|=\opt^\star_X(v_L,v_R)$. Set $n\coloneqq |B_{H_X}(X)|$, and identify the order $\pi_b$ with $[n]$.

For each $i\in[d]$, let $H_{X_i}$ be the subgraph of $H_X$ induced by the edges with top endpoint in $L(X_i)$, and let $I_i\subseteq[n]$ be the set of positions occupied by $B_{H_{X_i}}(X_i)$ in $\pi_b$. By \cref{lem:consecutive}, the sets $I_1,\dots,I_d$ are nonempty intervals, cover $[n]$, and satisfy the intersection condition of \cref{lem:helper}.

Apply \cref{lem:helper} to obtain a permutation $\sigma$ with $\min(I_{\sigma(1)})=1$, $\max(I_{\sigma(d)})=n$, and
\begin{equation}\label{eq:helper-use-opt-proof}
n \ge
\sum_{q=1}^d |I_{\sigma(q)}|
-
\bigl|\{\,q\in\{1,\dots,d-1\}: \max(I_{\sigma(q)})=\min(I_{\sigma(q+1)})\,\}\bigr|.
\end{equation}
We use $\sigma$ as the child order in one candidate of the recurrence for $\opt_X(v_L,v_R)$, with endpoints taken from the endpoints of the intervals $I_i$.

For each $i\in[d]$, let $v_L^{(i)}$ and $v_R^{(i)}$ be the ancestors in $B_G(X_i)$ of the bottom vertices at positions $\min(I_i)$ and $\max(I_i)$. The choice of $\sigma$ gives $v_L^{(\sigma(1))}=v_L$ and $v_R^{(\sigma(d))}=v_R$. Moreover,
\begin{equation*}
    \opt_{X_i}(v_L^{(i)},v_R^{(i)})=\opt^\star_{X_i}(v_L^{(i)},v_R^{(i)})\le |I_i|
\end{equation*}
by the induction hypothesis and by restricting the drawing to $H_{X_i}$.

Whenever $\max(I_{\sigma(q)})=\min(I_{\sigma(q+1)})$, the shared endpoint has one ancestor, so $v_R^{(\sigma(q))}=v_L^{(\sigma(q+1))}$. Therefore the recurrence for $\opt_X(v_L,v_R)$ gives
\begin{align*}
    \opt_X(v_L,v_R)
    &\le
    \sum_{q=1}^d
    \opt_{X_{\sigma(q)}}(v_L^{(\sigma(q))},v_R^{(\sigma(q))})
    -
    \left|\left\{\begin{array}{@{}c@{}}
    q\in\{1,\dots,d-1\}:\\[-.3ex]
    \max(I_{\sigma(q)})=\min(I_{\sigma(q+1)})
    \end{array}\right\}\right| \\
    &\le
    \sum_{q=1}^d |I_{\sigma(q)}|
    -
    \left|\left\{\begin{array}{@{}c@{}}
    q\in\{1,\dots,d-1\}:\\[-.3ex]
    \max(I_{\sigma(q)})=\min(I_{\sigma(q+1)})
    \end{array}\right\}\right| \\
    &\le n
    =
    \opt^\star_X(v_L,v_R),
\end{align*}
where the last inequality is \eqref{eq:helper-use-opt-proof}. This proves $\opt^\star_X(v_L,v_R)\ge \opt_X(v_L,v_R)$, hence $\opt^\star_X = \opt_X$ for every full subtree $X$ of $T$. In particular, the instance $(G,T,k)$ is positive if and only if there exist $v_L, v_R \in B_G(T)$ with $\opt_T(v_L,v_R)-|B_G(T)| \leq k$.
\end{proof}

\subsection{Weighted Tuple Sorting}
\label{sec:wts}
With the alternative perspective on our problem via $\opt$ in hand, we can reformulate \prob\ as a sorting problem of tuples.
For a 2-tuple $(a,b)$, set $\fst((a,b)) \coloneqq a$ and $\snd((a,b)) \coloneqq b$.
We further fix a distinguished symbol $\lozenge$, called the \emph{unmatchable element}.

\begin{figure}
    \centering
    \includegraphics[page=4]{figures.pdf}
    \caption{An instance of \wtsLong and its unique optimal solution.}
    \label{fig:wts}
\end{figure}
\problembox{\wtsLong~(\wts)}
{Finite sets $T_1, \dots, T_d$ of 2-tuples, weight functions $w_1, \ldots, w_d$ where each $w_i$ assigns each $t \in T_i$ a positive integer or $\infty$, and a \emph{left set $\mathcal L$} and a \emph{right set $\mathcal R$}.}
{Find a permutation $(p_1, \dots, p_d)$ of $[d]$ and tuples $t_i \in T_i$ for all $i \in [d]$ with $\fst(t_{p_1}) \in \mathcal L$ and $\snd(t_{p_d}) \in \mathcal R$ such that the value $\left( \sum_{i=1}^d w_i(t_i) \right) - \bigl|\{\,i \in \{1,\dots,d-1\} : \snd(t_{p_i}) = \fst(t_{p_{i+1}})\neq \lozenge\,\}\bigr|$ is minimized.}

For an instance $\mathcal{I}=(T_1,\dots,T_d, w_1,\dots,w_d, \mathcal L, \mathcal R)$ of \wtsLong, we use $\texttt{\textup{WTS}}(\mathcal{I})$ to denote its minimum value, where we set  $\texttt{\textup{WTS}}(\mathcal{I}) = \infty$ if the instance is infeasible. We also write $\texttt{\textup{WTS}}(T_1,\dots,T_d,w_1,\dots,w_d,\mathcal L,\mathcal R)$ for this value.
For each $i\in [d]$, we set $\mathcal{U}_i \coloneqq \bigl\{\, \fst(t),\ \snd(t) \ \big|\ t \in T_i \,\bigr\}$ and call $\mathcal U\coloneqq \bigcup_{i=1}^d \mathcal U_i$ the \emph{universe} of the instance. See \cref{fig:wts} for an example.

Crucially, $\opt$ can be expressed as the solution value $\texttt{\textup{WTS}}(\cdot)$ of a corresponding instance of \wts;  with Lemma~\ref{lem:opt_equals_optstar}, this yields a direct link between \wts and \prob.

\begin{observation}[Reduction to {\normalfont\wts}]\label{obs:reduction-to-wts}
Let $(G,T,k)$ be an instance of \prob{}, let $X$ be a full subtree of $T$ with children $X_1,\dots,X_c$, and let $v_L,v_R\in B_G(X)$. Then $\opt_X(v_L, v_R)$ satisfies:
\begin{equation*}
    \opt_X(v_L, v_R) = \texttt{\textup{WTS}}\bigl( B_G(X_1)^2, \dots, B_G(X_c)^2, \opt_{X_1}, \dots, \opt_{X_c}, \{v_L\}, \{v_R\} \bigr).
\end{equation*}
\end{observation}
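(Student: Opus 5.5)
This is a direct unfolding of definitions: I will show that the feasible choices in the recurrence of \cref{def:opt_star} for $\opt_X(v_L,v_R)$ correspond one-to-one to the feasible choices of the \wts{} instance $\mathcal I=(B_G(X_1)^2,\dots,B_G(X_c)^2,\opt_{X_1},\dots,\opt_{X_c},\{v_L\},\{v_R\})$, and that corresponding choices have the same objective value.

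First I will check that $\mathcal I$ is a well-formed instance. Each $T_i=B_G(X_i)^2$ is a finite set of 2-tuples, and $w_i=\opt_{X_i}$ is defined on exactly $B_G(X_i)\times B_G(X_i)$. Its values are positive integers or $\infty$: a leaf contributes $|N_G(r)|\ge 1$ (there are no isolated vertices), $1$, or $\infty$. For internal nodes, positivity follows because $\opt_{X_i}=\opt^\star_{X_i}$ by the proof of \cref{lem:opt_equals_optstar}. That quantity is the size of a nonempty bottom set, so it is at least $1$. Alternatively, an induction shows directly that the recurrence value is at least $1$.

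Second, I will identify the choices. A candidate in the recurrence consists of a permutation $(p_1,\dots,p_c)$ and pairs $(v_L^{(i)},v_R^{(i)})\in B_G(X_i)^2$ with $v_L^{(p_1)}=v_L$ and $v_R^{(p_c)}=v_R$. Setting $t_i\coloneqq(v_L^{(i)},v_R^{(i)})$ gives $t_i\in T_i$, together with $\fst(t_{p_1})\in\{v_L\}=\mathcal L$ and $\snd(t_{p_c})\in\{v_R\}=\mathcal R$. Conversely, every feasible \wts{} choice arises in this way. The weight sums agree term by term.

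Third, I will compare the subtracted counts, which is the only point needing care. \wts{} counts indices with $\snd(t_{p_j})=\fst(t_{p_{j+1}})\neq\lozenge$, whereas the recurrence omits the condition $\neq\lozenge$. Since $\lozenge$ is a distinguished symbol not in $V_b$, no tuple component here equals $\lozenge$. The two counts therefore coincide.

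Hence the two minima range over identical value sets, both defaulting to $\infty$ when the set is empty or all values are $\infty$ (using the convention $\infty+a=\infty$). This gives the claimed equality. I do not expect any real obstacle; the only subtlety is the positivity requirement on the weights, handled in the first step.
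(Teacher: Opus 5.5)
Your proposal is correct and is exactly the intended justification: the paper states this as an observation without proof, because it is a direct term-by-term matching of the recurrence in \cref{def:opt_star} with the definition of \wts{}. Your two extra checks are precisely the details the paper leaves implicit: first, that the weights $\opt_{X_i}$ are positive integers or $\infty$, so the instance is well-formed; second, that the condition $\neq\lozenge$ is vacuous because all tuple entries lie in $V_b$.
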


In the remainder of this section, we obtain structural insights into \prob\ which, together with \wts, will be crucial for obtaining the efficient dynamic programming solutions presented in Sections~\ref{sec:splits} and~\ref{sec:degree}.

\subsection{Boundaries and Compression}

For a full subtree $X$ of $T$, it is possible to identify the vertices of $B_G(X)$ which also have neighbors ``outside'' of $X$. We call these vertices the boundary, formalized below.

\begin{definition}[Boundary]
Let $(G,T,k)$ be an instance of \prob, and let $X$ be a full subtree of $T$. We define the \emph{boundary} of $X$, denoted $\partial B_G(X)$, as the set of vertices in $B_G(X)$ that share an edge with a top vertex outside of $L(X)$:
\begin{equation*}
    \partial B_G(X) \coloneqq B_G(X) \cap N_G(V_t \setminus L(X)).
\end{equation*}
\end{definition}

\begin{figure}[t]
    \centering
    \includegraphics[page=6]{figures.pdf}
    \caption{
    A tree $T$ with full subtree $X$, its boundary $\partial B(X)$, and $\rho(\cdot)$.}
    \label{fig:boundaries_and_projection}
\end{figure}

See also  \cref{fig:boundaries_and_projection}.

We now establish a few basic properties of boundaries, including an upper-bound on their size.

\begin{longobservation}[Properties of boundaries]\label{lem:boundary-properties}
Let $(G,T,k)$ be an instance. For every full subtree $Y$ of $T$ and every $v \in V_b$, define the \emph{internal degree} of $v$ with respect to $Y$ as
\begin{equation*}
    d_Y(v) \coloneqq |N_G(v) \cap L(Y)|.
\end{equation*}
Then, for every full subtree $X$ of $T$, the following properties hold:
\begin{enumerate}
    \item \textbf{Boundary Characterization:} For every $v \in B_G(X)$, we have $v \in \partial B_G(X)$ if and only if $d_X(v) < d_T(v)$.
    \item \textbf{Sibling Intersection:} If $X$ is not a leaf, then for any two distinct children $X_i$ and $X_j$ of $X$,
    \begin{equation*}
        B_G(X_i) \cap B_G(X_j) = \partial B_G(X_i) \cap \partial B_G(X_j).
    \end{equation*}
    \item \textbf{Boundary Inclusion:} If $X$ has children $X_1, \dots, X_c$, then
    \begin{equation*}
        \partial B_G(X) \subseteq \bigcup_{i=1}^c \partial B_G(X_i).
    \end{equation*}
    \item \textbf{Degree Additivity:} If $X$ has children $X_1, \dots, X_c$, then for every $v \in V_b$,
    \begin{equation*}
        d_X(v) = \sum_{i=1}^c d_{X_i}(v).
    \end{equation*}
\end{enumerate}
\end{longobservation}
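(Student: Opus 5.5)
All four properties follow directly by unfolding the definitions of $B_G(\cdot)$, $\partial B_G(\cdot)$ and $d_Y(\cdot)$. The only structural input is that the leaf sets of children partition the leaf set of their parent. I would prove Degree Additivity first, since the other items can lean on it.

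\textbf{Degree Additivity.} The sets $L(X_1),\dots,L(X_c)$ partition $L(X)$. Hence $N_G(v)\cap L(X)$ is the disjoint union of the sets $N_G(v)\cap L(X_i)$, and taking cardinalities gives the claim.

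\textbf{Boundary Characterization.} Note that $L(T)=V_t$ and $L(X)\subseteq V_t$. Therefore
\[
d_T(v)-d_X(v)=|N_G(v)\cap (V_t\setminus L(X))|.
\]
This quantity is positive exactly when $v\in N_G(V_t\setminus L(X))$. For $v\in B_G(X)$, that is precisely the condition $v\in\partial B_G(X)$.

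\textbf{Sibling Intersection.} The inclusion ``$\supseteq$'' holds because $\partial B_G(X_i)\subseteq B_G(X_i)$. For ``$\subseteq$'', take $v\in B_G(X_i)\cap B_G(X_j)$. Then $v$ has a neighbor in $L(X_j)$. Since $X_i$ and $X_j$ are distinct children, $L(X_j)\cap L(X_i)=\emptyset$, so this neighbor lies in $V_t\setminus L(X_i)$. Hence $v\in\partial B_G(X_i)$, and symmetrically $v\in\partial B_G(X_j)$.

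\textbf{Boundary Inclusion.} Let $v\in\partial B_G(X)$. Then $v$ has a neighbor in $L(X)=\bigcup_i L(X_i)$, say in $L(X_i)$, so $v\in B_G(X_i)$. Moreover, $v$ has a neighbor outside $L(X)$, and since $L(X_i)\subseteq L(X)$, that neighbor is also outside $L(X_i)$. Thus $v\in\partial B_G(X_i)$.

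There is no real obstacle here. The one point needing care is using the disjointness of the sibling leaf sets in the Sibling Intersection and Degree Additivity items, and that disjointness holds because full subtrees rooted at distinct children of the same node have disjoint leaf sets.
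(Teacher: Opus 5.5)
Your proposal is correct and follows essentially the same argument as the paper for all four items: the degree split $d_T(v)=d_X(v)+|N_G(v)\cap(V_t\setminus L(X))|$, disjointness of sibling leaf sets, and the partition of $L(X)$ into the $L(X_i)$. The only difference is that you state Degree Additivity first, and none of your other three arguments actually use it, so the order does not matter.
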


\begin{proof}
We prove the four claims in order.

For the boundary characterization, let $v \in B_G(X)$. Since $L(T)=V_t$, we have
\begin{equation*}
    d_T(v)=d_X(v)+|N_G(v)\cap (V_t\setminus L(X))|.
\end{equation*}
Hence $d_X(v)<d_T(v)$ if and only if $v$ has a neighbor in $V_t\setminus L(X)$. Since $v\in B_G(X)$, this is equivalent to $v\in B_G(X)\cap N_G(V_t\setminus L(X))=\partial B_G(X)$.

For the sibling intersection property, the inclusion
\begin{equation*}
    \partial B_G(X_i) \cap \partial B_G(X_j) \subseteq B_G(X_i) \cap B_G(X_j)
\end{equation*}
is immediate from the definition of the boundary. Conversely, let $v\in B_G(X_i)\cap B_G(X_j)$. Then $v$ has a neighbor in $L(X_j)$, and since $L(X_i)$ and $L(X_j)$ are disjoint, this neighbor lies outside $L(X_i)$. Thus $v\in\partial B_G(X_i)$. Symmetrically, $v\in\partial B_G(X_j)$, proving the reverse inclusion.

For boundary inclusion, let $u\in\partial B_G(X)$. Since $u\in B_G(X)$ and $L(X)=\bigcup_{i=1}^c L(X_i)$, there exists $i\in[c]$ such that $u\in B_G(X_i)$. Moreover, $u$ has a neighbor in $V_t\setminus L(X)$, and this vertex also lies in $V_t\setminus L(X_i)$ because $L(X_i)\subseteq L(X)$. Hence $u\in\partial B_G(X_i)$.

Finally, degree additivity follows from the fact that the sets $L(X_1),\dots,L(X_c)$ form a partition of $L(X)$:
\begin{equation*}
    d_X(v)
    =
    |N_G(v)\cap L(X)|
    =
    \sum_{i=1}^c |N_G(v)\cap L(X_i)|
    =
    \sum_{i=1}^c d_{X_i}(v). \qedhere
\end{equation*}
\end{proof}

\begin{lemma}[Boundary Size Bound]

\label{lem:boundary-size}
Let $(G,T,k)$ be a positive instance of \prob. Then for every full subtree $X$ of $T$, the size of its boundary satisfies $|\partial B_G(X)| \le k + 2$.
\end{lemma}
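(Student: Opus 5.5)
Fix a solution $(\zeta,\pi_t,\pi_b)$ producing $G'$. We may assume $X\neq T$, since for $X=T$ the boundary is empty. Throughout, $Y$ denotes the complement side: the top vertices $V_t\setminus L(X)$. By definition, every $v\in\partial B_G(X)$ has a neighbor in $L(X)$ and a neighbor in $Y$.

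\textbf{Step 1: counting descendants.} Say $v$ has $c_v$ descendants in $G'$. Then $\sum_{v\in V_b}(c_v-1)\le |\zeta|\le k$. It therefore suffices to show
\[
\sum_{v\in\partial B_G(X)} c_v \le |\partial B_G(X)| + k' + 2,
\]
where $k'$ counts only the splits of boundary vertices. Equivalently, all but at most two boundary vertices must be split at least once.

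\textbf{Step 2: structure of the drawing.} By \cref{lem:consecutive}, $B_{G'}(X)$ occupies a contiguous interval $I$ of $\pi_b$. The leaves $L(X)$ form a contiguous block in $\pi_t$. The set $Y$ splits into a part $Y^-$ lying before this block and a part $Y^+$ lying after it.

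Now take any bottom vertex $u\in I$ of $G'$ that has neighbors both in $L(X)$ and in $Y$. If $u$ is adjacent to some $y\in Y^-$, then $u$ must be the first vertex of $I$. Otherwise, let $u'\in I$ precede $u$ and let $x\in L(X)$ be a neighbor of $u'$. Then $y\prec_{\pi_t} x$ while $u'\prec_{\pi_b} u$, so the edges $yu$ and $xu'$ cross. By the symmetric argument, if $u$ is adjacent to some vertex of $Y^+$, then $u$ is the last vertex of $I$. This is the same crossing argument as in \cref{lem:consecutive}.

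Hence at most two vertices of $G'$ have neighbors both in $L(X)$ and in $Y$.

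\textbf{Step 3: all other boundary vertices are split.} Consider a boundary vertex $v$ none of whose descendants is one of these at most two special vertices. Then every descendant of $v$ has neighbors only in $L(X)$ or only in $Y$. Since $v$ has neighbors on both sides, $v$ has at least two descendants, i.e., $c_v\ge 2$.

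Each special vertex has exactly one ancestor. So at most two boundary vertices $v$ can have $c_v=1$, and every other boundary vertex contributes at least one split. This gives $|\partial B_G(X)|-2\le k$, which is the claimed bound $|\partial B_G(X)|\le k+2$.

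\textbf{Main obstacle.} The delicate point is Step 2. One must handle a vertex adjacent to both $Y^-$ and $Y^+$: it is then both the first and the last vertex of $I$, so $I$ is a singleton and the count of special vertices is still at most two. One must also check the degenerate case $Y^-=\emptyset$ or $Y^+=\emptyset$, where the argument only gets easier.
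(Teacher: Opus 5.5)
Your argument is correct and is essentially the paper's: an unsplit boundary vertex keeps neighbors both in $L(X)$ and outside it, so it must be an endpoint of the interval spanned by $B_{G'}(X)$, which leaves room for at most two unsplit boundary vertices (the paper gets the endpoint property by applying Statement~2 of \cref{lem:consecutive} to $X$ and an outside leaf $y$, while you re-derive it with the same crossing argument). The only blemish is the display in Step~1, which is vacuous because $\sum_{v\in\partial B_G(X)} c_v = |\partial B_G(X)|+k'$ holds by the definition of $k'$; the reduction you actually use, and which suffices, is the ``all but at most two boundary vertices are split'' sentence.
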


\begin{proof}
    Towards a contradiction, suppose that $|\partial B_G(X)| \ge k + 3$. Since $(G, T, k)$ is a positive instance, there exists a sequence of at most $k$ vertex splits producing a graph $G'$, along with total orders $\pi_t \in \mathrm{Adm}_t(T)$ and $\pi_b$ on $V_b(G')$ such that the two-layer drawing of $G'$ with $(\pi_t,\pi_b)$ is crossing-free. Since at most $k$ original vertices of $\partial B_G(X)$ are split, $B_{G'}(X)$ must still contain at least three unsplit boundary vertices from $\partial B_G(X)$.

    By \cref{lem:consecutive}, $B_{G'}(X)$ forms a contiguous interval $I_X$ on $\pi_b$. Out of the at least three unsplit boundary vertices in $B_{G'}(X)$, at least one must lie strictly between two others in $I_X$. Let this middle vertex be $v$. Because $v \in \partial B_G(X)$, it shares an edge with some top vertex $y \in V_t \setminus L(X)$.

    Since $y$ is adjacent to $v$, the set $B_{G'}(y)$ must contain $v$. Let $I_y$ be the interval on $\pi_b$ spanned by $B_{G'}(y)$. The vertex $v$ lies in the intersection of $I_X$ and $I_y$. Since $v$ lies strictly inside $I_X$, this contradicts \cref{lem:consecutive}.

    Thus, our assumption is false, and $|\partial B_G(X)| \le k + 2$.
\end{proof}

These properties allow us to efficiently compute boundaries.

\begin{lemma}[Computing Boundaries]

\label{lem:compute-boundaries}
Let $(G,T,k)$ be an instance of \prob. One can, in time $\mathcal{O}(|E| + k|V_t|)$, either
correctly reject the instance as a negative instance, or compute the boundary sets $\partial B_G(X)$ for
all full subtrees $X$ of $T$ such that $|\partial B_G(X)| \le k+2$ for every full subtree $X$.
\end{lemma}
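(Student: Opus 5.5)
We compute the boundaries bottom-up over $T$. The tools are the Boundary Characterization and Boundary Inclusion parts of \cref{lem:boundary-properties}, together with \cref{lem:boundary-size}.

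\textbf{Preprocessing.} We compute $d_T(v)=\deg_G(v)$ for every $v\in V_b$. We then fix a post-order of $T$, so each child is processed before its parent.

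\textbf{Leaves.} For a leaf $r$, we have $B_G(r)=N_G(r)$ and $d_r(v)=1$ for every $v\in N_G(r)$. By Boundary Characterization, $\partial B_G(r)=\{v\in N_G(r): \deg_G(v)>1\}$. This takes $\mathcal{O}(\deg(r))$ time, so $\mathcal{O}(|E|)$ over all leaves. If some $|\partial B_G(r)|>k+2$, we reject by \cref{lem:boundary-size}.

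\textbf{Internal nodes.} Let $X$ have children $X_1,\dots,X_c$. By Boundary Inclusion, every vertex of $\partial B_G(X)$ lies in $C\coloneqq\bigcup_i\partial B_G(X_i)$, a set of size at most $c(k+2)$.

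To test membership, we use Degree Additivity: $d_X(v)=\sum_i d_{X_i}(v)$. For this we store, alongside each boundary vertex $v\in\partial B_G(X_i)$, the value $d_{X_i}(v)$. For vertices of $B_G(X_i)\setminus\partial B_G(X_i)$ no value is needed. Such a vertex has all its neighbors inside $L(X_i)$, so it lies in no other child's bottom set (Sibling Intersection) and cannot be in $\partial B_G(X)$.

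For each $v\in C$, we compute $d_X(v)$ by summing the stored values over the children whose boundary contains $v$. A child whose boundary does not contain $v$ but has $v\in B_G(X_i)$ cannot exist. Indeed, $v\in C$ means $v$ has a neighbor outside $L(X_i)$ unless $v$ belongs only to $B_G(X_i)$, and in that case it is in $\partial B_G(X_i)$ anyway. Likewise, if $v\notin B_G(X_i)$ then $d_{X_i}(v)=0$.

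We then keep exactly those $v\in C$ with $d_X(v)<\deg_G(v)$, storing $d_X(v)$ with each. We accumulate the sums in a global array indexed by $V_b$ and reset it after processing $X$. The cost at $X$ is $\mathcal{O}(c(k+2))$. Summed over all internal nodes, the total number of children is at most $|V(T)|=\mathcal{O}(|V_t|)$, since $T$ has no unary nodes. This gives $\mathcal{O}(k|V_t|)$ in total. If any computed boundary exceeds $k+2$, we reject.

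\textbf{Correctness of rejection.} Rejection is justified by \cref{lem:boundary-size}. Since we only recurse into children after they passed the size check, the bound $|C|\le c(k+2)$ used in the running-time analysis holds.

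\textbf{Main obstacle.} The delicate point is the accounting at internal nodes. We must avoid touching all of $B_G(X)$, which could be large, and work only with the children's boundaries. This is sound because a non-boundary vertex of a child is confined to that child and thus never becomes a boundary vertex higher up. With that established, the overall bound $\mathcal{O}(|E|+k|V_t|)$ follows.
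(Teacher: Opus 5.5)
Your proposal is correct and follows essentially the same route as the paper's proof. Both precompute $d_T(v)=\deg_G(v)$, handle leaves via Boundary Characterization, and at internal nodes restrict to $\bigcup_i \partial B_G(X_i)$, summing the stored $d_{X_i}(v)$ only over children whose boundary contains $v$, with rejection justified by \cref{lem:boundary-size} and $\mathcal{O}(ck)$ cost per node. Your justification that this sum is complete is worded awkwardly but sound; the clean version is that if $v\in\partial B_G(X_j)$ and $v\in B_G(X_i)$ with $i\neq j$, then $v$ has a neighbor in $L(X_j)$, which lies outside $L(X_i)$.
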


\begin{proof}
We process $T$ bottom-up. First, we precompute $d_T(v)=|N_G(v)|$ for all $v\in V_b$ in $\mathcal{O}(|E|)$ time. Then, for each full subtree $X$, we compute $\partial B_G(X)$ and store the value $d_X(v)$ for each $v\in\partial B_G(X)$. By \cref{lem:boundary-size}, if at any point we find a full subtree $X$ with $|\partial B_G(X)|>k+2$, then we may correctly reject the instance.

For a leaf $x\in V_t$, we have $B_G(x)=N_G(x)$. For each $v\in N_G(x)$, $d_x(v)=1$. By the Boundary Characterization property in \cref{lem:boundary-properties}, $v\in\partial B_G(x)$ if and only if $d_T(v)>1$. Thus the boundary of a leaf can be computed in time $\mathcal{O}(|N_G(x)|)$, which sums to $\mathcal{O}(|E|)$ over all leaves.

Now let $X$ be a full subtree with children $X_1,\dots,X_c$. By the Boundary Inclusion property of \cref{lem:boundary-properties},
\begin{equation*}
    \partial B_G(X) \subseteq C_X \coloneqq \bigcup_{i=1}^c \partial B_G(X_i).
\end{equation*}
Since the instance has not yet been rejected, each child boundary has size at most $k+2$.

For every $v\in C_X$, we compute $d_X(v)$ by summing all stored values $d_{X_i}(v)$ over children $X_i$ with $v\in\partial B_G(X_i)$. This accounts for all nonzero summands in
\begin{equation*}
    d_X(v)=\sum_{i=1}^c d_{X_i}(v).
\end{equation*}
Indeed, if $v\in C_X$ and $d_{X_i}(v)>0$, then $v\in B_G(X_i)$. Since $v\in C_X$, there is a child $X_j$ with $v\in\partial B_G(X_j)$. If $i\neq j$, then $v\in B_G(X_i)\cap B_G(X_j)$, and the Sibling Intersection property of \cref{lem:boundary-properties} implies $v\in\partial B_G(X_i)$. The case $i=j$ is immediate.

By the Boundary Characterization property of \cref{lem:boundary-properties}, we retain precisely those vertices $v\in C_X$ with $d_X(v)<d_T(v)$. This produces exactly $\partial B_G(X)$. If more than $k+2$ vertices are retained, we reject.

The computation for $X$ takes
\begin{equation*}
    \mathcal{O}\left(\sum_{i=1}^c |\partial B_G(X_i)|\right)=\mathcal{O}(ck)
\end{equation*}
time. Summing over all internal nodes gives $\mathcal{O}(k|V(T)|)$. Since $T$ has no internal node with exactly one child, $|V(T)|=\mathcal{O}(|V_t|)$. Together with the leaf processing and the initial degree computation, the total running time is $\mathcal{O}(|E|+k|V_t|)$.
\end{proof}

In our later dynamic programming procedures, we will be storing information about the various subsolutions occurring in a full subtree $X$. However, storing records for all possible (at most $n^2$) endpoint pairs $v_L, v_R\in B_G(X)$ of a full subtree $X$ would not be possible in fixed-parameter time. Instead, for both of our algorithmic approaches it will be advantageous to \emph{compress} the boundaries by aggregating bottom vertices with no neighbors outside of $L(X)$ into a special \emph{unmatchable element} $\lozenge$; this restricts the endpoints to $\partial B_G(X) \cup \{\lozenge\}$, a set of size at most $k+3$ by \cref{lem:boundary-size}, and both of our algorithms rely on this bound for their running times. The remainder of this section formalizes and integrates this step into our considerations.

\begin{definition}[Projection]
We define the \emph{projection mapping} $\rho_X \colon B_G(X) \cup \{\lozenge\} \to \partial B_G(X) \cup \{\lozenge\}$ as:
\begin{equation*}
    \rho_X(v) \coloneqq
    \begin{cases}
        v & \text{if } v \in \partial B_G(X), \\
        \lozenge & \text{otherwise.}
    \end{cases}
\end{equation*}
\end{definition}

Intuitively, $\rho_X$ maps all vertices from $B_G(X)$ which do not have neighbors outside of $L(X)$ to the unmatchable element; recall that the same symbol and name was also used in Subsection~\ref{sec:wts} for \wts, and the reason for this will become clear in Lemma~\ref{lem:compute-copt}.

\begin{definition}[Compressed Optimal Value]\label{def:compressed-opt}
Let $(G,T,k)$ be an instance of \prob. For any full subtree $X$ of $T$, and for any boundary elements $v_L, v_R \in \partial B_G(X) \cup \{\lozenge\}$, the \emph{compressed optimal value} $\copt_X(v_L, v_R)$ is defined as:
\begin{equation*}
    \copt_X(v_L, v_R) \coloneqq \min_{\substack{v'_L \in \rho_X^{-1}(v_L) \setminus \{\lozenge\} \\ v'_R \in \rho_X^{-1}(v_R) \setminus \{\lozenge\}}} \opt_X(v'_L, v'_R).
\end{equation*}
\end{definition}

Aside from the definition, $\copt$ can also be computed recursively via a solution to \wts (which we will be doing in both dynamic programs).

\begin{lemma}[Recurrence for $\copt_X$]

\label{lem:compute-copt}
Let $(G,T,k)$ be an instance of \prob{}. For any full subtree $X$ of $T$ with children $X_1,\dots,X_c$, let $\mathcal{U}_{X_i} \coloneqq \partial B_G(X_i) \cup \{\lozenge\}$ for each $i \in [c]$, and let $\mathcal{U}_X \coloneqq \bigcup_{i=1}^c \mathcal{U}_{X_i}$. Then, for any boundary elements $v_L, v_R \in \partial B_G(X) \cup \{\lozenge\}$, the compressed optimal value $\copt_X(v_L, v_R)$ satisfies:
\begin{equation*}
    \copt_X(v_L, v_R) = \texttt{\textup{WTS}}\bigl( \mathcal{U}_{X_1}^2, \dots, \mathcal{U}_{X_c}^2, \copt_{X_1}, \dots, \copt_{X_c}, \rho_X^{-1}(v_L) \cap \mathcal{U}_X, \rho_X^{-1}(v_R) \cap \mathcal{U}_X \bigr).
\end{equation*}
\end{lemma}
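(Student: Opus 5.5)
The plan is to derive the identity from \cref{obs:reduction-to-wts} together with \cref{def:compressed-opt}. By these two results,
\[
\copt_X(v_L,v_R)=\min_{v'_L,v'_R}\texttt{\textup{WTS}}\bigl(B_G(X_1)^2,\dots,B_G(X_c)^2,\opt_{X_1},\dots,\opt_{X_c},\{v'_L\},\{v'_R\}\bigr),
\]
where the minimum ranges over $v'_L\in\rho_X^{-1}(v_L)\setminus\{\lozenge\}$ and $v'_R\in\rho_X^{-1}(v_R)\setminus\{\lozenge\}$. This minimum is the optimum of a single ``uncompressed'' \wts-type problem whose feasible solutions are a child permutation, tuples $(a'_i,b'_i)\in B_G(X_i)^2$, and endpoints in the two preimages. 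I then show two inequalities between this problem and the compressed \wts instance on the right-hand side, each by translating solutions.

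\textbf{Key fact.} Apply the child projections $\rho_{X_i}$ componentwise. Two tuples of distinct children $X_i,X_j$ produce a counted adjacency $b'_i=a'_j$ before projection if and only if $\rho_{X_i}(b'_i)=\rho_{X_j}(a'_j)\neq\lozenge$ after projection. The argument uses Sibling Intersection from \cref{lem:boundary-properties}. A common vertex of $B_G(X_i)$ and $B_G(X_j)$ lies in $\partial B_G(X_i)\cap\partial B_G(X_j)$, so the projections fix it and it is not $\lozenge$. Conversely, equal non-$\lozenge$ projections are the vertices themselves.

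\textbf{Direction $\ge$.} Take an optimal uncompressed solution and replace each tuple $(a'_i,b'_i)$ by $(\rho_{X_i}(a'_i),\rho_{X_i}(b'_i))\in\mathcal U_{X_i}^2$. By definition of $\copt_{X_i}$, the weight does not increase. By the key fact, the number of counted adjacencies is unchanged. It remains to check the endpoint sets.
\begin{itemize}
\item If $v_L\neq\lozenge$, then $v'_L=v_L\in\partial B_G(X)$, and this vertex is in $\partial B_G(X_{p_1})$ because it has a neighbour outside $L(X)\supseteq L(X_{p_1})$. Its projection is therefore $v_L\in\rho_X^{-1}(v_L)\cap\mathcal U_X$.
\item If $v_L=\lozenge$, then $v'_L\notin\partial B_G(X)$. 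Its child projection is either $v'_L$ itself or $\lozenge$, and both lie in $\rho_X^{-1}(\lozenge)\cap\mathcal U_X$ (recall $\rho_X(\lozenge)=\lozenge$).
\end{itemize}
The right endpoint is handled symmetrically.

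\textbf{Direction $\le$.} Take an optimal compressed solution with finite value and tuples $(a_i,b_i)$. For each $i$, pick a realizing pair $a'_i\in\rho_{X_i}^{-1}(a_i)\setminus\{\lozenge\}$, $b'_i\in\rho_{X_i}^{-1}(b_i)\setminus\{\lozenge\}$ with $\opt_{X_i}(a'_i,b'_i)=\copt_{X_i}(a_i,b_i)$. Such a pair exists because the weight is finite. Counted adjacencies $b_i=a_j\neq\lozenge$ involve fixed vertices, so $b'_i=a'_j$ and the count does not drop. For the endpoint, consider the two cases.
\begin{itemize}
\item If $a_{p_1}=\lozenge$, then $a'_{p_1}\notin\partial B_G(X_{p_1})\supseteq\partial B_G(X)\cap B_G(X_{p_1})$, so $\rho_X(a'_{p_1})=\lozenge$. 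Membership of $\lozenge$ in $\rho_X^{-1}(v_L)$ forces $v_L=\lozenge$, so this is consistent.
\item If $a_{p_1}\neq\lozenge$, then $a'_{p_1}=a_{p_1}$ and $\rho_X(a'_{p_1})=v_L$ directly.
\end{itemize}
Hence $a'_{p_1}\in\rho_X^{-1}(v_L)\setminus\{\lozenge\}$, and symmetrically on the right.

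\textbf{Main obstacle.} The main care point is this endpoint bookkeeping. Three projections interact: $\rho_X$, the children's $\rho_{X_i}$, and $\lozenge$ lying in the domain but never being an actual vertex. I would handle it by the explicit case split on whether $v_L$ (respectively the chosen compressed endpoint) equals $\lozenge$, as above. Infeasibility, meaning value $\infty$ on both sides, follows from the same translations, since each maps feasible solutions to feasible solutions.
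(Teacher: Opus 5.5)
Your proposal is correct and follows essentially the same route as the paper's proof. Both rewrite $\copt_X$ as a minimum of uncompressed \wts{} values, use Sibling Intersection to show that child projections preserve matches, and prove the two inequalities by projecting and lifting solutions, with the same endpoint case analysis (the paper splits on whether $s\in\partial B_G(X_{p_1})$ rather than on $v_L=\lozenge$, which is an immaterial difference).
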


In particular, the \wts instances arising from \cref{lem:compute-copt} satisfy $\max_{i\in[c]}|\mathcal{U}_{X_i}|\le k+3$ by \cref{lem:boundary-size}, which is precisely the universe size bound required in \cref{thm:wts-fpt-pw-plus-max-universe-size}.

\begin{proof}
Set
\begin{equation*}
L \coloneqq \rho_X^{-1}(v_L)\cap \mathcal{U}_X,
\qquad
R \coloneqq \rho_X^{-1}(v_R)\cap \mathcal{U}_X.
\end{equation*}
For brevity, denote the right-hand side of the claimed equality by $F_{\mathrm c}$.
By \cref{def:compressed-opt,obs:reduction-to-wts},
\begin{equation}\label{eq:copt-wts-uncompressed}
    \copt_X(v_L, v_R)
    =
    \min_{\substack{s \in \rho_X^{-1}(v_L)\setminus\{\lozenge\}\\ t \in \rho_X^{-1}(v_R)\setminus\{\lozenge\}}}
    \texttt{\textup{WTS}}\bigl( B_G(X_1)^2, \dots, B_G(X_c)^2, \opt_{X_1}, \dots, \opt_{X_c}, \{s\}, \{t\} \bigr).
\end{equation}

For each $i\in[c]$, define
\begin{equation*}
    \psi_i \colon B_G(X_i)^2 \to \mathcal{U}_{X_i}^2,
    \qquad
    \psi_i(a,b)\coloneqq \bigl(\rho_{X_i}(a),\rho_{X_i}(b)\bigr).
\end{equation*}
For any distinct $i,j\in[c]$ and tuples $(a_i,b_i)\in B_G(X_i)^2$ and
$(a_j,b_j)\in B_G(X_j)^2$, we have
\begin{equation}\label{eq:projection-preserves-matches}
    b_i=a_j\neq\lozenge
    \iff
    \rho_{X_i}(b_i)=\rho_{X_j}(a_j)\neq\lozenge.
\end{equation}
Indeed, if $b_i=a_j=x\neq\lozenge$, then
$x\in B_G(X_i)\cap B_G(X_j)=\partial B_G(X_i)\cap \partial B_G(X_j)$ by
\cref{lem:boundary-properties}, and hence
$\rho_{X_i}(b_i)=x=\rho_{X_j}(a_j)\neq\lozenge$.
Conversely, if $\rho_{X_i}(b_i)=\rho_{X_j}(a_j)\neq\lozenge$, then both projections are equal to the
original vertices, and therefore $b_i=a_j\neq\lozenge$.

We first prove
\begin{equation}\label{eq:first-ineq-compute-copt}
    F_{\mathrm c}
    \le \copt_X(v_L, v_R).
\end{equation}
It suffices to prove that $F_{\mathrm c}$ is at most every finite value appearing in the minimum in \eqref{eq:copt-wts-uncompressed}. Fix admissible $s \in \rho_X^{-1}(v_L)\setminus\{\lozenge\}$ and $t \in \rho_X^{-1}(v_R)\setminus\{\lozenge\}$ such that the corresponding \wtsLong value is finite, and let $(p_1,\dots,p_c)$ together with tuples $(a_i,b_i)\in B_G(X_i)^2$ be an optimal solution for
\begin{equation*}
    \texttt{\textup{WTS}}\bigl( B_G(X_1)^2, \dots, B_G(X_c)^2, \opt_{X_1}, \dots, \opt_{X_c}, \{s\}, \{t\} \bigr).
\end{equation*}
For each $i\in[c]$, put $(\alpha_i,\beta_i)\coloneqq \psi_i(a_i,b_i)$.

Since $a_{p_1}=s$, we have $\alpha_{p_1}\in L$. If $s\in\partial B_G(X_{p_1})$, then
$\alpha_{p_1}=s\in \rho_X^{-1}(v_L)\cap \mathcal{U}_X=L$. Otherwise,
$\alpha_{p_1}=\lozenge$, and $s\notin\partial B_G(X)$, since any neighbor outside $L(X)$ would also be outside $L(X_{p_1})$. Hence $v_L=\rho_X(s)=\lozenge$, and again $\alpha_{p_1}\in L$. Analogously, $\beta_{p_c}\in R$.

By \eqref{eq:projection-preserves-matches}, the number of indices
$m\in\{1,\dots,c-1\}$ with $\beta_{p_m}=\alpha_{p_{m+1}}\neq\lozenge$ is equal to the number of
indices $m\in\{1,\dots,c-1\}$ with $b_{p_m}=a_{p_{m+1}}\neq\lozenge$.
Moreover, by \cref{def:compressed-opt},
\begin{equation*}
    \copt_{X_i}(\alpha_i,\beta_i)\le \opt_{X_i}(a_i,b_i)
    \qquad\text{for all } i\in[c].
\end{equation*}
Thus the projected tuple sequence is feasible for the compressed \wtsLong instance and has value at most the value of the chosen uncompressed solution. This proves~\eqref{eq:first-ineq-compute-copt}.

For the reverse inequality, if $F_{\mathrm c}=\infty$, there is nothing to prove. So assume that $F_{\mathrm c}$ is finite, and let $(p_1,\dots,p_c)$ together with
tuples $(\alpha_i,\beta_i)\in \mathcal{U}_{X_i}^2$ be an optimal solution attaining~$F_{\mathrm c}$.
For each $i\in[c]$, choose $(a_i,b_i)\in B_G(X_i)^2$ such that
\begin{equation*}
    \psi_i(a_i,b_i)=(\alpha_i,\beta_i)
    \qquad\text{and}\qquad
    \opt_{X_i}(a_i,b_i)=\copt_{X_i}(\alpha_i,\beta_i).
\end{equation*}
Such a choice exists by the definition of $\copt_{X_i}$, since the selected compressed solution has finite value and hence $\copt_{X_i}(\alpha_i,\beta_i)<\infty$ for every $i\in[c]$.

Since $\alpha_{p_1}\in L$, we have $\rho_X(\alpha_{p_1})=v_L$. If $\alpha_{p_1}\neq\lozenge$, then
$a_{p_1}=\alpha_{p_1}$ and hence $\rho_X(a_{p_1})=v_L$. If $\alpha_{p_1}=\lozenge$, then
$a_{p_1}\notin \partial B_G(X_{p_1})$, which implies $a_{p_1}\notin \partial B_G(X)$, since any neighbor outside $L(X)$ would also be outside $L(X_{p_1})$. Hence
$\rho_X(a_{p_1})=\lozenge=v_L$. Thus $a_{p_1}\in \rho_X^{-1}(v_L)\setminus\{\lozenge\}$. By the same argument, $b_{p_c}\in \rho_X^{-1}(v_R)\setminus\{\lozenge\}$.

Again by \eqref{eq:projection-preserves-matches}, the number of indices
$m\in\{1,\dots,c-1\}$ with $b_{p_m}=a_{p_{m+1}}\neq\lozenge$ is equal to the number of indices
$m\in\{1,\dots,c-1\}$ with $\beta_{p_m}=\alpha_{p_{m+1}}\neq\lozenge$.
Therefore, the value of the lifted tuple sequence
\begin{equation*}
    (a_{p_1},b_{p_1}),\dots,(a_{p_c},b_{p_c})
\end{equation*}
is exactly $F_{\mathrm c}$.

This tuple sequence is feasible for
\begin{equation*}
    \texttt{\textup{WTS}}\bigl( B_G(X_1)^2, \dots, B_G(X_c)^2, \opt_{X_1}, \dots, \opt_{X_c}, \{a_{p_1}\}, \{b_{p_c}\} \bigr).
\end{equation*}
By \cref{obs:reduction-to-wts}, this implies
\begin{equation*}
    \opt_X(a_{p_1},b_{p_c})\le F_{\mathrm c}.
\end{equation*}
Since $a_{p_1}\in \rho_X^{-1}(v_L)\setminus\{\lozenge\}$ and
$b_{p_c}\in \rho_X^{-1}(v_R)\setminus\{\lozenge\}$, the pair $(a_{p_1},b_{p_c})$ is admissible in the minimum defining $\copt_X(v_L,v_R)$. Hence
\begin{equation*}
    \copt_X(v_L,v_R)
    \le
    \opt_X(a_{p_1},b_{p_c})
    \le
    F_{\mathrm c}.
\end{equation*}
Combining this with \eqref{eq:first-ineq-compute-copt} proves the claim.
\end{proof}

\section[An Algorithm Parameterized by the Number k of Splits]{An Algorithm Parameterized by the Number \boldmath $k$ of Splits}
\label{sec:splits}

While the restructuring and insights obtained up to now will be useful throughout the paper, we now proceed towards establishing our first result:

\thmsplits*

In our first subsection, we provide

a proof of Theorem~\ref{thm:thmsplits} under the assumption that well-structured instances of \wts can be solved efficiently. To formalize this notion of ``well-structured'', we need a graph representation of \wts instances.

\begin{definition}[Interaction Graph]
The \emph{interaction graph} $H$ of a \wtsLong instance with tuple sets $T_1,\dots,T_d$ and universes $\mathcal U_1,\dots,\mathcal U_d$ is the graph
with vertex set $[d]$ in which two distinct vertices $i,j\in[d]$ are adjacent if and only if
\begin{equation*}
(\mathcal U_i\setminus\{\lozenge\})\cap(\mathcal U_j\setminus\{\lozenge\})\neq\emptyset.
\end{equation*}
\end{definition}

In particular, Theorem~\ref{thm:wts-fpt-pw-plus-max-universe-size} states that \wts can be solved efficiently when parameterized by the pathwidth of the interaction graph $H$ plus the maximum universe size.

\begin{restatable}{theorem}{thmwtspw}
\label{thm:wts-fpt-pw-plus-max-universe-size}
\wtsLong is fixed-parameter tractable parameterized by
$\operatorname{pw}(H)+\max_{i\in[d]} |\mathcal U_i|$. More precisely, let
$p\coloneqq \operatorname{pw}(H)$ and $u\coloneqq \max_{i\in[d]} |\mathcal U_i|$.
Given a nice path decomposition of $H$ of width $p$, one can solve \wtsLong in time
\begin{equation*}
\mathcal O\bigl((p+2)^3\cdot (p+1)!\cdot 2^{5p}\cdot u^{2p+4}\cdot d\bigr).
\end{equation*}
\end{restatable}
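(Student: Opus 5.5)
We will solve \wts by dynamic programming along the given nice path decomposition $(X_1,\dots,X_q)$ of $H$. A solution is a sequence of the $d$ tuple sets (the permutation $p$), each with a chosen tuple $t_i$. Its value is the sum of weights minus the number of \emph{matches}, i.e., consecutive pairs with $\snd(t_{p_j})=\fst(t_{p_{j+1}})\neq\lozenge$. The key structural fact is that a match between $i$ and $j$ requires $(\mathcal U_i\setminus\{\lozenge\})\cap(\mathcal U_j\setminus\{\lozenge\})\neq\emptyset$, i.e., $ij\in E(H)$. Consequently only adjacencies along edges of $H$ matter for the objective.

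We therefore describe a solution by the set of \emph{matched consecutive pairs}. The indices split into maximal chains $C_1,\dots,C_r$, where consecutive elements of a chain are matched. These chains can then be concatenated in any order, subject only to the first chain starting in $\mathcal L$ and the last ending in $\mathcal R$. The value is $\sum_i w_i(t_i)-(d-r)$. Every vertex $i$ lies in a single chain, with at most one matched predecessor and one matched successor, each an $H$-neighbour. So the object to find is a spanning family of vertex-disjoint directed paths in $H$, where each arc $i\to j$ requires $\snd(t_i)=\fst(t_j)\neq\lozenge$. The left/right constraints reduce to one designated chain start and one designated chain end. We can relax ``the first chain starts in $\mathcal L$'' to ``some chain start satisfies $\mathcal L$'', and likewise for $\mathcal R$, and then place those two chains first and last. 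If $r=1$, both conditions must hold for the same chain.

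The DP is the standard path-cover DP on path decompositions, with tuples attached. For each bag $X_t$ the state records:
\begin{itemize}
\item for each $i\in X_t$, its chosen tuple, giving at most $u^2$ choices each, hence $u^{2(p+1)}$ overall;
\item whether $i$ already has its predecessor and whether it already has its successor, which gives the $2^{\mathcal O(p)}$ factor;
\item a pairing/ordering of the open path fragments through the bag, i.e., which endpoint of a partial path is connected to which, accounting for the $(p+1)!$ factor;
\item two bits saying whether the $\mathcal L$-start and the $\mathcal R$-end have been designated.
\end{itemize}
The value stored is the minimum of (weights of forgotten and current vertices) minus (arcs used).

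The transitions are as follows. \emph{Introduce} $i$: choose $t_i$ and add $w_i(t_i)$. \emph{Forget} $i$: before forgetting, optionally add arcs between $i$ and bag neighbours that are compatible in tuples and endpoint status and do not close a cycle, checked via the pairing. At forget time we also decide whether $i$ is an $\mathcal L$-designated start (if it has no predecessor and $\fst(t_i)\in\mathcal L$) or an $\mathcal R$-designated end. Every edge of $H$ appears in some bag, so every arc is considered while both endpoints are present. Cycle avoidance is the reason the fragment connectivity must be tracked.

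The remaining steps are the correctness argument and the counting. Correctness: any solution restricted to the forgotten prefix yields a consistent state, and conversely every final state lifts to a permutation. The count multiplies $u^{2p+4}$ states of tuples (including slack for the designation information), $(p+1)!$ connectivity patterns, $2^{5p}$ status bits, and $(p+2)^3$ transition work per bag over $d$ bags, which matches the claimed bound.

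The main obstacle is the connectivity bookkeeping, namely preventing directed cycles and keeping endpoints consistent while the designated start and end must lie on distinct chains unless $r=1$. I would first try encoding, for each bag vertex that is an open end, a label of which bag vertex is its fragment's other end (if present in the bag) or a flag for ``other end forgotten''. This suffices to detect cycles when joining two ends, and a fragment whose both ends are forgotten is a finished chain.
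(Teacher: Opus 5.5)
Your overall architecture is sound and differs from the paper mostly in bookkeeping details. The path-cover reformulation is the paper's colorful directed linear forest in the compatibility digraph. Inserting arcs when the earlier endpoint is forgotten is valid, because for $ij\in E(H)$ the later-forgotten endpoint is still in the bag at that moment. Your endpoint labels also suffice for cycle detection.

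The genuine gap is the feasibility bookkeeping, which you flag as the main obstacle but do not resolve. Two global bits ``an $\mathcal L$-start has been designated'' and ``an $\mathcal R$-end has been designated'' cannot enforce that these lie on distinct chains unless there is only one chain. Your state also has no way of knowing $r$. For instance, take $T_1=\{(a,b)\}$, $T_2=\{(b,e)\}$, $T_3=\{(c,f)\}$, with all symbols distinct and different from $\lozenge$, and $\mathcal L=\{a\}$, $\mathcal R=\{e\}$. The only feasible order is $(1,3,2)$, with value $w_1+w_2+w_3$. Your DP, however, accepts the chain family $\{1\to 2,\ 3\}$ with $1$ designated as start and $2$ as end, and returns $w_1+w_2+w_3-1$. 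With just $T_1=\{(a,b)\}$, $T_2=\{(c,e)\}$, $\mathcal L=\{a\}$, $\mathcal R=\{b\}$, it even returns $w_1+w_2$ for an infeasible instance.

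Your proposed labels (``other end in the bag, or forgotten'') identify fragments. But they discard exactly the information needed here. They do not record whether a fragment's already-forgotten first tuple has its $\fst$ in $\mathcal L$. Nor do they record whether its forgotten last tuple has its $\snd$ in $\mathcal R$, or how many chains have been completed.

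The missing idea is to store, for each open fragment, these two type bits alongside its open-end bits. Whenever a fragment is completed (none of its vertices remains in the bag), record its type in $\{0,1\}^2$ in counters truncated at $2$. At the end, accept iff one of two conditions holds. Either there is exactly one chain and it has type $(1,1)$. Or there are at least two chains of types other than $(0,0)$, at least one with first bit $1$ and at least one with second bit $1$. This is exactly the role of the paper's per-trace $\data$ bits, the $\forgotten$ counters, and \cref{lem:wts-record-extraction}.

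Relatedly, your running-time paragraph restates the target bound rather than deriving it. The extra $u^2$ in $u^{2p+4}$ comes from choosing the introduced tuple at an introduce node, not from ``slack for designation information''. The $2^{5p}$ factor must also absorb the added per-fragment type bits, and you would need to check that this holds for your encoding.
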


The proof of this statement is non-trivial and deferred to Subsection~\ref{sub:wtsalg}.

It relies on careful dynamic programming along a nice path decomposition and a reformulation of \wts into a connectivity problem on vertex-colored graphs.

\subsection{The Main Part}

Our aim now
is to employ Theorem~\ref{thm:wts-fpt-pw-plus-max-universe-size} to establish Theorem~\ref{thm:thmsplits}. While the required bound on the universe size is already in place, we still need to make sure that the produced instances of \wts have bounded pathwidth.
\looseness=-1
Towards this, let us use the following polynomial-time reduction rule to exhaustively remove multiple occurrences of ``simple'' children of a node in~$T$.

\begin{longredrule}[Duplicate Single-Neighbor Subtrees]\label{red:duplicate-single-neighbor}
Let $(G,T,k)$ be an instance of \prob{}, and let $X$ be a full subtree of $T$ with children
$X_1, \dots, X_c$. If there exist two distinct children $X_i$ and $X_j$ such that
$B_G(X_i) = B_G(X_j) = \{v\}$ for some vertex $v \in V_b$, apply the following modifications:
\begin{enumerate}
    \item Delete the full subtree $X_j$ from $T$.
    \item Remove the leaves $L(X_j)$ from $V_t$ and delete all their incident edges from $E$.
    \item If $c = 2$, the deletion leaves $X$ with exactly one child $X_i$. To maintain the property that internal nodes have minimum degree $2$, contract $X$ by replacing it with $X_i$ in $T$.
\end{enumerate}
\end{longredrule}

\begin{lemma}[Safety of \cref{red:duplicate-single-neighbor}]

\label{lem:safe-red}
\cref{red:duplicate-single-neighbor} is safe. That is, the resulting instance is a positive instance if and only if the original instance is a positive instance.
\end{lemma}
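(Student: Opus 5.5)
We prove both directions directly on solutions, using the structural facts of \cref{lem:consecutive}. Let $(G',T',k)$ denote the reduced instance. Since $B_G(X_j)=\{v\}$, every leaf of $X_j$ has $v$ as its unique neighbour.

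\emph{Forward direction.} Take a solution $(\zeta,\pi_t,\pi_b)$ of $(G,T,k)$ and restrict it to the reduced instance. Deleting $L(X_j)$ and its edges removes only edges incident to descendants of $v$. We drop from $\zeta$ the splits that become redundant, namely those producing descendants whose neighbourhoods become empty (or we simply contract them away). This keeps at most $k$ splits and avoids isolated vertices. Restricting $\pi_t$ to $V_t\setminus L(X_j)$ yields an order in $\mathrm{Adm}_t(T')$: deleting a child subtree preserves admissibility, and in the case $c=2$ contracting $X$ into $X_i$ does as well. A crossing-free drawing stays crossing-free under vertex and edge deletion. Hence the reduced instance is positive.

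\emph{Backward direction.} Take a solution $(\zeta',\pi_t',\pi_b')$ of the reduced instance. Since $B_{G'}(X_i)=\{v\}$ in the reduced instance as well, all edges from $L(X_i)$ go to descendants of $v$. By \cref{lem:consecutive} (Statement~1), $B_{G''}(X_i)$ is a contiguous interval of descendants of $v$, where $G''$ is the split graph.

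The plan is to first normalize the solution so that $B_{G''}(X_i)$ is a single vertex $v^*$. If it consists of several descendants of $v$, we merge all of them into one vertex. Merging consecutive bottom vertices keeps the drawing crossing-free: any edge crossing a merged edge would already cross an original edge to some vertex in the interval, because an edge from outside $L(X_i)$ into the interior of the interval would violate \cref{lem:consecutive}. Merging also does not increase the number of splits. This merge is the step that needs the most care. We must check that interior vertices of the interval are adjacent only to $L(X_i)$, which follows from Statement~2 of \cref{lem:consecutive} applied to $X_i$ and each leaf outside $X_i$. Endpoint vertices may have outside neighbours, and merging them with the rest is still consistent with the planar order. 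Alternatively, we only reassign the edges of $L(X_i)$ to one endpoint vertex $v^*$ and delete the emptied descendants, which decreases the split count.

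Once $B_{G''}(X_i)=\{v^*\}$, we reinsert $X_j$. In $\pi_t$ we place $L(X_j)$, ordered by any order in $\mathrm{Adm}_t(X_j)$, immediately after the block $L(X_i)$; in the case $c=2$ we first undo the contraction, so that $X$ becomes the parent of $X_i$ and $X_j$. The resulting order is admissible for $T$, because the blocks $L(X_i)$ and $L(X_j)$ are adjacent siblings under $X$ and every other subtree block remains contiguous. We connect every leaf of $X_j$ to $v^*$. The new edges form a fan at the single bottom vertex $v^*$ from top vertices adjacent to the block of $L(X_i)$, whose edges all end at $v^*$. An edge $us$ with $u\notin L(X_i)\cup L(X_j)$ lies entirely on one side, both in $\pi_t$ and relative to $v^*$ in $\pi_b$, exactly as the edges of $L(X_i)$ do, so no crossing is created. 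No new split is needed, so the solution uses at most $k$ splits.
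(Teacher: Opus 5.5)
Your proof is correct, but your backward direction takes a detour the paper avoids. The paper does not normalize the solution. It takes the rightmost leaf $u$ of the block $L(X_i)$ in $\pi_t^-$. Since $u$ has degree $1$, it has a unique neighbour $v^*$ in the split graph. The paper attaches all of $L(X_j)$ to $v^*$ and inserts $L(X_j)$ immediately after $u$. The new leaves are then effectively clones of $u$, so any edge crossing one of them would already cross $uv^*$.

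You instead first merge all of $B_{G''}(X_i)$ into a single descendant $v^*$ of $v$. This is valid (and may even save splits), but it is extra work; what it buys is a normalized solution in which $X_i$ has a single bottom vertex, which the lemma does not need.

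Your justification of the merge is also more involved than necessary. Merging any set of bottom vertices that are consecutive in $\pi_b$ never creates a crossing. A bottom vertex outside a contiguous interval lies on the same side of every member of that interval. Hence the crossing condition for an edge to the merged vertex is identical to the old one. The check via Statement~2 of \cref{lem:consecutive} that interior vertices have no outside neighbours is therefore not required.

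Your forward direction, on the other hand, is slightly more careful than the paper's. The paper simply reuses the split sequence and does not address that some descendants of $v$ may become isolated once $L(X_j)$ is deleted. Your pruning of those splits keeps the standing no-isolated-vertex assumption intact.
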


\begin{proof}
    Let $(G^-, T^-, k)$ be the reduced instance obtained from $(G, T, k)$ by deleting $X_j$ and $L(X_j)$.

    $(\Rightarrow)$ Suppose $(G, T, k)$ is a positive instance, witnessed by a graph $G'$ obtained via $\le k$ splits, and orders $\pi_t \in \mathrm{Adm}_t(T)$, $\pi_b$ on $V_b(G')$ yielding a crossing-free drawing. Applying the exact same sequence of splits to $G^-$ yields a valid split graph $G'^-$. Restricting $\pi_t$ to $V_t \setminus L(X_j)$ directly yields an order in $\mathrm{Adm}_t(T^-)$. Because deleting vertices and edges cannot introduce crossings, the restricted drawing with $\pi_b$ is crossing-free. Thus, $(G^-, T^-, k)$ is a positive instance.

    $(\Leftarrow)$ Suppose $(G^-, T^-, k)$ is a positive instance, witnessed by a split graph $G'^-$, and a crossing-free drawing with orders $\pi^-_t \in \mathrm{Adm}_t(T^-)$ and $\pi^-_b$ on $V_b(G'^-)$. Because $\pi^-_t \in \mathrm{Adm}_t(T^-)$, the leaves $L(X_i)$ form a contiguous block in $\pi^-_t$. Let $u \in L(X_i)$ be the rightmost leaf in this block, and let $v^* \in V_b(G'^-)$ be the specific descendant of $v$ adjacent to $u$.

    We construct a solution for $(G, T, k)$ by applying the exact same sequence of splits to $G$ to form $G'$. When splitting $v$, we partition its neighborhood such that all leaves in $L(X_j)$ are adjacent to $v^*$. We construct $\pi_t \in \mathrm{Adm}_t(T)$ by taking $\pi^-_t$ and inserting an arbitrary order from $\mathrm{Adm}_t(X_j)$ immediately after the block $L(X_i)$. This operation is admissible because $X_i$ and $X_j$ are children of $X$ in $T$. Setting $\pi_b = \pi^-_b$ completes the drawing of $G'$.

    Because $N_{G'}(L(X_j)) = \{v^*\}$ and the block $L(X_j)$ is placed immediately adjacent to $u$ in $\pi_t$ (which is also adjacent to $v^*$), no top vertices interleave them. Consequently, no edges incident to $L(X_j)$ can cross any existing edges in the drawing. Thus, $(G, T, k)$ is a positive instance.
\end{proof}

\looseness=-1
Next, we will show that after the exhaustive application of \cref{red:duplicate-single-neighbor}, the interactions between subtrees in relevant instances must be ``well-structured''; this will later allow us to generate instances of \wts of bounded pathwidth, for which Theorem~\ref{thm:wts-fpt-pw-plus-max-universe-size} can be called. Towards formalizing this, we introduce an auxiliary graph capturing how subtrees interact with each other.

\begin{definition}[Subtree Intersection Graph]\label{def:intersection-graph}
For every full subtree $X$ of $T$ with children $X_1, \dots, X_c$, we define the
\emph{subtree intersection graph} $\mathcal{I}_G(X)$ as the graph with vertex set $[c]$ in which
two distinct vertices $i,j \in [c]$ are adjacent if and only if
$
    B_G(X_i) \cap B_G(X_j) \neq \emptyset.
$
\end{definition}

We note that the subtree intersection graph is closely related to the interaction graph introduced at the beginning of this section:

\begin{observation}[Intersection Equivalence]\label{obs:intersection-equiv}
Let $(G,T,k)$ be an instance of \prob{}, and let $X$ be a full subtree of $T$ with children
$X_1, \dots, X_c$.

For any pair of distinct children $X_i$ and $X_j$, their boundaries intersect
if and only if their bottom vertex sets intersect:
\begin{equation*}
    \partial B_G(X_i) \cap \partial B_G(X_j) \neq \emptyset
    \iff
    B_G(X_i) \cap B_G(X_j) \neq \emptyset.
\end{equation*}
Consequently, the

interaction graph $H$ of the \wts instance obtained from
\cref{lem:compute-copt} in the child order $X_1,\dots,X_c$ equals
the subtree intersection graph $\mathcal{I}_G(X)$.
\end{observation}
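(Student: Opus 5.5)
The equivalence follows directly from the Sibling Intersection property of \cref{lem:boundary-properties}, which gives $B_G(X_i)\cap B_G(X_j)=\partial B_G(X_i)\cap \partial B_G(X_j)$ for distinct children $X_i,X_j$ of $X$. One set is empty exactly when the other is, so the first claim is immediate. For completeness we would also note the trivial inclusion: $\partial B_G(X_i)\subseteq B_G(X_i)$, so a nonempty boundary intersection yields a nonempty bottom intersection. The converse is the nontrivial direction already handled in that observation. A common neighbour $v$ of leaves in $L(X_i)$ and $L(X_j)$ has a neighbour outside each of the two disjoint leaf sets, so $v$ lies in both boundaries.

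For the consequence, we unfold the definition of the interaction graph for the instance produced by \cref{lem:compute-copt}. There the $i$-th tuple set is $\mathcal U_{X_i}^2$ with $\mathcal U_{X_i}=\partial B_G(X_i)\cup\{\lozenge\}$. The universe of this tuple set, i.e.\ the set of all first and second coordinates, is therefore exactly $\mathcal U_{X_i}$. This uses that $\mathcal U_{X_i}$ is nonempty, which holds since it contains $\lozenge$. Removing $\lozenge$ leaves $\partial B_G(X_i)$, since $\lozenge$ is a fresh symbol not in $V_b$. Hence vertices $i\neq j$ are adjacent in $H$ if and only if $\partial B_G(X_i)\cap\partial B_G(X_j)\neq\emptyset$. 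By the first part, this holds if and only if $B_G(X_i)\cap B_G(X_j)\neq\emptyset$, which is exactly adjacency in $\mathcal I_G(X)$ from \cref{def:intersection-graph}. Both graphs have vertex set $[c]$ under the same indexing by the child order, so they coincide.

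There is no real obstacle here. The only point needing care is checking that the universe $\mathcal U_i$ of the \wts instance equals $\mathcal U_{X_i}$ rather than some subset. This holds because the tuple set is the full square $\mathcal U_{X_i}^2$, so every element occurs as a coordinate. It does not depend on which tuples receive finite weight.
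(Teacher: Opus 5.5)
Your proposal is correct and follows the same route the paper relies on; the paper states this observation without a separate proof. The equivalence is exactly the Sibling Intersection property of \cref{lem:boundary-properties}, and the consequence follows by unfolding the interaction-graph definition with universes $\mathcal U_{X_i}=\partial B_G(X_i)\cup\{\lozenge\}$, including the detail you checked that the universe of $\mathcal U_{X_i}^2$ is all of $\mathcal U_{X_i}$ and that $\lozenge\notin V_b$.
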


Instead of showing that the subtree intersection graph has bounded pathwidth,
we establish an even stronger property which will be important for achieving
our running time bound (see \cref{fig:interval_structure}): they must be
``structurally similar'' to interval graphs with small cliques.

\begin{figure}
    \centering
    \includegraphics[page=5]{figures.pdf}
\caption{Left: An instance of \prob\ with graph $G$ and tree $T$, with
subtree intersection graph $\mathcal{I}_G(T)$ shown below; note that
\cref{red:duplicate-single-neighbor} is not applicable. Right: A solution
where $b$ is split into $b_1, b_2$, producing $G'$, with $\mathcal{I}_{G'}(T)$
shown via its interval representation, the key structural property underlying
\cref{lem:interval-yes-instance}.}
    \label{fig:interval_structure}
\end{figure}

\begin{lemma}[Interval structure of positive instances]

\label{lem:interval-yes-instance}
Let $(G,T,k)$ be a positive instance of \prob{} exhaustively reduced under
\cref{red:duplicate-single-neighbor}. For every full subtree $X$ of $T$ with children
$X_1, \dots, X_c$, there exists a set $\mathcal{C} \subseteq [c]$ of size at most $6k$ such that
$\mathcal{I}_G(X)-\mathcal{C}$ is an interval graph with $\omega(\mathcal{I}_G(X)-\mathcal{C}) \le 3$.
\end{lemma}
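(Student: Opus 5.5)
Fix a solution $(\zeta,\pi_t,\pi_b)$ producing $G'$, and let $S\subseteq V_b$ be the set of original bottom vertices split by $\zeta$; then $|S|\le k$. The plan is to delete every child whose bottom set meets a split vertex \emph{in a delicate way}. We then show that the remaining children have an interval representation read directly off $\pi_b$, with cliques of size at most $3$.

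\textbf{Choosing $\mathcal C$.} By \cref{lem:consecutive}, each $B_{G'}(X_i)$ spans an interval $J_i$ of $\pi_b$. Distinct $J_i$ share at most one vertex, which is an endpoint of both. Each split vertex $s\in S$ has at most $k+1$ descendants in total over all of $S$, but we only want a budget of $6k$. So I would charge differently. Every descendant $s'$ of some $s\in S$ lies in intervals $J_i$ that are pairwise disjoint except at shared endpoints. Hence at most two non-singleton intervals contain $s'$, namely the one ending at $s'$ and the one starting at $s'$. The singleton intervals $J_i=\{s'\}$ come from children $X_i$ with $B_{G'}(X_i)=\{s'\}$.

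The key worry is singleton children. If $B_{G'}(X_i)=\{s'\}$ with $s'$ a descendant of $s$, then $B_G(X_i)=\{s\}$. By exhaustiveness of \cref{red:duplicate-single-neighbor}, at most one child of $X$ has bottom set exactly $\{s\}$ for each $s$. Thus for each $s\in S$ we put into $\mathcal C$ the at most one singleton child with bottom set $\{s\}$. We also put in the children whose intervals contain the leftmost and rightmost descendants of $s$ as interior or boundary points. The total count must come to at most $6$ per split vertex. I expect that a split into $r+1$ pieces contributes $r$ to $k$, so per unit of $k$ there are at most $2$ new descendant vertices, each touched by at most $2$ non-singleton intervals and at most $1$ singleton interval. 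That gives $\le 6$ per split, hence $|\mathcal C|\le 6k$. Concretely, I would let $\mathcal C$ be all children $i$ with $B_G(X_i)\cap S\neq\emptyset$. The bound on this set follows from the count above: each child containing a descendant of $s$ has an interval containing that descendant, and such intervals number at most $3$ per descendant.

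\textbf{Interval structure.} For $i\notin\mathcal C$, no vertex of $B_G(X_i)$ is split, so $B_{G'}(X_i)=B_G(X_i)$. For $i,j\notin\mathcal C$, we have $B_G(X_i)\cap B_G(X_j)\neq\emptyset$ iff $J_i\cap J_j\neq\emptyset$. Therefore $\mathcal I_G(X)-\mathcal C$ is exactly the intersection graph of the intervals $\{J_i\}_{i\notin\mathcal C}$ on $\pi_b$, and so it is an interval graph.

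\textbf{Clique bound.} By the Helly property for intervals, a clique corresponds to a common point $x$. All intervals containing $x$ are at most one ending at $x$, at most one starting at $x$, and singletons $\{x\}$ (as in the proof of \cref{lem:helper}). Since $x$ is unsplit for these children, every singleton has $B_G(X_i)=\{x\}$. By \cref{red:duplicate-single-neighbor} there is at most one such singleton, so $\omega\le 3$.

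\textbf{Main obstacle.} The hard step is bounding the number of children touching split vertices by $6k$. This requires tight bookkeeping: descendants per split, at most two non-singleton intervals per descendant, and the reduction rule limiting singletons to one per original vertex. The singleton bound is subtle, since several singleton children with bottom $\{s\}$ in $G$ are excluded by the rule. The non-singleton count also needs care when one interval contains several descendants. Such an interval should be charged only once, which only helps.
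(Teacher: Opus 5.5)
Your proposal is correct and follows essentially the same route as the paper. It uses the same deletion set $\mathcal C$ (children whose bottom set meets a split vertex) and the same bound $|\mathcal C|\le 3\cdot 2k$: there are at most $2k$ final descendants, and each lies in at most two non-singleton intervals and at most one singleton interval (the latter by \cref{red:duplicate-single-neighbor}). It also makes the same observation that surviving children keep their bottom sets and therefore induce an interval intersection graph on $\pi_b$. The one small difference is that you check $\omega\le 3$ directly on the surviving children, whereas the paper first proves it for $\mathcal I_{G'}(X)$ and then passes to an induced subgraph. The exploratory remarks before your concrete choice of $\mathcal C$ can simply be dropped.
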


\begin{proof}
Because $(G,T,k)$ is a positive instance, there exists a sequence of at most $k$ vertex splits in
$V_b$ producing a graph $G'$, along with total orders $\pi_t \in \mathrm{Adm}_t(T)$ and $\pi_b$ on
$V_b(G')$ such that the two-layer drawing of $G'$ with $(\pi_t, \pi_b)$ is crossing-free.

Let $\mathcal{I}_{G'}(X)$ be the graph with vertex set $[c]$ in which distinct vertices
$i,j \in [c]$ are adjacent if and only if
\begin{equation*}
B_{G'}(X_i) \cap B_{G'}(X_j) \neq \emptyset.
\end{equation*}
By \cref{lem:consecutive}, the sets $B_{G'}(X_i)$ form contiguous intervals on the bottom order
$\pi_b$ that overlap at most at their endpoints. Consequently, any vertex $z \in V_b(G')$ can
belong to at most two sets $B_{G'}(X_i)$ that have size at least $2$.

Additionally, there is at most one child $X_i$ with $B_{G'}(X_i)=\{z\}$. Indeed, let
$v\in V_b(G)$ be the ancestor of $z$. If $B_{G'}(X_i)=\{z\}$, then $B_G(X_i)=\{v\}$; otherwise an
edge from $L(X_i)$ to a bottom vertex with ancestor different from $v$ would still yield a bottom
vertex of $B_{G'}(X_i)$ different from $z$. Thus, two distinct children $X_i$ and $X_j$ with
$B_{G'}(X_i)=B_{G'}(X_j)=\{z\}$ would satisfy
$B_G(X_i)=B_G(X_j)=\{v\}$, contradicting the exhaustive application of
\cref{red:duplicate-single-neighbor}.

Thus, any vertex $z$ belongs to at most $2+1=3$ such sets. This implies that
$\mathcal{I}_{G'}(X)$ is an interval graph with $\omega(\mathcal{I}_{G'}(X)) \le 3$.

We now identify a deletion set corresponding to the splits. Let $S \subseteq V_b$ be the set of
original vertices that undergo a split to form $G'$. Since at most $k$ splits are performed,
$|S| \le k$. In $G'$, let $S'$ be the set of final descendants of vertices in $S$. Since each split
increases the number of descendants by one, we have $|S'| \le |S|+k \le 2k$.

Let $\mathcal{C} \subseteq [c]$ be the set of indices for children whose bottom set
in $G'$ contains at least one split descendant, that is, $B_{G'}(X_i) \cap S' \neq \emptyset.$
Because each vertex in $S'$ appears in at most $3$ sets, we have $|\mathcal{C}| \le 3|S'| \le 6k.$

For every $j \in [c] \setminus \mathcal{C}$, the set $B_G(X_j)$ contains no split vertex and hence
$B_G(X_j)=B_{G'}(X_j)$. Therefore, $\mathcal{I}_G(X)-\mathcal{C}$ is an induced subgraph of
$\mathcal{I}_{G'}(X)$. In particular, $\mathcal{I}_G(X)-\mathcal{C}$ is an interval graph. Finally, since $\omega(\mathcal{I}_{G'}(X)) \le 3$ and $\mathcal{I}_G(X)-\mathcal{C}$ is an induced subgraph of $\mathcal{I}_{G'}(X)$, we have $\omega(\mathcal{I}_G(X)-\mathcal{C}) \le 3$.
\end{proof}

To exploit \cref{lem:interval-yes-instance} algorithmically, we will apply Cao's algorithm~\cite{cao2016linear} to compute a vertex deletion set to an interval graph; the following observation implies that this computed deletion set also yields an interval graph with small cliques.

\begin{observation}

\label{lem:clique-transfer}
Let $H$ be a graph, let $S^\star \subseteq V(H)$, and let $c \in \mathbb N$.
Suppose that $\omega(H-S^\star) \le c$.
Then for every set $S \subseteq V(H)$, we have
$
    \omega(H-S) \le |S^\star| + c.
$
\end{observation}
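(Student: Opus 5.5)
The plan is to decompose any clique of $H-S$ according to whether its vertices lie in $S^\star$, and then use that cliques pass to induced subgraphs.

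Let $K$ be a maximum clique in $H-S$, so $|K|=\omega(H-S)$. Split it as $K=(K\cap S^\star)\cup(K\setminus S^\star)$.

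The first part has size at most $|S^\star|$ trivially.

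For the second part, note that $K\setminus S^\star$ is a clique in $H$ that avoids $S^\star$, so it is a clique of $H-S^\star$. This uses only that any subset of a clique is a clique, and that $H-S^\star$ is the subgraph of $H$ induced by $V(H)\setminus S^\star$. Hence $|K\setminus S^\star|\le\omega(H-S^\star)\le c$.

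Summing the two bounds gives
\begin{equation*}
\omega(H-S)=|K|\le|S^\star|+c.
\end{equation*}

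The argument never uses the hypothesis that $K$ avoids $S$, so the bound in fact holds for every set $S$, including $S=\emptyset$. There is no genuine obstacle here. The only point that needs checking is that removing $S^\star$ from a clique of $H-S$ leaves a clique of $H-S^\star$, which holds because induced subgraphs preserve adjacency.
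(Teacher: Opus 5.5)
Your proposal is correct and follows the paper's proof: you split a clique of $H-S$ into its part inside $S^\star$ (at most $|S^\star|$ vertices) and its part outside $S^\star$, which is a clique of $H-S^\star$ and so has at most $c$ vertices. Your remark that the bound holds for every clique of $H$, and hence for any $S$, is also accurate.
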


\begin{proof}
Let $Q$ be a clique in $H-S$. Then $Q\setminus S^\star$ is a clique in $H-S^\star$, and therefore
$|Q\setminus S^\star|\le c$. Moreover, $|Q\cap S^\star|\le |S^\star|$. Hence
\begin{equation*}
    |Q|=|Q\setminus S^\star|+|Q\cap S^\star|\le c+|S^\star|.
\end{equation*}
Since this holds for every clique $Q$ in $H-S$, the claim follows.
\end{proof}

Now that we have established that subtree intersection graphs are well-structured, our aim is to exploit this to perform efficient dynamic programming. \cref{lem:interval-yes-instance} yields a bound on the pathwidth of these graphs, which is a measure that is well-suited for our intended dynamic programming routine---however, the best general-purpose algorithm for computing a width-$k$ path decomposition requires time $\mathcal{O}^*(2^{k^2})$~\cite{DBLP:conf/iwoca/Furer16} (see also~\cite{KantePT22}). In our setting, we can obtain a significantly faster running time by exploiting the structural insights from Lemma~\ref{lem:interval-yes-instance} and Cao's single-exponential algorithm for vertex deletion to interval graphs~\cite{cao2016linear}, whereas~\cref{lem:clique-transfer} allows us to use the computed vertex deletion set to obtain a bounded-width path decomposition.

\begin{lemma}[Computing path decompositions of subtree intersection graphs]

\label{lem:compute-intersection-pathdecomps}
Let $(G,T,k)$ be an instance of \prob{} exhaustively reduced under
\cref{red:duplicate-single-neighbor}, and suppose that the boundary sets
$\partial B_G(X)$ have been computed for all full subtrees $X$ of $T$ and satisfy
$|\partial B_G(X)|\le k+2$ for every full subtree $X$.
One can, in time
$\mathcal O\bigl(2^{\mathcal O(k)}\cdot |V_t|\bigr)$
either correctly reject the instance, or compute, for every non-leaf full subtree $X$ of $T$, a
nice path decomposition of $\mathcal I_G(X)$ of width at most $12k+2$.
\end{lemma}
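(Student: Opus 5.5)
For each non-leaf full subtree $X$ with children $X_1,\dots,X_c$, we first construct $\mathcal I_G(X)$ explicitly. By \cref{obs:intersection-equiv} together with the Sibling Intersection property of \cref{lem:boundary-properties}, two children are adjacent if and only if their boundaries intersect. The construction therefore only has to inspect the sets $\partial B_G(X_i)$, each of size at most $k+2$. We bucket the children by boundary vertex: for each $v\in\bigcup_i\partial B_G(X_i)$ we list the children whose boundary contains $v$. Every pair inside a bucket is an edge.

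To keep this cheap we bound bucket sizes first. If $(G,T,k)$ is positive, \cref{lem:interval-yes-instance} gives a set $\mathcal C$ with $|\mathcal C|\le 6k$ such that $\omega(\mathcal I_G(X)-\mathcal C)\le 3$. A bucket is a clique, so every bucket has size at most $6k+3$. If some bucket is larger, we reject. Otherwise the number of edges is $\mathcal O(c\cdot (k+2)\cdot (6k+3))=\mathcal O(k^2c)$, and the graph is built in $\mathcal O(k^2 c)$ time. Alternatively, by \cref{lem:clique-transfer} with $S=\emptyset$, positivity gives $\omega(\mathcal I_G(X))\le 6k+3$, which yields the same bound on bucket sizes.

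Next we run Cao's algorithm~\cite{cao2016linear} for \textsc{Interval Vertex Deletion} with budget $6k$ on $\mathcal I_G(X)$. It runs in $2^{\mathcal O(k)}\cdot(|V|+|E|)=2^{\mathcal O(k)}\cdot c$ time. If it reports that no deletion set of size at most $6k$ exists, \cref{lem:interval-yes-instance} lets us reject. Otherwise it returns a set $S$ with $|S|\le 6k$ such that $\mathcal I_G(X)-S$ is an interval graph.

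By \cref{lem:clique-transfer}, applied with the set $\mathcal C$ from \cref{lem:interval-yes-instance} (which exists whenever the instance is positive), we get $\omega(\mathcal I_G(X)-S)\le 6k+3$. This needs some care: if the instance is negative, the bound may fail. We handle this by computing $\omega$ of the interval graph $\mathcal I_G(X)-S$ directly, which takes linear time via an interval model, and rejecting if it exceeds $6k+3$. This is correct because positivity implies the bound.

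From an interval model of $\mathcal I_G(X)-S$, obtained in linear time by standard recognition, sweeping the endpoints yields a path decomposition whose bags are the intervals containing each point. Its width is $\omega-1\le 6k+2$. We then add $S$ to every bag, giving width at most $6k+2+6k=12k+2$. The result is a valid path decomposition, since $S$ covers all edges incident to $S$ and occupies every bag contiguously. Finally we convert it into a nice path decomposition in linear time. Its size is $\mathcal O(kc)$, because the sweep produces $\mathcal O(c)$ events and the vertices of $S$ are introduced and forgotten once each.

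Summing over all internal nodes and using $\sum_X c_X\le |V(T)|=\mathcal O(|V_t|)$ gives $2^{\mathcal O(k)}\cdot|V_t|$ total time. The main obstacle is keeping the construction of $\mathcal I_G(X)$ linear in $c$ rather than quadratic. The clique-size rejection handles this, since it caps every bucket before any edges are generated.
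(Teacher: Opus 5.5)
Your proposal is correct and follows essentially the same route as the paper's proof. Both bucket the children by shared boundary vertex and reject if a bucket exceeds $6k+3$. Both run Cao's algorithm with budget $6k$, reject if the interval part $\mathcal I_G(X)-S$ has clique number above $6k+3$ (justified via \cref{lem:interval-yes-instance,lem:clique-transfer}), and add $S$ to every bag of the interval path decomposition to reach width $12k+2$. Sweeping interval endpoints instead of listing maximal cliques, and computing $\omega$ directly instead of checking bag sizes, are equivalent variants.
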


\begin{proof}
Let $X$ be a full subtree of $T$ with children $X_1,\dots,X_c$. For every vertex
$v\in \bigcup_{i=1}^c \partial B_G(X_i)$, define
\begin{equation*}
    C_v\coloneqq \{\,i\in[c]\mid v\in \partial B_G(X_i)\,\}.
\end{equation*}
By \cref{obs:intersection-equiv}, the graph $\mathcal I_G(X)$ is obtained by making every set
$C_v$ a clique.

If $|C_v|>6k+3$ for some $v$, then $C_v$ is a clique of $\mathcal I_G(X)$ and we reject.
Indeed, in a positive reduced instance, \cref{lem:interval-yes-instance} gives a set
$S^\star$ of size at most $6k$ with $\omega(\mathcal I_G(X)-S^\star)\le 3$, and then
\cref{lem:clique-transfer} with $S=\emptyset$ gives $\omega(\mathcal I_G(X))\le 6k+3$.
Hence, after this test,
\begin{equation*}
    |E(\mathcal I_G(X))|
    \le \sum_v |C_v|^2
    \le (6k+3)\sum_v |C_v|
    \le (6k+3)c(k+2)
    =
    \mathcal O(k^2c).
\end{equation*}
Thus $\mathcal I_G(X)$ can be constructed in time $\mathcal O(k^2c)$.

Next apply Cao's $\mathcal O(6^\ell(n+m))$-time algorithm for deciding whether at most $\ell$
vertices can be deleted to obtain an interval graph~\cite{cao2016linear}, with $\ell=6k$, to either
find a set $S\subseteq V(\mathcal I_G(X))$ of size at most $6k$ such that
$\mathcal I_G(X)-S$ is an interval graph, or conclude that no such set exists. In the latter case
we reject, which is sound by \cref{lem:interval-yes-instance}. This step takes time
\begin{equation*}
    \mathcal O\bigl(6^{6k}\cdot (c+|E(\mathcal I_G(X))|)\bigr).
\end{equation*}

Otherwise, compute an interval representation of $\mathcal I_G(X)-S$ and list its maximal cliques
in their natural order. By~\cite{golumbic2004algorithmic}, these maximal cliques form a \emph{path decomposition}~\cite{CyganFKLMPPS15} of
$\mathcal I_G(X)-S$, whereas it is well-known that such a path decomposition can be transformed into a nice one in linear time. If some bag has size greater than $6k+3$, then
we reject. Indeed, in a positive reduced instance, \cref{lem:interval-yes-instance} gives a set
$S^\star$ of size at most $6k$ with $\omega(\mathcal I_G(X)-S^\star)\le 3$, and
\cref{lem:clique-transfer} applied to the current set $S$ gives
$\omega(\mathcal I_G(X)-S)\le 6k+3$.

Finally, add all vertices of $S$ to every bag. Since $|S|\le 6k$ and all remaining bags have size at
most $6k+3$, this yields a path decomposition of $\mathcal I_G(X)$ (which can once again be made nice in linear time) of width at most
\begin{equation*}
    (6k+3)+6k-1=12k+2.
\end{equation*}

For the node $X$, the running time is
\begin{equation*}
    \mathcal O(k^2c)
    +
    \mathcal O\bigl(6^{6k}\cdot (c+|E(\mathcal I_G(X))|)\bigr)
    =
    \mathcal O\bigl(2^{\mathcal O(k)}\cdot c\bigr),
\end{equation*}
where we use $|E(\mathcal I_G(X))|=\mathcal O(k^2c)$ and absorb polynomial factors in $k$ into
$2^{\mathcal O(k)}$. Since the sum of the numbers of children over all internal nodes of $T$ is
$\mathcal O(|V_t|)$, the total time is
\begin{equation*}
    \mathcal O\bigl(2^{\mathcal O(k)}\cdot |V_t|\bigr). \qedhere
\end{equation*}
\end{proof}

The final component we need to assemble a proof of \cref{thm:thmsplits} is a dynamic programming algorithm that employs the path decompositions constructed in \cref{lem:compute-intersection-pathdecomps}.

This result is technical and non-trivial; its proof relies on a reformulation of \wts into a graph problem and is deferred to Subsection~\ref{sub:wtsalg}.

\begin{restatable}{lemma}{lempwdp}

\label{lem:fpt-by-k-given-pathdecomp}
Let $(G,T,k)$ be an instance of \prob. Suppose that for every non-leaf full subtree $X$ of $T$, a path
decomposition of the subtree intersection graph $\mathcal I_G(X)$ of width at most $12k+2$ is given.
Then one can decide whether $(G,T,k)$ is a positive instance in time
\begin{equation*}
    \mathcal O\bigl(2^{\mathcal O(k\log k)}\cdot |V_t| + |E|\bigr).
\end{equation*}

\end{restatable}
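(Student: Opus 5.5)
We plan a bottom-up dynamic program over $T$ that computes the compressed table $\copt_X$ for every full subtree $X$. The tables are indexed by $(\partial B_G(X)\cup\{\lozenge\})^2$ and therefore have at most $(k+3)^2$ entries. First we run \cref{lem:compute-boundaries} to obtain all boundaries in time $\mathcal O(|E|+k|V_t|)$, or to reject the instance.

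\textbf{Leaves.} For a leaf $r$, $\copt_r(v_L,v_R)$ follows directly from \cref{def:compressed-opt} and the leaf case of \cref{def:opt_star}:
\begin{itemize}
\item entries with distinct endpoints have value $|N_G(r)|$ whenever the corresponding preimages are nonempty;
\item entries with equal endpoints have value $1$ or $\infty$.
\end{itemize}
The only subtlety is the entry $(\lozenge,\lozenge)$. Here distinct preimage vertices $v'_L\ne v'_R$ are allowed whenever at least two non-boundary neighbours exist. The work per leaf is $\mathcal O(|N_G(r)|+k^2)$, since we only need counts of non-boundary neighbours. Summed over all leaves this is $\mathcal O(|E|+k^2|V_t|)$.

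\textbf{Internal nodes.} For an internal node $X$ with children $X_1,\dots,X_c$, we compute each of the $(k+3)^2$ entries $\copt_X(v_L,v_R)$ by solving the \wts instance of \cref{lem:compute-copt}.
\begin{itemize}
\item By \cref{obs:intersection-equiv}, its interaction graph is exactly $\mathcal I_G(X)$, so the given path decomposition of width at most $12k+2$ applies, made nice in linear time.
\item Its universes have size $u\le k+3$ by the boundary bound.
\end{itemize}
\cref{thm:wts-fpt-pw-plus-max-universe-size} with $p\le 12k+2$ and $u\le k+3$ then costs
\[
(p+2)^3\,(p+1)!\,2^{5p}\,u^{2p+4}\cdot c \;=\; 2^{\mathcal O(k\log k)}\cdot c ,
\]
because $(12k+3)!$ and $(k+3)^{24k+8}$ are both $2^{\mathcal O(k\log k)}$. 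Multiplying by the $(k+3)^2$ entries and summing over internal nodes, where $\sum c=\mathcal O(|V_t|)$ because $T$ has no unary nodes, gives $2^{\mathcal O(k\log k)}|V_t|$.

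Before each call we must construct the \wts instance: the left and right sets $\rho_X^{-1}(v_L)\cap\mathcal U_X$, and the weight functions $\copt_{X_i}$ restricted to $\mathcal U_{X_i}^2$. This takes $\mathcal O(k^2c)$ time from the stored tables and boundaries.

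\textbf{Answer at the root.} At the root $T$ we have $\partial B_G(T)=\emptyset$, so the only entry is $\copt_T(\lozenge,\lozenge)=\min_{v_L,v_R}\opt_T(v_L,v_R)$ by \cref{def:compressed-opt}. By \cref{lem:opt_equals_optstar}, the instance is positive iff $\copt_T(\lozenge,\lozenge)-|B_G(T)|\le k$, and $|B_G(T)|=|V_b|$ is available in $\mathcal O(|E|)$ time.

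\textbf{Main obstacle.} The main thing to get right is the running-time accounting, not correctness, since correctness reduces immediately to \cref{lem:compute-copt} and \cref{lem:opt_equals_optstar}. Two points need care:
\begin{itemize}
\item the polynomial factors in $u$ and $p$ from \cref{thm:wts-fpt-pw-plus-max-universe-size} must be absorbed into $2^{\mathcal O(k\log k)}$;
\item all per-node work must be linear in $c$, with the edge-dependent work confined to the leaves and to boundary computation.
\end{itemize}
Together these yield the claimed $\mathcal O(2^{\mathcal O(k\log k)}|V_t|+|E|)$ bound.
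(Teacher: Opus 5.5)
Your proposal is correct and follows the paper's proof essentially step for step. You compute boundaries via \cref{lem:compute-boundaries}, fill the leaf tables directly, and solve one \wts instance from \cref{lem:compute-copt} per table entry using \cref{obs:intersection-equiv} and \cref{thm:wts-fpt-pw-plus-max-universe-size} with $d=c$, $p\le 12k+2$, $u\le k+3$; you then sum $\sum_X c=\mathcal O(|V_t|)$ and decide at the root via $\partial B_G(T)=\emptyset$ and \cref{lem:opt_equals_optstar}. Your extra remarks on making the decomposition nice, on the $(\lozenge,\lozenge)$ leaf entry, and on the cost of assembling each instance only make explicit details the paper leaves implicit.
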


With this, it is possible to establish \cref{thm:thmsplits} by exhaustively applying \cref{red:duplicate-single-neighbor}, then computing the path decompositions of the subtree intersection graphs as explained in \cref{lem:compute-intersection-pathdecomps}, and then calling \cref{lem:fpt-by-k-given-pathdecomp}.

\begin{proof}[Proof of Theorem~\ref{thm:thmsplits}]
Given an instance $(G,T,k)$, we first exhaustively apply
\cref{red:duplicate-single-neighbor} in time $\mathcal{O}(|V_t|+|E|)$. By \cref{lem:safe-red}, this preserves equivalence.

Let $(G',T',k)$ be the reduced instance. We compute the boundary sets of all full subtrees of $T'$ via
\cref{lem:compute-boundaries}. If the algorithm rejects, then $(G',T',k)$, and hence also the
original instance, is a negative instance. Otherwise, this step takes time $\mathcal{O}(|E|+k|V_t|)$.

We then compute nice path decompositions of the subtree intersection graphs via
\cref{lem:compute-intersection-pathdecomps}. If the algorithm rejects, then $(G',T',k)$, and hence
also the original instance, is a negative instance. Otherwise, for every non-leaf full subtree $X$ of $T'$
we obtain a nice path decomposition of $\mathcal I_{G'}(X)$ of width at most $12k+2$. This step takes time
\begin{equation*}
    \mathcal O\bigl(2^{\mathcal O(k)}\cdot |V_t|\bigr).
\end{equation*}

We then apply \cref{lem:fpt-by-k-given-pathdecomp} to the reduced instance $(G',T',k)$ together
with the computed nice path decompositions. This decides whether $(G',T',k)$ is a positive instance within
time
\begin{equation*}
    \mathcal O\bigl(2^{\mathcal O(k\log k)}\cdot |V_t| + |E|\bigr).
\end{equation*}

Therefore, the original instance can be decided in time
\begin{equation*}
    \mathcal{O}\bigl(2^{\mathcal{O}(k\log k)}\cdot |V_t| + |E|\bigr).
\end{equation*}
Finally, correctness follows from
\cref{lem:safe-red,lem:compute-boundaries,lem:compute-intersection-pathdecomps,lem:fpt-by-k-given-pathdecomp}.
\end{proof}

\subsection[A WTS-Based Algorithm for Theorem 1]{A {\normalfont\wts}-Based Algorithm for \cref{thm:thmsplits}}
\label{sub:wtsalg}

Our efficient fixed-parameter algorithm underlying \cref{thm:thmsplits} is strongly based on insights into the \wtsLong problem introduced in Section~\ref{sec:wts}. Throughout this subsection, let us hence fix an instance of \wtsLong consisting of tuple sets
$T_1,\dots,T_d$, weight functions $w_1,\dots,w_d$, a left set $\mathcal L$, a right set
$\mathcal R$, and the associated universes $\mathcal U_1,\dots,\mathcal U_d$ and
$\mathcal U=\bigcup_{i=1}^d \mathcal U_i$.

To present our algorithm it will be useful to introduce an auxiliary graph representation of this \wts instance. First, we define its vertex set and then the corresponding graph:
\begin{definition}[Annotated Tuples]
For every index $i\in[d]$ and every tuple $(a,b)\in T_i$, we define the corresponding
\emph{annotated tuple} $(a,b)_i$. We treat annotated tuples from different tuple sets as distinct,
that is, $(a,b)_i\neq(a,b)_j$ whenever $i\neq j$. For an annotated tuple $x=(a,b)_i$, we set
\begin{equation*}
\fst(x)\coloneqq a,\qquad \snd(x)\coloneqq b,\qquad w(x)\coloneqq w_i((a,b)).
\end{equation*}
We call $i$ the \emph{color} of $x$.
\end{definition}

\begin{definition}[Compatibility Digraph]
The \emph{compatibility digraph} $D$ of the fixed \wtsLong instance is the
directed graph with vertex set
\begin{equation*}
V(D)=\{(a,b)_i\mid i\in[d],\ (a,b)\in T_i\}.
\end{equation*}
For two annotated tuples $x,y\in V(D)$, we add the arc $(x,y)$ if and only if $x$ and $y$ have
distinct colors and
\begin{equation*}
\snd(x)=\fst(y)\neq\lozenge.
\end{equation*}
\end{definition}

For a directed path $C$, its \emph{path order} is the order of its vertices along its arcs; we
write $\first(C)$ and $\last(C)$ for the first and last vertex in this order. For a sequence
$S=(x_1,\dots,x_r)$ of annotated tuples, define
\begin{equation*}
\val(S)\coloneqq
\sum_{\ell=1}^r w(x_\ell)
-
\bigl|\{\,\ell\in[r-1]\mid \snd(x_\ell)=\fst(x_{\ell+1})\neq\lozenge\,\}\bigr|.
\end{equation*}

A \emph{directed linear forest} is a directed graph in which every weakly connected component is a
directed path. With this, we can formalize two notions that will be targeted in our reformulated graph problem:

\begin{definition}[Colorful Directed Linear Forests]
A directed linear forest $\mathcal F$ in $D$ is \emph{colorful} if no two vertices of
$V(\mathcal F)$ have the same color. We set
\begin{equation*}
\cost(\mathcal F)\coloneqq \sum_{x\in V(\mathcal F)} w(x)-|E(\mathcal F)|.
\end{equation*}
A component $C$ of $\mathcal F$ is called \emph{left-starting} if
$\fst(\first(C))\in\mathcal L$, and is called \emph{right-ending} if $\snd(\last(C))\in\mathcal R$.
\end{definition}

\begin{definition}[Feasible Families of Components]
Let $\mathcal C$ be a finite family of directed paths. We call $\mathcal C$ \emph{feasible} if its
elements can be ordered so that the first path in the order is left-starting and the last path in
the order is right-ending.
\end{definition}

With this, we can establish:

\begin{lemma}[Forest Reformulation of {\normalfont\wts}]\label{lem:wts-forest}
We have
\begin{align*}
\texttt{\textup{WTS}}(&T_1,\dots,T_d,w_1,\dots,w_d,\mathcal L,\mathcal R)\\
&=
\min\Bigl\{\,\cost(\mathcal F)\,\Bigm|\\
&\qquad \mathcal F \text{ is a colorful directed linear forest in } D,\\
&\qquad \mathcal F \text{ contains exactly one annotated tuple of each color } i\in[d],\\
&\qquad \text{and the components of } \mathcal F \text{ form a feasible family}
\Bigr\}.
\end{align*}
\end{lemma}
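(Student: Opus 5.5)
We prove the equality by showing both inequalities, translating between a WTS solution (a permutation with one tuple per color) and a colorful directed linear forest whose components are ordered.

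\emph{($\ge$) From WTS solutions to forests.} We may assume the WTS value is finite. Take an optimal permutation $(p_1,\dots,p_d)$ with tuples $t_i\in T_i$, and let $x_\ell\coloneqq (t_{p_\ell})_{p_\ell}$ be the corresponding annotated tuples. Form $\mathcal F$ on vertex set $\{x_1,\dots,x_d\}$, adding the arc $(x_\ell,x_{\ell+1})$ exactly when $\snd(x_\ell)=\fst(x_{\ell+1})\neq\lozenge$.
\begin{itemize}
\item Each such arc lies in $D$, since consecutive tuples have distinct colors.
\item $\mathcal F$ is a subgraph of the Hamiltonian path $x_1\to\dots\to x_d$, hence a directed linear forest.
\item $\mathcal F$ is colorful and contains exactly one tuple per color.
\item $\cost(\mathcal F)=\sum_\ell w(x_\ell)-|E(\mathcal F)|=\val(x_1,\dots,x_d)$, which is the WTS value.
\end{itemize}
The components are the maximal runs of consecutive $x_\ell$ joined by arcs, and they inherit the order of the sequence. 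The first component starts at $x_1$ with $\fst(x_1)\in\mathcal L$, and the last ends at $x_d$ with $\snd(x_d)\in\mathcal R$. So the family of components is feasible.

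\emph{($\le$) From forests to WTS solutions.} Take a forest $\mathcal F$ as in the right-hand side, with finite cost. Order its components $C_1,\dots,C_r$ so that $C_1$ is left-starting and $C_r$ is right-ending; this is possible by feasibility. Concatenate the path orders of the components to obtain a sequence $S=(x_1,\dots,x_d)$. Since each color appears exactly once, $S$ defines a permutation $p$ together with a choice of tuples. The boundary conditions hold: $\fst(x_1)\in\mathcal L$ and $\snd(x_d)\in\mathcal R$.

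The number of matching adjacent pairs in $S$ is at least $|E(\mathcal F)|$, because every arc of $\mathcal F$ joins consecutive elements of $S$ and satisfies $\snd=\fst\neq\lozenge$. At component junctions there may be extra accidental matches, but these only lower $\val(S)$. Hence $\val(S)\le\cost(\mathcal F)$, so the WTS value is at most $\cost(\mathcal F)$. Infinite weights are handled by the convention $\infty+a=\infty$.

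\emph{Expected obstacle.} The only delicate points are bookkeeping rather than substance. First, an $r=1$ component must be both left-starting and right-ending; this is covered by the definition, which allows the first and last path of the order to coincide. Second, the counting at component junctions in the ($\le$) direction is one-sided, which is exactly what the inequality needs. With both inequalities established, equality follows, including the case where both sides are $\infty$ because the feasible sets are empty on both sides simultaneously.
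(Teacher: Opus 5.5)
Your proposal is correct and follows essentially the same argument as the paper. In one direction you build the forest from consecutive matching pairs of a WTS sequence, so that $\cost(\mathcal F)=\val$. In the other you concatenate the components in a feasible order, where extra matches at component junctions can only lower the value. You also make explicit a step the paper leaves implicit: each added arc lies in $D$ because consecutive tuples have distinct colors.
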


\begin{proof}
First let $(p_1,\dots,p_d)$ and tuples $t_i\in T_i$ be a feasible solution of the
\wtsLong instance. Let $x_i$ be the annotated tuple corresponding to $t_i$.
Construct a directed graph $\mathcal F$ on $\{x_i\mid i\in[d]\}$ by adding the arc
$(x_{p_\ell},x_{p_{\ell+1}})$ exactly for those $\ell\in[d-1]$ with
$\snd(x_{p_\ell})=\fst(x_{p_{\ell+1}})\neq\lozenge$. Since these arcs are taken from consecutive
pairs in one linear order, $\mathcal F$ is a colorful directed linear forest containing exactly one
vertex of each color, with components ordered by their first occurrence in the sequence
$x_{p_1},\dots,x_{p_d}$. The first component is left-starting because
$\fst(x_{p_1})\in\mathcal L$, and the last component is right-ending because
$\snd(x_{p_d})\in\mathcal R$. Moreover, $\cost(\mathcal F)=\val(x_{p_1},\dots,x_{p_d})$.

Conversely, let $\mathcal F$ be a colorful directed linear forest satisfying the three stated
conditions. Order its components so that the first component is left-starting and the last component
is right-ending, and inside each component list the vertices in path order. This gives a feasible
solution of the \wtsLong instance with value at most $\cost(\mathcal F)$,
because every arc of $\mathcal F$ contributes one matching bonus, and additional bonuses may only
occur between consecutive components. Hence the optimum WTS value is at most the right-hand side,
while the previous paragraph shows that the right-hand side is at most the optimum WTS value.
\end{proof}

\subparagraph{Setting up the DP.}
We can now gradually introduce the records for the dynamic programming algorithm that will solve the problem defined in Lemma~\ref{lem:wts-forest}.
Fix a nice path decomposition $\mathcal X=(X_1,\dots,X_q)$ of $H$. For every $t\in[q]$, set
\begin{equation*}
A_t\coloneqq X_t,\qquad
P_t\coloneqq \bigcup_{s=1}^t X_s,\qquad
F_t\coloneqq P_t\setminus A_t,\qquad
U_t\coloneqq [d]\setminus P_t.
\end{equation*}
We call $A_t$ the \emph{active} colors, $P_t$ the \emph{processed} colors,
$F_t$ the \emph{forgotten} colors, and $U_t$ the \emph{unprocessed} colors at node $t$. Notice that:

\begin{observation}[Active Colors Separate Past and Future]\label{obs:wts-active-separator}
Let $t\in[q]$, let $i\in F_t$, and let $j\in U_t$. Then $\{i,j\}\notin E(H)$.
\end{observation}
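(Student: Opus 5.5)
The plan is to argue directly from the definitions of a path decomposition and of the sets $F_t$ and $U_t$, by contradiction. Suppose $i\in F_t$, $j\in U_t$ and $\{i,j\}\in E(H)$. Property~2 of a path decomposition then yields some bag $X_s$ with $\{i,j\}\subseteq X_s$. I will show that no such $s$ exists, splitting on whether $s\le t$ or $s>t$.

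\emph{Case $s\le t$.} Here $j\in X_s\subseteq P_t$, since $P_t=\bigcup_{s'=1}^t X_{s'}$. This contradicts $j\in U_t=[d]\setminus P_t$.

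\emph{Case $s>t$.} Since $i\in F_t\subseteq P_t$, there is some $s'\le t$ with $i\in X_{s'}$. We also have $i\in X_s$ with $s>t$. By property~3, the set of bag indices containing $i$ is an interval. This interval contains both $s'\le t$ and $s>t$, so it also contains $t$, which gives $i\in X_t=A_t$. This contradicts $i\in F_t=P_t\setminus A_t$.

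Both cases lead to a contradiction, so no edge of $H$ joins a forgotten color to an unprocessed one. The argument is elementary. The only step needing care is the interval property in the second case, where the key point is that $i$ appears in some bag at index at most $t$ and in another bag at index greater than $t$. I do not expect any real obstacle.
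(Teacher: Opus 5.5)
Your proof is correct: the case split on the covering bag $X_s$ (either $s\le t$, which forces $j\in P_t$, or $s>t$, where the interval property forces $i\in X_t=A_t$) is complete. The paper states this as an observation without proof, and your argument is exactly the standard path-decomposition reasoning it implicitly relies on.
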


\begin{observation}\label{obs:wts-no-forgotten-neighbor}
Let $t\in\{2,\dots,q\}$ be an introduce node, and let $i$ be the introduced color. If $x$ is an
annotated tuple of color $i$, then in any $t$-partial solution no neighbor of $x$ in the directed
forest has color in $F_{t-1}$.
\end{observation}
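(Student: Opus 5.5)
The plan is a direct argument: an arc of the compatibility digraph between two colors forces an edge of the interaction graph $H$, and \cref{obs:wts-active-separator} at node $t-1$ rules out such an edge. I read a $t$-partial solution as a colorful directed linear forest in $D$ whose colors lie in $P_t$. The only property I will use is that every arc of it is an arc of $D$.

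\textbf{Step 1: the introduced color is unprocessed at $t-1$.} Since $t$ is an introduce node, $X_t=X_{t-1}\cup\{i\}$ with $i\notin X_{t-1}$. Suppose $i$ lay in some earlier bag $X_s$ with $s<t-1$. The set of bags containing $i$ must form an interval of $[q]$, and it would contain $s$ and $t$. It would therefore also contain $t-1$, a contradiction. Hence $i\notin P_{t-1}$, that is, $i\in U_{t-1}$.

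\textbf{Step 2: an arc yields an edge of $H$.} Let $y$ be a neighbor of $x$ in the forest, of color $j$. Assume for contradiction that $j\in F_{t-1}$. By definition of $D$, $x$ and $y$ have distinct colors, so $j\neq i$. Again by definition of $D$, either $\snd(x)=\fst(y)\neq\lozenge$ or $\snd(y)=\fst(x)\neq\lozenge$. In either case some element $e\neq\lozenge$ lies in both $\mathcal U_i$ and $\mathcal U_j$, because every coordinate of a tuple in $T_i$ (respectively $T_j$) belongs to $\mathcal U_i$ (respectively $\mathcal U_j$). By the definition of the interaction graph, $\{i,j\}\in E(H)$.

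\textbf{Step 3: contradiction.} We have $j\in F_{t-1}$ and $i\in U_{t-1}$. \cref{obs:wts-active-separator} applied at node $t-1$ gives $\{i,j\}\notin E(H)$, which contradicts Step 2.

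The argument does not depend on the precise definition of a partial solution beyond its arcs coming from $D$, so there is no real obstacle. The only point needing care is Step 1, where the interval property of path decompositions ensures that a newly introduced color was never processed before. Note also that $F_{t-1}=F_t$ at an introduce node, so the statement equally holds with $F_t$.
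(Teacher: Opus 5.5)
Your proof is correct and follows essentially the argument the paper intends. The paper states this observation without proof, directly after \cref{obs:wts-active-separator}, and it rests on the same two facts you make explicit: an arc of $D$ between colors $i$ and $j$ forces $\{i,j\}\in E(H)$, and the introduced color is unprocessed at bag $t-1$ by the interval property.
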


The following definitions lay down the syntax of our dynamic programming records:

\begin{definition}[Partial Solutions]
Let $t\in[q]$. A \emph{$t$-partial solution} is a colorful directed linear forest
$\mathcal F$ in $D$ whose set of used colors is exactly $P_t$.
\end{definition}

\begin{definition}[Active Traces]
An \emph{active trace} is a nonempty sequence $P=(x_1,\dots,x_r)$ of distinct annotated tuples.
We set
\begin{equation*}
V(P)\coloneqq \{x_1,\dots,x_r\},\qquad
\first(P)\coloneqq x_1,\qquad
\last(P)\coloneqq x_r.
\end{equation*}
A family of active traces is \emph{vertex-disjoint} if no annotated tuple occurs in two traces.
\end{definition}

\begin{definition}[Records]
Let $t\in[q]$. A \emph{record} for bag $t$ is a tuple
\begin{equation*}
\rho=(\sel,\traces,\data,\forgotten)
\end{equation*}
satisfying the following conditions:
\begin{enumerate}
    \item $\sel$ assigns to every active color $i\in A_t$ an annotated tuple $\sel(i)$ of color $i$;
    \item $\traces$ is a family of pairwise vertex-disjoint active traces whose vertex set is
    exactly $\{\sel(i)\mid i\in A_t\}$;
    \item $\data$ is a mapping
    \begin{equation*}
        \data:\traces\to\{0,1\}^4;
    \end{equation*}
    \item $\forgotten$ is a mapping
    \begin{equation*}
        \forgotten:\{0,1\}^2\to\{0,1,2\}.
    \end{equation*}
\end{enumerate}
\end{definition}

For a record $\rho$, we write $\sel_\rho,\traces_\rho,\data_\rho,\forgotten_\rho$ for its four
components. For a trace $P\in\traces_\rho$, if
\begin{equation*}
\data_\rho(P)=(a,b,c,d),
\end{equation*}
then we set
\begin{equation*}
\begin{aligned}
\startsLeft_\rho(P)&\coloneqq a,&
\stopsRight_\rho(P)&\coloneqq b,\\
\openPred_\rho(P)&\coloneqq c,&
\openSucc_\rho(P)&\coloneqq d.
\end{aligned}
\end{equation*}

For an annotated tuple $x$, we define
\begin{equation*}
\data(x)\coloneqq (\langle\,\fst(x)\in\mathcal L\,\rangle,\langle\,\snd(x)\in\mathcal R\,\rangle,1,1).
\end{equation*}

The semantics of a record are now given by:

\begin{definition}[Realization of a Record]
Let $t\in[q]$, let $\mathcal F$ be a $t$-partial solution, and let
$\rho=(\sel,\traces,\data,\forgotten)$ be a record for bag $t$.

We say that $\mathcal F$ \emph{realizes} $\rho$ if the following conditions hold.
\begin{enumerate}
    \item \emph{Active vertices.}
    For every active color $i\in A_t$, the unique vertex of color $i$ in $\mathcal F$ equals
    $\sel(i)$.

    \item \emph{Traces induced by active vertices.}
    Let $C$ be a component of $\mathcal F$ that contains at least one active color, and let
    $x_1,\dots,x_m$ be the vertices of $C$ in path order. Let
    $x_{j_1},\dots,x_{j_s}$ be the subsequence of those vertices whose colors lie in $A_t$.
    Then the active trace $(x_{j_1},\dots,x_{j_s})$ belongs to $\traces$. Moreover, every trace in
    $\traces$ arises in this way from exactly one component of $\mathcal F$.

    \item \emph{Trace data.}
    For every trace $P\in\traces$, let $C_P$ denote the unique component of $\mathcal F$ that gives
    rise to $P$ in Item~2. Then
    \begin{equation*}
    \startsLeft_\rho(P)=\langle\,\fst(\first(C_P))\in\mathcal L\,\rangle,\qquad
    \stopsRight_\rho(P)=\langle\,\snd(\last(C_P))\in\mathcal R\,\rangle,
    \end{equation*}
    \begin{equation*}
    \openPred_\rho(P)=\langle\,\first(C_P)=\first(P)\,\rangle,\qquad
    \openSucc_\rho(P)=\langle\,\last(C_P)=\last(P)\,\rangle.
    \end{equation*}

    \item \emph{Forgotten components.}
    Let
    \begin{equation*}
    \mathcal C_t^{\mathrm{forgotten}}(\mathcal F)\coloneqq
    \{\,C \text{ component of }\mathcal F \mid
    V(C)\cap \{\sel(i)\mid i\in A_t\}=\emptyset\,\}.
    \end{equation*}
    Then for every $(a,b)\in\{0,1\}^2$,
    \begin{equation*}
\begin{aligned}
    \forgotten(a,b)
    &=
    \min\left(
    2,
    \left|
    \left\{\begin{array}{@{}c@{}}
    C\in \mathcal C_t^{\mathrm{forgotten}}(\mathcal F)\mid\\[-.3ex]
    [\fst(\first(C))\in\mathcal L]=a,\
    [\snd(\last(C))\in\mathcal R]=b
    \end{array}\right\}
    \right|
    \right).
\end{aligned}
    \end{equation*}
\end{enumerate}
\end{definition}

\begin{definition}[Record Table]
For every bag $t\in[q]$ and every record $\rho$ for bag $t$, we set
\begin{equation*}
R_t[\rho]\coloneqq
\min\{\,\cost(\mathcal F)\mid
\mathcal F \text{ is a $t$-partial solution that realizes } \rho\,\}.
\end{equation*}
\end{definition}

\subparagraph{Initialization.} We now describe how the dynamic program will be initialized:

\begin{definition}[Initial Record]
Let $\delta_0:\{0,1\}^2\to\{0,1,2\}$ be the constant-$0$ mapping. The \emph{initial record} is
\begin{equation*}
\rho_\emptyset\coloneqq (\emptyset,\emptyset,\emptyset,\delta_0),
\end{equation*}
where the third coordinate is the empty mapping with domain $\emptyset$.
\end{definition}

\begin{observation}[Initialization]\label{obs:wts-initialization}
At the first bag $X_1=\emptyset$, the only realizable record is $\rho_\emptyset$, and
\begin{equation*}
R_1[\rho_\emptyset]=0,
\qquad
R_1[\rho]=\infty \text{ for every other record } \rho.
\end{equation*}
\end{observation}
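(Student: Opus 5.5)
The statement to prove is \cref{obs:wts-initialization}: at the first bag only $\rho_\emptyset$ is realizable, with $R_1[\rho_\emptyset]=0$. I would simply unfold the definitions at $t=1$.

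Since the decomposition is nice, $X_1=\emptyset$. Hence $A_1=\emptyset$, $P_1=\emptyset$ and $F_1=\emptyset$. By definition, a $1$-partial solution is a colorful directed linear forest whose set of used colors is exactly $P_1=\emptyset$. The only such forest is the empty forest $\mathcal F_\emptyset$, which has no vertices and no arcs, so $\cost(\mathcal F_\emptyset)=0$.

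Next I would determine which records $\mathcal F_\emptyset$ realizes, going through the four conditions of the realization definition.
\begin{enumerate}
    \item Since $A_1=\emptyset$, the map $\sel$ must be the empty map, and the active-vertices condition holds vacuously.
    \item Condition~2 requires every trace to arise from a component of $\mathcal F_\emptyset$. There are no components, so $\traces=\emptyset$. This is consistent with record condition~2, since the vertex set of the traces must be $\{\sel(i)\mid i\in A_1\}=\emptyset$.
    \item It follows that $\data$ is the empty mapping.
    \item The set $\mathcal C_1^{\mathrm{forgotten}}(\mathcal F_\emptyset)$ is empty. So for each $(a,b)\in\{0,1\}^2$ the count is $0$, and $\forgotten=\delta_0$.
\end{enumerate}

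Together these show that the empty forest realizes exactly one record, namely $\rho_\emptyset$. Since it is the only $1$-partial solution, no other record is realizable. Therefore $R_1[\rho_\emptyset]=\cost(\mathcal F_\emptyset)=0$, and $R_1[\rho]=\min\emptyset=\infty$ for every other record $\rho$, using the paper's convention for the minimum over an empty set.

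No step is a real obstacle here. The only point that needs care is confirming that every record component is forced, rather than merely permitted, by the empty forest. Item~4 of the realization definition is what pins $\forgotten$ down to exactly $\delta_0$, instead of allowing some other value.
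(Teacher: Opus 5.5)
Your proof is correct and is the same definition-unfolding the paper leaves implicit; it states this as an observation without proof. Since $P_1=A_1=\emptyset$, the empty forest is the unique $1$-partial solution, it has cost $0$, and it realizes only $\rho_\emptyset$, because $\sel=\emptyset$, $\traces=\emptyset$ and the empty $\data$ are already forced by the record definition (active traces are nonempty) and Item~4 pins $\forgotten$ to $\delta_0$, exactly as you say.
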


\subparagraph{Introduce Nodes.} Towards dealing with these nodes, we first define attachable traces and the corresponding operators:

\begin{definition}[Left- and Right-Attachable Traces]
Let $t\in\{2,\dots,q\}$ be an introduce node, let $i$ be the unique color with
$X_t=X_{t-1}\cup\{i\}$, let $\rho$ be a record for bag $t-1$, and let $x$ be an annotated tuple of
color $i$. We define
\begin{equation*}
\mathcal A^L_\rho(x)\coloneqq
\{\,P\in\traces_\rho \mid \openSucc_\rho(P)=1 \text{ and } (\last(P),x)\in E(D)\,\},
\end{equation*}
and
\begin{equation*}
\mathcal A^R_\rho(x)\coloneqq
\{\,P\in\traces_\rho \mid \openPred_\rho(P)=1 \text{ and } (x,\first(P))\in E(D)\,\}.
\end{equation*}
\end{definition}

\begin{definition}[Introduce Operators]
Let $t$ be an introduce node, let $i$ be the introduced color, let $\rho$ be a record for bag
$t-1$, and let $x$ be an annotated tuple of color $i$.

\subparagraph*{No attachment.}
Let the record $\operatorname{Intro}^{0}_t(\rho,x)$ be the unique record $\widehat\rho$ such that
\begin{equation*}
\sel_{\widehat\rho}=\sel_\rho\cup\{(i,x)\},\qquad
\traces_{\widehat\rho}=\traces_\rho\cup\{(x)\},\qquad
\forgotten_{\widehat\rho}=\forgotten_\rho,
\end{equation*}
every trace in $\traces_\rho$ keeps its old data, and
\begin{equation*}
\data_{\widehat\rho}((x))=\data(x).
\end{equation*}

\subparagraph*{Left attachment.}
Let $P\in \mathcal A^L_\rho(x)$, and let $P\cdot x$ denote the active trace obtained from $P$ by
appending $x$. Let the record $\operatorname{Intro}^{L}_t(\rho,x,P)$ be the unique record
$\widehat\rho$ such that
\begin{equation*}
\sel_{\widehat\rho}=\sel_\rho\cup\{(i,x)\},\qquad
\traces_{\widehat\rho}=(\traces_\rho\setminus\{P\})\cup\{P\cdot x\},\qquad
\forgotten_{\widehat\rho}=\forgotten_\rho,
\end{equation*}
every trace in $\traces_\rho\setminus\{P\}$ keeps its old data, and
\begin{equation*}
\data_{\widehat\rho}(P\cdot x)
=
\bigl(\startsLeft_\rho(P),\,\langle\,\snd(x)\in\mathcal R\,\rangle,\,\openPred_\rho(P),\,1\bigr).
\end{equation*}

\subparagraph*{Right attachment.}
Let $Q\in \mathcal A^R_\rho(x)$, and let $x\cdot Q$ denote the active trace obtained from $Q$ by
prepending $x$. Let the record $\operatorname{Intro}^{R}_t(\rho,x,Q)$ be the unique record
$\widehat\rho$ such that
\begin{equation*}
\sel_{\widehat\rho}=\sel_\rho\cup\{(i,x)\},\qquad
\traces_{\widehat\rho}=(\traces_\rho\setminus\{Q\})\cup\{x\cdot Q\},\qquad
\forgotten_{\widehat\rho}=\forgotten_\rho,
\end{equation*}
every trace in $\traces_\rho\setminus\{Q\}$ keeps its old data, and
\begin{equation*}
\data_{\widehat\rho}(x\cdot Q)
=
\bigl(\langle\,\fst(x)\in\mathcal L\,\rangle,\,\stopsRight_\rho(Q),\,1,\,\openSucc_\rho(Q)\bigr).
\end{equation*}

\subparagraph*{Left-and-right attachment.}
Let $P\in\mathcal A^L_\rho(x)$ and $Q\in\mathcal A^R_\rho(x)$ with $P\neq Q$, and let
$P\cdot x\cdot Q$ denote the active trace obtained by concatenating $P$, $x$, and $Q$.
Let the record $\operatorname{Intro}^{LR}_t(\rho,x,P,Q)$ be the unique record $\widehat\rho$ such that
\begin{equation*}
\sel_{\widehat\rho}=\sel_\rho\cup\{(i,x)\},\qquad
\traces_{\widehat\rho}=(\traces_\rho\setminus\{P,Q\})\cup\{P\cdot x\cdot Q\},\qquad
\forgotten_{\widehat\rho}=\forgotten_\rho,
\end{equation*}
every trace in $\traces_\rho\setminus\{P,Q\}$ keeps its old data, and
\begin{equation*}
\data_{\widehat\rho}(P\cdot x\cdot Q)
=
\bigl(\startsLeft_\rho(P),\,\stopsRight_\rho(Q),\,\openPred_\rho(P),\,\openSucc_\rho(Q)\bigr).
\end{equation*}
\end{definition}

\begin{lemma}[Introduce Recurrence]\label{lem:wts-introduce}
Let $t$ be an introduce node, and let $i$ be the introduced color. Then for every record
$\widehat\rho$ of bag $t$,
\begin{equation*}
R_t[\widehat\rho]
=
\min\{M_0,M_L,M_R,M_{LR}\},
\end{equation*}
where
\begin{equation*}
M_0=
\min\{\,R_{t-1}[\rho]+w(x)\mid
\widehat\rho=\operatorname{Intro}^{0}_t(\rho,x)\,\},
\end{equation*}
\begin{equation*}
M_L=
\min\{\,R_{t-1}[\rho]+w(x)-1\mid
P\in\mathcal A^L_\rho(x),\
\widehat\rho=\operatorname{Intro}^{L}_t(\rho,x,P)\,\},
\end{equation*}
\begin{equation*}
M_R=
\min\{\,R_{t-1}[\rho]+w(x)-1\mid
Q\in\mathcal A^R_\rho(x),\
\widehat\rho=\operatorname{Intro}^{R}_t(\rho,x,Q)\,\},
\end{equation*}
and
\begin{equation*}
M_{LR}=
\min\{\,R_{t-1}[\rho]+w(x)-2\mid
P\in\mathcal A^L_\rho(x),\ Q\in\mathcal A^R_\rho(x),\ P\neq Q,\
\widehat\rho=\operatorname{Intro}^{LR}_t(\rho,x,P,Q)\,\},
\end{equation*}
where each minimum ranges over all records $\rho$ of bag $t-1$ and all annotated tuples $x$ of color~$i$.
\end{lemma}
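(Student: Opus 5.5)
We prove the equality by establishing the two inequalities separately, using a bijection-style correspondence between $t$-partial solutions and pairs consisting of a $(t-1)$-partial solution and a chosen tuple $x$ of color $i$.

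\emph{($\le$) Soundness of the four operators.} Fix a record $\rho$ of bag $t-1$ with $R_{t-1}[\rho]<\infty$, a tuple $x$ of color $i$, and a $(t-1)$-partial solution $\mathcal F$ realizing $\rho$ with $\cost(\mathcal F)=R_{t-1}[\rho]$. For each of the four cases we construct $\widehat{\mathcal F}$ from $\mathcal F$ by adding $x$ together with zero, one or two arcs.
\begin{itemize}
\item In the no-attachment case we add $x$ as an isolated vertex.
\item For $P\in\mathcal A^L_\rho(x)$ we add the arc $(\last(C_P),x)$. Since $\openSucc_\rho(P)=1$, we have $\last(C_P)=\last(P)$, so the arc exists in $D$ and extends the path at its end.
\item The right case is symmetric.
\item In the $LR$ case, $P\ne Q$ implies $C_P\ne C_Q$, so concatenation yields a path and no cycle is created.
\end{itemize}
Colorfulness holds because $i\notin P_{t-1}$, and the used colors become $P_t$. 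Each added arc contributes $-1$ to the cost, so $\cost(\widehat{\mathcal F})$ equals the corresponding term. We then check that $\widehat{\mathcal F}$ realizes the operator's output record:
\begin{itemize}
\item Active vertices and traces are updated exactly as in the operator definitions.
\item Data changes only in the merged trace. In the left case, for instance, the component now ends at $x$, giving $\stopsRight=\langle\snd(x)\in\mathcal R\rangle$ and $\openSucc=1$, while the first vertex is unchanged.
\item Forgotten components are untouched, because every new arc has an endpoint in an active trace's component.
\end{itemize}

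\emph{($\ge$) Completeness.} Let $\widehat{\mathcal F}$ be a $t$-partial solution realizing $\widehat\rho$ with optimal cost, and let $x$ be its vertex of color $i$. By \cref{obs:wts-no-forgotten-neighbor}, each neighbor of $x$ in $\widehat{\mathcal F}$ has a color in $A_{t-1}$; the reason is that neighbors lie in $H$-adjacent colors and, by \cref{obs:wts-active-separator}, a forgotten color is not adjacent to $i\in U_{t-1}$. Since $\widehat{\mathcal F}$ is a linear forest, $x$ has at most one in-neighbor $y$ and at most one out-neighbor $z$. Deleting $x$ gives a $(t-1)$-partial solution $\mathcal F$ of cost $\cost(\widehat{\mathcal F})-w(x)+\deg(x)$. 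Let $\rho$ be the record that $\mathcal F$ realizes.
\begin{itemize}
\item If $y$ exists, it is active and is the last vertex of its component in $\mathcal F$. Hence its trace $P$ has $\last(P)=y$ and $\openSucc_\rho(P)=1$, so $P\in\mathcal A^L_\rho(x)$. Symmetrically, if $z$ exists we obtain $Q\in\mathcal A^R_\rho(x)$.
\item If both exist, then $P\ne Q$, since after deleting $x$ the vertices $y$ and $z$ lie in different components.
\item The operator applied to $(\rho,x,\dots)$ returns exactly $\widehat\rho$. This follows from the soundness direction: re-adding $x$ to $\mathcal F$ recreates $\widehat{\mathcal F}$, so both realize the same record, and realization determines the record uniquely.
\end{itemize}
Therefore $R_t[\widehat\rho]=\cost(\widehat{\mathcal F})\ge R_{t-1}[\rho]+w(x)-\deg(x)$, which is one of the terms of $M_0$, $M_L$, $M_R$ or $M_{LR}$.

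\emph{Main obstacle.} The delicate step is verifying that the data bits and the $\forgotten$ counts transfer exactly. In particular, one must check that deleting $x$ never turns a component containing active colors into a forgotten component. This holds because $x$'s neighbors are active, so every piece left after deletion still contains an active vertex. It also uses that traces are identified with components one-to-one, which relies on the $\openPred/\openSucc$ flags correctly recording whether a trace's endpoints coincide with its component's endpoints. Infinite-valued cases are handled by the convention $\infty+a=\infty$.
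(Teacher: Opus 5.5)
Your proposal is correct and follows essentially the same route as the paper. For $\le$ you extend a $(t-1)$-partial solution by $x$ with zero, one, or two arcs. For $\ge$ you delete $x$ from an optimal $t$-partial solution, use the no-forgotten-neighbor observation to see that its in- and out-neighbors are active endpoints of open traces, and case-split on its in- and out-degree. Your explicit argument that the operator output equals $\widehat\rho$ (re-adding $x$ recreates $\widehat{\mathcal F}$, and the realized record is unique), together with your check that the $\forgotten$ counts are unchanged, fills in details that the paper only asserts.
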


\begin{proof}
Let $i$ be the introduced color.

We first prove that every candidate counted by $M_0$, $M_L$, $M_R$, or $M_{LR}$ is attainable. Let
$\mathcal F$ be a $(t-1)$-partial solution realizing $\rho$, and let $x$ be an annotated tuple of
color $i$. Since $i\notin P_{t-1}$, the vertex $x$ does not belong to $\mathcal F$.

For $M_0$, add $x$ as an isolated vertex. Observe that the cost increases by $w(x)$, and the resulting forest
realizes $\operatorname{Intro}^{0}_t(\rho,x)$. For $M_L$, let
$P\in\mathcal A^L_\rho(x)$. Then the component inducing $P$ ends at $\last(P)$, and
$(\last(P),x)$ is an arc of $D$. Observe that adding $x$ and this arc preserves the directed-linear-forest
property, increases the cost by $w(x)-1$, and realizes $\operatorname{Intro}^{L}_t(\rho,x,P)$.
The case $M_R$ is symmetric. For $M_{LR}$, let $P\in\mathcal A^L_\rho(x)$ and
$Q\in\mathcal A^R_\rho(x)$ with $P\neq Q$. Note that the traces $P$ and $Q$ belong to distinct components.
Hence, adding $x$ together with the arcs $(\last(P),x)$ and $(x,\first(Q))$ merges these two components
into one directed path, increases the cost by $w(x)-2$, and realizes
$\operatorname{Intro}^{LR}_t(\rho,x,P,Q)$. Thus $R_t[\widehat\rho]\le \min\{M_0,M_L,M_R,M_{LR}\}$.

For the reverse inequality, let $\widehat{\mathcal F}$ be a $t$-partial solution realizing
$\widehat\rho$ with cost $R_t[\widehat\rho]$, and let $x$ be the unique vertex of color $i$ in
$\widehat{\mathcal F}$. Remove $x$ and its incident arcs to obtain a directed linear forest
$\mathcal F$ on the colors $P_{t-1}$. By \cref{obs:wts-no-forgotten-neighbor}, every neighbor of
$x$ in $\widehat{\mathcal F}$ has active color at bag $t-1$.

Note that the vertex $x$ has indegree at most one and outdegree at most one in $\widehat{\mathcal F}$. If it
has no incident arc, then $\mathcal F$ realizes a record $\rho$ with
$\widehat\rho=\operatorname{Intro}^{0}_t(\rho,x)$ and
$\cost(\widehat{\mathcal F})=\cost(\mathcal F)+w(x)$. Thus $M_0\le R_t[\widehat\rho]$.

If $x$ has only an incoming arc, then its predecessor is the last vertex of an open trace $P$ in
the record $\rho$ realized by $\mathcal F$. Hence $P\in\mathcal A^L_\rho(x)$,
$\widehat\rho=\operatorname{Intro}^{L}_t(\rho,x,P)$, and
$\cost(\widehat{\mathcal F})=\cost(\mathcal F)+w(x)-1$. Thus $M_L\le R_t[\widehat\rho]$.

If otherwise $x$ has only an outgoing arc, the symmetric argument gives a record $\rho$ and a trace
$Q\in\mathcal A^R_\rho(x)$ with
$\widehat\rho=\operatorname{Intro}^{R}_t(\rho,x,Q)$, and hence $M_R\le R_t[\widehat\rho]$.

Finally, if $x$ has both an incoming and an outgoing arc, then removing $x$ separates its component
into two components, inducing two distinct traces $P,Q$ in the record $\rho$ realized by
$\mathcal F$. Then $P\in\mathcal A^L_\rho(x)$, $Q\in\mathcal A^R_\rho(x)$, and
$\widehat\rho=\operatorname{Intro}^{LR}_t(\rho,x,P,Q)$. We have
$\cost(\widehat{\mathcal F})=\cost(\mathcal F)+w(x)-2$, so $M_{LR}\le R_t[\widehat\rho]$.

In each case, the corresponding $M$-value is at most $R_t[\widehat\rho]$. Therefore
\[
\min\{M_0,M_L,M_R,M_{LR}\}\le R_t[\widehat\rho],
\]
completing the proof.
\end{proof}

\subparagraph{Forget Nodes.} We now proceed to the second type of nodes; for these, we also have a corresponding operator.

\begin{definition}[Forget Operator]
Let $t$ be a forget node, let $i$ be the forgotten color, and let $\rho$ be a record for bag $t-1$.
Let $x=\sel_\rho(i)$ and let $P\in\traces_\rho$ be the unique trace containing $x$.

Write $P=(y_1,\dots,y_r)$ and suppose $x=y_\ell$. Let $Q$ be the sequence
\begin{equation*}
Q=(y_1,\dots,y_{\ell-1},y_{\ell+1},\dots,y_r)
\end{equation*}
obtained from $P$ by deleting $x$. The sequence $Q$ may be empty; if it is nonempty, then it is an
active trace.

Then $\operatorname{Forget}_t(\rho)$ is the unique record $\widehat\rho$ satisfying
\begin{equation*}
\sel_{\widehat\rho}=\sel_\rho\restriction A_t.
\end{equation*}

If $Q=\emptyset$, then
\begin{equation*}
\traces_{\widehat\rho}=\traces_\rho\setminus\{P\},
\end{equation*}
\begin{equation*}
\forgotten_{\widehat\rho}(a,b)=
\begin{cases}
\min\{2,\forgotten_\rho(a,b)+1\},
& \begin{array}{@{}l@{}}
  \text{if } (a,b)=\\
  \quad(\startsLeft_\rho(P),\stopsRight_\rho(P)),
  \end{array}\\
\forgotten_\rho(a,b), & \text{otherwise,}
\end{cases}
\end{equation*}
and
\begin{equation*}
\data_{\widehat\rho}(R)=\data_\rho(R)
\quad\text{for every }R\in\traces_{\widehat\rho}.
\end{equation*}

If $Q\neq\emptyset$, then
\begin{equation*}
\traces_{\widehat\rho}=(\traces_\rho\setminus\{P\})\cup\{Q\},
\qquad
\forgotten_{\widehat\rho}=\forgotten_\rho,
\end{equation*}
\begin{equation*}
\data_{\widehat\rho}(R)=\data_\rho(R)
\quad\text{for every }R\in\traces_\rho\cap\traces_{\widehat\rho},
\end{equation*}
and if
\begin{equation*}
\data_\rho(P)=(a,b,c,d),
\end{equation*}
then
\begin{equation*}
\data_{\widehat\rho}(Q)
=
\left(
a,\,
b,\,
\begin{cases}
0, & \text{if } x=\first(P),\\
c, & \text{otherwise,}
\end{cases}
\,
\begin{cases}
0, & \text{if } x=\last(P),\\
d, & \text{otherwise.}
\end{cases}
\right).
\end{equation*}
\end{definition}

\begin{lemma}[Forget Recurrence]\label{lem:wts-forget}
Let $t$ be a forget node. Then for every record $\widehat\rho$ of bag $t$,
\begin{equation*}
R_t[\widehat\rho]
=
\min\Bigl\{
R_{t-1}[\rho]
\;\Bigm|\;
\widehat\rho=\operatorname{Forget}_t(\rho)
\Bigr\},
\end{equation*}
where the minimum ranges over all records $\rho$ of bag $t-1$.
\end{lemma}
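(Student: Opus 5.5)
I will prove the two inequalities separately, mirroring the proof of \cref{lem:wts-introduce}. The guiding facts are simple. Forgetting a color changes neither the underlying forest nor its cost, only how the record describes it. Moreover, $\operatorname{Forget}_t$ is a deterministic function of the old record, so the realized record for bag $t$ is determined by the realized record for bag $t-1$.

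\emph{Step 1 ($\le$).} Let $\rho$ be a record for bag $t-1$ with $\widehat\rho=\operatorname{Forget}_t(\rho)$, and let $\mathcal F$ be a $(t-1)$-partial solution realizing $\rho$ with cost $R_{t-1}[\rho]$. Since $P_t=P_{t-1}$ at a forget node, $\mathcal F$ is also a $t$-partial solution, with the same cost. I then check that $\mathcal F$ realizes $\widehat\rho$, going through the four conditions.
\begin{enumerate}
\item \emph{Active vertices:} this holds because $\sel_{\widehat\rho}$ is the restriction of $\sel_\rho$ to $A_t$.
\item \emph{Traces:} let $C_P$ be the component inducing the trace $P$ that contains $x=\sel_\rho(i)$. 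The active subsequence of $C_P$ at bag $t$ is exactly $P$ with $x$ deleted, namely $Q$. All other components induce the same traces as before.
\item \emph{Trace data, case $Q\neq\emptyset$:} $\startsLeft$ and $\stopsRight$ refer only to $\first(C_P)$ and $\last(C_P)$, so they are unchanged. For $\openPred$: if $x\neq\first(P)$, then $\first(Q)=\first(P)$ and the old bit persists. If $x=\first(P)$, then either $\first(C_P)=x\neq\first(Q)$, or $\first(C_P)$ precedes $x$ and hence is not $\first(Q)$ either; in both cases the new bit is $0$. The $\openSucc$ bit is handled symmetrically.
\item \emph{Forgotten components:} if $Q=\emptyset$, then $C_P$ becomes a forgotten component with type $(\startsLeft_\rho(P),\stopsRight_\rho(P))$, so the capped counter for that type increases by one. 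If $Q\neq\emptyset$, the set of forgotten components is unchanged.
\end{enumerate}
This gives $R_t[\widehat\rho]\le R_{t-1}[\rho]$.

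\emph{Step 2 ($\ge$).} Let $\widehat{\mathcal F}$ be an optimal $t$-partial solution realizing $\widehat\rho$. It is also a $(t-1)$-partial solution, since the color sets agree. Let $\rho$ be the record it realizes at bag $t-1$; such a record exists and is unique, because the realization conditions define every component of $\rho$ from the forest. Step 1 shows that the record realized at bag $t$ is $\operatorname{Forget}_t(\rho)$. Since a forest realizes only one record per bag, $\operatorname{Forget}_t(\rho)=\widehat\rho$. Hence $R_{t-1}[\rho]\le\cost(\widehat{\mathcal F})=R_t[\widehat\rho]$.

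\emph{Main obstacle.} The only delicate point is the case analysis for the $\openPred$/$\openSucc$ bits when $x$ is an endpoint of $P$. There one must argue that the new endpoint of $Q$ cannot coincide with the endpoint of $C_P$, because $x$ (or a vertex before it) lies strictly before $\first(Q)$ in path order. I would handle this by writing $C_P$ in path order and locating $x$ and $\first(Q)$ explicitly. The remaining checks are bookkeeping.
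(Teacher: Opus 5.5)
Your proposal is correct and takes essentially the same route as the paper. In both directions the same forest is the witness: you first check that $\mathcal F$ realizes $\operatorname{Forget}_t(\rho)$ at unchanged cost, and then take the record induced at bag $t-1$ by the forest itself. Your explicit case analysis of the open-end bits and your appeal to uniqueness of the realized record spell out what the paper leaves to ``the definition of the forget operator.''
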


\begin{proof}
Let $i$ be the forgotten color. Since $P_t=P_{t-1}$ at a forget node, forgetting $i$ changes only
which colors are active and does not change the vertex set of a partial solution.

First let $\mathcal F$ be a $(t-1)$-partial solution realizing a record $\rho$. The same forest is a
$t$-partial solution. The operator $\operatorname{Forget}_t$ deletes the selected vertex of color
$i$ from the active trace that contains it. If the trace becomes empty, then the whole component has
no active vertex and is counted in the appropriate forgotten class; otherwise the remaining active
subsequence is the new trace, and the two open-end bits are updated exactly according to whether the
deleted vertex was the first or last active vertex of the component. Hence $\mathcal F$ realizes
$\operatorname{Forget}_t(\rho)$ and the cost is unchanged. Therefore
\begin{equation*}
R_t[\widehat\rho]
\le
\min\{\,R_{t-1}[\rho]\mid \widehat\rho=\operatorname{Forget}_t(\rho)\,\}.
\end{equation*}

Conversely, let $\mathcal F$ be a $t$-partial solution realizing $\widehat\rho$ with cost
$R_t[\widehat\rho]$. Let $x$ be the unique vertex of color $i$ in $\mathcal F$. Consider the record
$\rho$ induced by the same forest $\mathcal F$ when the active set is $A_{t-1}=A_t\cup\{i\}$. Then
$\mathcal F$ realizes $\rho$, and by the definition of the forget operator we have
$\widehat\rho=\operatorname{Forget}_t(\rho)$. Therefore
$R_{t-1}[\rho]\le \cost(\mathcal F)=R_t[\widehat\rho]$, which proves the reverse inequality.
\end{proof}

\subparagraph{Wrapping Up.} It remains to state how to obtain our solution at the end of the dynamic programming run, and to combine these steps together in Theorem~\ref{thm:wts-fpt-pw-plus-max-universe-size}.
\begin{lemma}[Solution Extraction]\label{lem:wts-record-extraction}
The optimum value of the fixed \wtsLong instance equals
\begin{equation*}
\min\{\,R_q[\rho]\mid
\rho=(\emptyset,\emptyset,\emptyset,\delta)
\text{ and } \delta \text{ satisfies one of the following conditions}\,\},
\end{equation*}
where:
\begin{enumerate}
    \item $\delta(1,1)=1$ and
    \begin{equation*}
    \delta(1,0)=\delta(0,1)=\delta(0,0)=0;
    \end{equation*}

    \item
    \begin{equation*}
    \delta(1,0)+\delta(0,1)+\delta(1,1)\ge 2,
    \end{equation*}
    \begin{equation*}
    \delta(1,0)+\delta(1,1)\ge 1,
    \qquad
    \delta(0,1)+\delta(1,1)\ge 1.
    \end{equation*}
\end{enumerate}
\end{lemma}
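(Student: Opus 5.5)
The plan is to combine \cref{lem:wts-forest} with the semantics of the records at the last bag. Since $X_q=\emptyset$ and the decomposition covers all of $[d]$, we have $A_q=\emptyset$ and $P_q=[d]$. Hence a $q$-partial solution is exactly a colorful directed linear forest containing one annotated tuple of each color. Every record for bag $q$ has empty $\sel$, $\traces$, and $\data$, so it is determined by its map $\delta=\forgotten$. Moreover, every component of such a forest is forgotten, so $\delta(a,b)$ is the number of components whose start and end indicators are $(a,b)$, truncated at $2$. The minimum over records satisfying condition~1 or~2 therefore equals the minimum cost of a forest $\mathcal F$ whose truncated component-type counts satisfy one of the two conditions. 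By \cref{lem:wts-forest}, it then suffices to show the following: a family of paths is feasible if and only if its truncated type counts satisfy condition~1 or condition~2.

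\emph{Feasible implies a condition.} Suppose the ordered family starts with a left-starting component $C_1$ and ends with a right-ending component $C_m$.
\begin{itemize}
\item If $m=1$, then $C_1$ has type $(1,1)$ and is the only component, so $\delta(1,1)=1$ and all other entries are $0$. This is condition~1.
\item If $m\ge 2$, then $C_1$ has type $(1,\cdot)$ and $C_m$ has type $(\cdot,1)$. These are two distinct components, each contributing to $\delta(1,0)+\delta(0,1)+\delta(1,1)$. Truncation at $2$ keeps this sum at least $2$, because two distinct contributing components either have different types (each with count at least $1$) or the same type (count at least $2$). In addition, $C_1$ gives $\delta(1,0)+\delta(1,1)\ge1$ and $C_m$ gives $\delta(0,1)+\delta(1,1)\ge1$. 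This is condition~2.
\end{itemize}

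\emph{A condition implies feasible.} Condition~1 says there is exactly one component and it has type $(1,1)$, so the trivial order works. For condition~2, we must pick two distinct components, a left-starting one placed first and a right-ending one placed last, and arrange the rest arbitrarily in between. This can be done as follows.
\begin{itemize}
\item If there exist a component of type $(1,0)$ and one of type $(0,1)$, use them.
\item If there is only one of these two types, say type $(1,0)$ is present, then the third inequality forces some component of type $(1,1)$ to exist. Use that component as the right end and the $(1,0)$ component as the left end.
\item If neither $(1,0)$ nor $(0,1)$ is present, the first inequality forces $\delta(1,1)\ge2$, which gives two distinct $(1,1)$ components to serve as the two ends.
\end{itemize}

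The main obstacle is this case analysis. The subtle point is verifying that truncation at $2$ loses no information. Feasibility only ever needs to know whether at least one or at least two components of a type exist, and the count of all components matters only in condition~1, where it is $1$ versus $\ge2$. Both distinctions survive truncation at $2$. Once the equivalence is established, the minima coincide with $\texttt{\textup{WTS}}$ by \cref{lem:wts-forest}, and the value is $\infty$ on both sides when the instance is infeasible.
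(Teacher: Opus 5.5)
Your proposal is correct and takes essentially the same route as the paper: both use $A_q=\emptyset$ and $P_q=[d]$ to reduce the claim to showing that a family of paths is feasible iff its truncated type counts satisfy Item~1 or Item~2, and then invoke \cref{lem:wts-forest}. Your explicit case analysis for why Item~2 suffices spells out in detail what the paper asserts in a single sentence.
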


\begin{proof}
At the final bag $X_q=\emptyset$ we have $A_q=\emptyset$ and $P_q=[d]$. Thus every $q$-partial
solution is a colorful directed linear forest containing exactly one annotated tuple of each color,
and all of its components are recorded only by the mapping $\delta$.

It remains to characterize when the forgotten components form a feasible family. If there is exactly
one component, then it can be both first and last if and only if it is of type $(1,1)$; this is
exactly Item~1. If the feasible order uses at least two components, then some component must be
usable as the first component, and some component distinct from it must be usable as the last
component. Since the counts are truncated at $2$, this is equivalent to Item~2: among the
components of types $(1,0)$, $(0,1)$, and $(1,1)$ there are at least two, at least one has first
coordinate $1$, and at least one has second coordinate $1$. Components of type $(0,0)$ can be
placed between the first and last components and impose no condition.

By \cref{lem:wts-forest}, minimizing $R_q[\rho]$ over exactly these final records gives the optimum
value of the \wtsLong instance.
\end{proof}

With this in hand, we can finally prove Theorem~\ref{thm:wts-fpt-pw-plus-max-universe-size}.

\thmwtspw*

\begin{proof}
Note that the given nice path decomposition can be assumed to have at most $2d+1$ bags.
We evaluate the record table along this nice path decomposition. At the first empty bag, we set
$R_1[\rho_\emptyset]=0$ and set all other records to $\infty$. The remaining bags are processed by
\cref{lem:wts-introduce,lem:wts-forget}. Correctness follows by induction over the bags using these
two recurrences. The optimum value is then obtained from \cref{lem:wts-record-extraction}.

It remains to bound the running time. Fix a bag $t$ and put $a\coloneqq |A_t|$. Since the width is
$p$, we have $a\le p+1$. For each active color $i\in A_t$, there are at most
$|T_i|\le |\mathcal U_i|^2\le u^2$ choices for $\sel(i)$, hence at most $u^{2a}$ choices for
$\sel$. Once $\sel$ is fixed, the active traces are obtained by ordering the $a$ selected annotated
tuples and choosing where traces end. This gives at most $a!\cdot 2^a$ possibilities. Each trace
has one of $16$ data values, and there are at most $a$ traces, giving at most $16^a$ choices. The
mapping $\forgotten:\{0,1\}^2\to\{0,1,2\}$ has $3^4=81$ choices. Therefore the number of records at
one bag is at most
\begin{equation*}
81\cdot u^{2a}\cdot a!\cdot 2^a\cdot 16^a
\le
81\cdot (p+1)!\cdot 2^{5p+5}\cdot u^{2p+2}.
\end{equation*}

At a forget node, every record of the previous bag produces one candidate successor record, and the
update takes $\mathcal O(p+1)$ time. At an introduce node, fix a previous record $\rho$ and the
introduced color $i$. There are at most $|T_i|\le u^2$ choices for the new annotated tuple $x$.
For fixed $x$, the sets $\mathcal A^L_\rho(x)$ and $\mathcal A^R_\rho(x)$ have size at most
$p+1$, so the four introduce cases yield at most $(p+2)^2$ candidate updates. Constructing each
candidate record takes $\mathcal O(p+1)$ time. Thus one previous record generates all its introduce
updates in $\mathcal O((p+2)^3u^2)$ time.

Multiplying the number of records per bag by the update time per record and by the at most $2d+1$
bags gives
\begin{equation*}
\mathcal O\bigl((p+2)^3\cdot (p+1)!\cdot 2^{5p}\cdot u^{2p+4}\cdot d\bigr). \qedhere
\end{equation*}
\end{proof}

\subparagraph{A Proof of \cref{lem:fpt-by-k-given-pathdecomp}.} We restate and prove the lemma below.

\lempwdp*

\begin{proof}
We first compute the boundary sets $\partial B_G(X)$ for all full subtrees $X$ of $T$ via
\cref{lem:compute-boundaries}. If the algorithm rejects, then the instance is negative.
Otherwise, this step takes time $\mathcal{O}(|E|+k|V_t|)$, and moreover
$|\partial B_G(X)|\le k+2$ for every full subtree $X$.

We now compute the compressed values $\copt_X$ bottom-up over the tree.
For a leaf $X=r$, all values $\copt_X(v_L,v_R)$ with
$v_L,v_R\in\partial B_G(X)\cup\{\lozenge\}$ can be computed directly from
\cref{def:opt_star,def:compressed-opt,lem:opt_equals_optstar} in time
$\mathcal{O}(|N_G(r)|+(k+3)^2)=\mathcal{O}(|N_G(r)|+k^2)$. Summed over all leaves, this costs
$\mathcal{O}(|E|+k^2|V_t|)$.

Let $X$ be a full subtree with children $X_1,\dots,X_c$. By
\cref{obs:intersection-equiv}, the given nice path decomposition of $\mathcal I_G(X)$ is also a nice path
decomposition of the interaction graph of the \wtsLong instance arising from
\cref{lem:compute-copt}. For each $i\in[c]$, the corresponding tuple set is $\mathcal U_{X_i}^2$,
where $\mathcal U_{X_i}=\partial B_G(X_i)\cup\{\lozenge\}$, and hence
$|\mathcal U_{X_i}|\le k+3$.

For every pair $v_L,v_R\in \partial B_G(X)\cup\{\lozenge\}$, we compute $\copt_X(v_L,v_R)$ by
solving the \wtsLong instance from \cref{lem:compute-copt}, that is, with
left set $\rho_X^{-1}(v_L)\cap \bigcup_{i=1}^c \mathcal U_{X_i}$ and right set
$\rho_X^{-1}(v_R)\cap \bigcup_{i=1}^c \mathcal U_{X_i}$.
By \cref{thm:wts-fpt-pw-plus-max-universe-size} with $d=c$, $p=12k+2$, and $u=k+3$, one such
computation takes time
\begin{equation*}
    c\cdot 2^{\mathcal O(k\log k)}.
\end{equation*}
Since there are at most $(k+3)^2$ pairs $(v_L,v_R)$, the total time spent on the node $X$ is still
\begin{equation*}
    c\cdot 2^{\mathcal O(k\log k)}.
\end{equation*}

Because every internal node of $T$ has at least two children, the number of internal nodes is at
most $|V_t|-1$. Hence
\begin{equation*}
    \sum_X c = |V(T)|-1 \le 2|V_t|-2.
\end{equation*}
Therefore, the total time required to compute all records is
\begin{equation*}
    \mathcal O\bigl(2^{\mathcal O(k\log k)}\cdot |V_t| + |E|\bigr).
\end{equation*}
Finally, since
$\partial B_G(T)=\emptyset$, we have
\begin{equation*}
    \copt_T(\lozenge,\lozenge)
    =
    \min_{v_L,v_R\in B_G(T)} \opt_T(v_L,v_R),
\end{equation*}
and by \cref{lem:opt_equals_optstar}, the instance is a positive instance if and only if
\begin{equation*}
    \copt_T(\lozenge,\lozenge)-|B_G(T)|\le k. \qedhere
\end{equation*}
\end{proof}

\section[A Single-Exponential Algorithm with Respect to the Maximum Degree Delta of T]{A Single-Exponential Algorithm with Respect to the Maximum Degree \boldmath $\Delta$ of $T$}
\label{sec:degree}

Towards our second result, we will once again leverage the \wtsLong perspective on \prob. For our initial considerations, let us once again fix an instance of \wtsLong, i.e., tuple sets $T_1,\dots,T_d$, weight functions $w_1,\dots,w_d$, a left set $\mathcal L$ and a right set $\mathcal R$. Let $\mathcal{U}$ denote the corresponding universe.

For any index set $I \subseteq [d]$, $x \in \mathcal U$ and an arbitrary permutation $\pi$ of $I$, we define
\[
 f_I(x) \coloneqq \texttt{\textup{WTS}}\bigl(T_{\pi(1)}, \dots, T_{\pi(|I|)}, w_{\pi(1)}, \dots, w_{\pi(|I|)}, \mathcal L, \{x\}\bigr),
\]
that is, the optimum value of the \wtsLong instance obtained
by restricting to the tuple sets and weight functions with indices in $I$, with left
set~$\mathcal L$ and right set~$\{x\}$. Note that $\pi$ carries no meaning beyond ensuring the tuple sets and weight functions are passed in a consistent order.

Our aim is to employ an alternative dynamic programming approach to solve \wts, tailored to the bounded-$\Delta$ setting. Towards formalizing our records, we introduce some technical notation that will be used solely in this section.
For a predicate $P$, we write $\langle P \rangle$ for 1 if $P$ is true and 0 if $P$ is false.
For a $d$-dimensional vector $\vec{v}$, set $I(\vec{v})$ records all indices $i$ where $\vec{v}_i = 1$, that is $I(\vec{v}) \coloneqq \{ i \in [d] \mid \vec{v}_i = 1 \}$.
Vector $\mathbf{e}_i$ denotes the $i$-th standard basis unit vector in $\mathbb{N}^d$, vector $\mathbf{0}$ is the $d$-dimensional zero vector.{} For a vector $\vec{v} \in \mathbb{N}^d$, $\lVert \vec{v} \rVert_1 $ denotes $\sum_{i=1}^d v_i$.

We define record table $R : \bigl( \{0,1\}^d \setminus \{\mathbf{0}\} \bigr) \times \mathcal{U} \to \mathbb{N} \cup \{\infty\}$ recursively as follows.
Let $\vec{v} \in \{0,1\}^d \setminus \{\mathbf{0}\}$ and $x \in \mathcal{U}$. Then, we set

\begin{equation*}
R[\vec{v}, x] =
\begin{cases}
\min\{\,w_i(t)\mid t\in T_i,\ \fst(t)\in\mathcal L,\ \snd(t)=x\,\}, & \vec{v}=\mathbf{e}_i,\\
\displaystyle
\min\limits_{\substack{i \in I(\vec{v}),\ t \in T_i,\ x' \in \mathcal{U}\\ \snd(t)=x}}
\bigl( w_i(t) + R[\vec{v}-\mathbf{e}_i, x'] - \langle\,x'=\fst(t)\neq \lozenge\,\rangle \bigr),
& \text{otherwise.}
\end{cases}
\end{equation*}
Intuitively, a record $R[\vec{v}, x]$ counts the minimum value of a solution that uses exactly the tuple sets indexed by $I(\vec{v})$ and ends with a tuple whose second entry is $x$, which we formalize in the next lemma.

\begin{lemma}[Semantics of Records]

\label{lem:wts-dp-correct}
    Let $\vec{v} \in \{0,1\}^d \setminus \{\mathbf{0}\}$ and let $I \coloneqq I(\vec{v})$.
    Then, for all $x \in \mathcal{U}$, we have
    $f_I(x) = R[\vec{v}, x]$.
\end{lemma}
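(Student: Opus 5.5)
We argue by induction on $\lVert \vec{v}\rVert_1 = |I|$. Throughout, we unfold $f_I(x)$ as the minimum, over all orderings $(p_1,\dots,p_{|I|})$ of $I$ and all choices $t_i\in T_i$ with $\fst(t_{p_1})\in\mathcal L$ and $\snd(t_{p_{|I|}})=x$, of the value
\[
\sum_{i\in I} w_i(t_i)-\bigl|\{\,j<|I| : \snd(t_{p_j})=\fst(t_{p_{j+1}})\neq\lozenge\,\}\bigr| .
\]
This quantity is independent of the auxiliary permutation $\pi$ in the definition of $f_I$. We also use the convention that the minimum over an empty set is $\infty$.

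\emph{Base case.} If $\vec{v}=\mathbf{e}_i$, then $I=\{i\}$ and the only ordering is the trivial one. A feasible choice is a single tuple $t\in T_i$ with $\fst(t)\in\mathcal L$ and $\snd(t)=x$, and no bonus term can occur. Hence $f_{\{i\}}(x)=\min\{w_i(t)\mid t\in T_i,\ \fst(t)\in\mathcal L,\ \snd(t)=x\}$, which is exactly the first case of the definition of $R$.

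\emph{Inductive step, $f_I(x)\le R[\vec v,x]$.} Let $|I|\ge 2$. Consider a minimizing triple $(i,t,x')$ in the recurrence with finite value, so that $R[\vec v-\mathbf e_i,x']<\infty$. By the induction hypothesis, $R[\vec v-\mathbf e_i,x']=f_{I\setminus\{i\}}(x')$, so there is an optimal sequence for $I\setminus\{i\}$ starting in $\mathcal L$ and ending with a tuple whose second entry is $x'$. We append $t$ to this sequence. The resulting sequence is feasible for $I$: its first tuple is unchanged, so it still starts in $\mathcal L$, and it now ends with $\snd(t)=x$. The new adjacency contributes a bonus of exactly $\langle x'=\fst(t)\neq\lozenge\rangle$, so the value of the extended sequence equals $w_i(t)+R[\vec v-\mathbf e_i,x']-\langle x'=\fst(t)\neq\lozenge\rangle$. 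Therefore $f_I(x)\le R[\vec v,x]$.

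\emph{Inductive step, $R[\vec v,x]\le f_I(x)$.} Take an optimal sequence for $f_I(x)$ with finite value. Let $i=p_{|I|}$ be its last index, $t=t_i$ its last tuple, and $x'=\snd(t_{p_{|I|-1}})$ the second entry of the preceding tuple. The prefix without $t$ is feasible for $f_{I\setminus\{i\}}(x')$: its first tuple still lies in $\mathcal L$, because $|I|\ge2$ ensures the first tuple is not the removed one. Its value equals the full value minus $w_i(t)$, plus the bonus $\langle x'=\fst(t)\neq\lozenge\rangle$ that the last adjacency contributed. Using the induction hypothesis $f_{I\setminus\{i\}}(x')=R[\vec v-\mathbf e_i,x']$, the triple $(i,t,x')$ is a candidate in the recurrence and gives $R[\vec v,x]\le f_I(x)$.

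The argument is essentially routine. The only care needed is in the boundary issues: ensuring that the left constraint $\mathcal L$ is enforced exactly once, namely at the base case, and never lost by deleting the last element; and handling $\infty$ values consistently on both sides of each inequality.
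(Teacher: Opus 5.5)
Your proof is correct and follows the paper's argument essentially step for step. Both use induction on $|I|$: one direction deletes the last tuple of an optimal sequence and applies the induction hypothesis to the prefix, and the other appends the minimizing tuple $t$ to an optimal sequence for $I\setminus\{i\}$ ending in $x'$, with the trivial cases $R[\vec v,x]=\infty$ and $f_I(x)=\infty$ handled exactly as the paper does.
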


\begin{proof}
    We perform induction over $s\coloneqq |I|$.
    In the base case, we have $s = 1$, i.e.,  $\vec{v} = \mathbf{e}_i$ for some $i \in [d]$, and thus $I = \{i\}$.
    Fix $x \in \mathcal{U}$. In the instance restricted to $I$, any feasible solution must
    consist of a single tuple $t \in T_i$ with $\fst(t)\in \mathcal L$ and $\snd(t)=x$.
    Since there is only one tuple, its value is $w_i(t)$. Hence
    \[
        f_I(x)=\min\{\,w_i(t)\mid t\in T_i,\ \fst(t)\in\mathcal L,\ \snd(t)=x\,\}.
    \]
    By the definition of the record table for $\vec{v}=\mathbf{e}_i$, we also have
    \[
        R[\mathbf{e}_i,x]=\min\{\,w_i(t)\mid t\in T_i,\ \fst(t)\in\mathcal L,\ \snd(t)=x\,\}.
    \]
    If there is no tuple $t\in T_i$ with $\fst(t)\in\mathcal L$ and $\snd(t)=x$, then both minima
    are $\infty$.
    Thus $f_I(x)=R[\vec{v},x]$.

    In the general case with $s \geq 2$, let $x \in \mathcal{U}$.
    To derive $R[\vec{v},x] = f_I(x)$, we show $R[\vec{v},x] \le f_I(x)$ and $R[\vec{v},x] \ge f_I(x)$ separately.

    \subparagraph*{\boldmath $R[\vec{v},x] \le f_I(x)$.}
    If $f_I(x)=\infty$, then the inequality $R[\vec{v},x] \le f_I(x)$ holds trivially. Hence, assume $f_I(x) < \infty$.
    In this case there exists an optimal solution for the instance on $I$ with right set $\{x\}$;
    fix such a solution and denote it by $t_{\sigma(1)}, \dots, t_{\sigma(s)}$, where
    $\sigma$ is a permutation of $I$.

\begin{align*}
    f_I(x)
    &= \val(t_{\sigma(1)}, \dots, t_{\sigma(s)}) \tag*{(1)}\\
    &= \val(t_{\sigma(1)}, \dots, t_{\sigma(s-1)}) + w_{\sigma(s)}(t_{\sigma(s)}) - \langle\,\snd(t_{\sigma(s-1)})=\fst(t_{\sigma(s)})\neq \lozenge\,\rangle \tag*{(2)}\\
    &\ge f_{I \setminus \{\sigma(s)\}}(\snd(t_{\sigma(s-1)})) \notag\\
    &\hspace{2em} +\, w_{\sigma(s)}(t_{\sigma(s)}) - \langle\,\snd(t_{\sigma(s-1)})=\fst(t_{\sigma(s)})\neq \lozenge\,\rangle \tag*{(3)}\\
    &= R\bigl[\vec{v}-\mathbf{e}_{\sigma(s)}, \snd(t_{\sigma(s-1)})\bigr] \notag\\
    &\hspace{2em} +\, w_{\sigma(s)}(t_{\sigma(s)}) - \langle\,\snd(t_{\sigma(s-1)})=\fst(t_{\sigma(s)})\neq \lozenge\,\rangle \tag*{(4)}\\
    &\ge \min_{\substack{i \in I,\ t \in T_i,\ x' \in \mathcal{U}\\ \snd(t)=x}}
        \bigl( R[\vec{v}-\mathbf{e}_i, x'] + w_i(t) - \langle\,x'=\fst(t)\neq \lozenge\,\rangle \bigr) \tag*{(5)}\\
    &= R[\vec{v}, x]. \tag*{(6)}
\end{align*}

Here, (1) uses that $t_{\sigma(1)},\dots,t_{\sigma(s)}$ is an optimal solution for the instance on the index set $I$ with right set $\{x\}$ (hence its value equals $f_I(x)$ and $\snd(t_{\sigma(s)})=x$);
(2) is the definition of $\val(\cdot)$ when appending the last tuple;
(3) uses that $\val(t_{\sigma(1)},\dots,t_{\sigma(s-1)})$ is at least the optimum $f_{I \setminus \{\sigma(s)\}}(\snd(t_{\sigma(s-1)}))$ for the subinstance with right set $\{\snd(t_{\sigma(s-1)}\}$;
(4) applies the induction hypothesis with index set $I \setminus \{\sigma(s)\}$ and right set $\{\snd(t_{\sigma(s-1)})\}$;
(5) holds because the displayed minimum ranges in particular over the triplet $(i,t,x')=(\sigma(s), t_{\sigma(s)}, \snd(t_{\sigma(s-1)}))$;
and (6) is the recursive definition of the record $R[\vec{v}, x]$.
Thus in all cases we have
$
R[\vec{v}, x] \le f_I(x).
$

\subparagraph*{\boldmath $R[\vec{v},x] \ge f_I(x)$.}
    By the recursive definition of the record,
\begin{align*}
R[\vec{v}, x]
&=
\min_{\substack{i \in I,\ t \in T_i,\ x' \in \mathcal{U}\\ \snd(t)=x}}
\Bigl( R[\vec{v}-\mathbf{e}_i, x'] + w_i(t) - \langle\,x'=\fst(t)\neq \lozenge\,\rangle \Bigr).
\tag*{(7)}
\end{align*}
If $R[\vec{v},x]=\infty$, then $f_I(x)\le R[\vec{v},x]$ holds trivially.
Hence we may assume $R[\vec{v},x]<\infty$. In particular, the minimum in (7) is then taken over a nonempty set and is finite. Choose $(i^\star,t^\star,x^\star)$ attaining the minimum.

By the induction hypothesis applied to the index set $I\setminus\{i^\star\}$ and right set $\{x^\star\}$,
\begin{equation*}
f_{I \setminus \{i^\star\}}(x^\star)
=
R[\vec{v}-\mathbf{e}_{i^\star}, x^\star] < \infty.
\tag*{(8)}
\end{equation*}
Let $\tau$ be a permutation of $I\setminus\{i^\star\}$
where $t_{\tau(1)}, \dots, t_{\tau(s-1)}$
is a solution for the index set $I\setminus\{i^\star\}$ and right set $\{x^\star\}$ with
\begin{equation*}
\val\bigl(t_{\tau(1)},\dots,t_{\tau(s-1)}\bigr)
=
f_{I \setminus \{i^\star\}}(x^\star).
\tag*{(9)}
\end{equation*}
Consider the concatenation
\[
\mathcal S \coloneqq \bigl(t_{\tau(1)},\dots,t_{\tau(s-1)},\, t^\star\bigr),
\]
which is feasible for the full instance on $I$ with right set $\{x\}$ (since $\snd(t^\star)=x$). Therefore,
\begin{align*}
f_I(x)
&\le \val(\mathcal S) \tag*{(10)}\\
&= \val\bigl(t_{\tau(1)},\dots,t_{\tau(s-1)}\bigr) + w_{i^\star}(t^\star)
   - \langle\,\snd(t_{\tau(s-1)})=\fst(t^\star)\neq \lozenge\,\rangle \tag*{(11)}\\
&= R[\vec{v}-\mathbf{e}_{i^\star}, x^\star]
   + w_{i^\star}(t^\star) - \langle\,x^\star=\fst(t^\star)\neq \lozenge\,\rangle \tag*{(12)}\\
&= R[\vec{v}, x]. \tag*{(13)}
\end{align*}

Here,
(10) uses that $f_I(x)$ is the minimum value of any solution with index set $I$ and right set $\{x\}$, and $\mathcal{S}$ is one such solution,
(11) is the definition of $\val(\cdot)$ when appending the last tuple;
(12)
applies (9) and (8) and
substitutes $\snd(t_{\tau(s-1)})=x^\star$.
Finally, (13) holds by the choice of $(i^\star,t^\star,x^\star)$ as a minimizer in (7).
Thus $f_I(x)\le R[\vec{v}, x]$.

Combining both directions yields $f_I(x)=R[\vec{v},x]$.
\end{proof}

Next, we show how to extract an optimal solution from the records.

\begin{lemma}[Solution Extraction]

\label{lem:wts-fpt-d-solution-extraction}
    The optimum value of the given \wtsLong instance is
    $
        \min_{x\in \mathcal R\cap\mathcal U} R[\mathbf{1}, x],
    $
    where $\mathbf{1}$ denotes the all-one vector.
\end{lemma}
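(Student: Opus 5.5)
By \cref{lem:wts-dp-correct} applied to $\vec{v}=\mathbf{1}$ (so $I(\vec{v})=[d]$), we have $R[\mathbf{1},x]=f_{[d]}(x)$ for every $x\in\mathcal U$. It therefore suffices to show that the optimum of the given \wtsLong instance equals $\min_{x\in\mathcal R\cap\mathcal U} f_{[d]}(x)$. Here $f_{[d]}(x)$ is the optimum of the same instance with the right set $\mathcal R$ replaced by $\{x\}$, and the ordering $\pi$ in the definition of $f$ is immaterial.

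For the inequality ``$\le$'', we fix $x\in\mathcal R\cap\mathcal U$ with $f_{[d]}(x)<\infty$ (otherwise there is nothing to show). Take an optimal solution $(p_1,\dots,p_d)$, $t_1,\dots,t_d$ of the restricted instance. It satisfies $\fst(t_{p_1})\in\mathcal L$ and $\snd(t_{p_d})=x\in\mathcal R$, so it is feasible for the original instance. Its value is given by the same expression, hence the original optimum is at most $f_{[d]}(x)$.

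For the inequality ``$\ge$'', we may assume the original optimum is finite, since otherwise both sides are $\infty$. Take an optimal solution and set $x\coloneqq\snd(t_{p_d})$. Then $x\in\mathcal R$, and $x\in\mathcal U_{p_d}\subseteq\mathcal U$ because $t_{p_d}\in T_{p_d}$. This solution is feasible for the instance with right set $\{x\}$ and has the same value, so $f_{[d]}(x)$ is at most the optimum. Hence the minimum over $\mathcal R\cap\mathcal U$ is at most the optimum.

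Both directions together give the claim; the infeasible case ($\infty$ on both sides) follows from the convention $\min\emptyset=\infty$. The only point needing care is restricting $x$ to $\mathcal U$, since the record table is indexed only by elements of $\mathcal U$. This is handled by noting that the second entry of any tuple lies in the universe. There is no substantive obstacle beyond this bookkeeping.
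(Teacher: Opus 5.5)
Your proposal is correct and follows the paper's proof: apply \cref{lem:wts-dp-correct} with $\vec v=\mathbf 1$, and identify the optimum with $\min_{x\in\mathcal R\cap\mathcal U} f_{[d]}(x)$. The paper simply asserts this last identity as holding by definition, whereas you verify both inequalities explicitly, including the check that $\snd(t_{p_d})\in\mathcal U$.
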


\begin{proof}
    By definition, the optimum value of the given instance is
    \[
        \texttt{\textup{WTS}}(T_1,\dots,T_d,w_1,\dots,w_d,\mathcal L,\mathcal R)
        =
        \min_{x\in \mathcal R\cap\mathcal U} f_{[d]}(x).
    \]
    Applying \cref{lem:wts-dp-correct} with $\vec{v} = \mathbf{1}$ yields
    \[
        f_{[d]}(x) = R[\mathbf{1}, x]
        \qquad\text{for all } x\in\mathcal R\cap\mathcal U,
    \]
    which proves the claim.
\end{proof}

Computing all records thus yields:

\begin{theorem}

\label{thm:wts-fpt-d}
    \wtsLong can be solved in time $\mathcal{O}(d \cdot 2^d \cdot |\mathcal{U}|^2)$.
\end{theorem}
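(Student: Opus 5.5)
The plan is to evaluate the record table $R$ bottom-up in order of increasing $\lVert\vec v\rVert_1$, and then read off the answer via \cref{lem:wts-fpt-d-solution-extraction}. Correctness is already supplied by \cref{lem:wts-dp-correct} together with \cref{lem:wts-fpt-d-solution-extraction}, so the only real content is the running time bound $\mathcal{O}(d\cdot 2^d\cdot|\mathcal U|^2)$.

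\textbf{Table size and base cases.} The table has $(2^d-1)\cdot|\mathcal U|$ entries. I will process vectors $\vec v$ in order of $\lVert\vec v\rVert_1$, for instance by enumerating subsets in increasing integer order, since $\vec v-\mathbf e_i$ is numerically smaller than $\vec v$. Membership tests such as $\fst(t)\in\mathcal L$ can be precomputed once for all tuples. For the base entries $R[\mathbf e_i,x]$, a single scan over $T_i$ fills all $x$ simultaneously. This costs $\mathcal O(|T_i|)\subseteq\mathcal O(|\mathcal U|^2)$ per $i$, because $T_i\subseteq\mathcal U_i^2$ can be assumed w.l.o.g.: duplicate tuples are irrelevant.

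\textbf{Main recurrence.} Naively evaluating the recurrence costs $d\cdot|T_i|\cdot|\mathcal U|$ per entry, which is too much. The key step is to reorganize the minimum for a fixed $\vec v$ and a fixed $i\in I(\vec v)$. First compute $m_i \coloneqq \min_{x'}R[\vec v-\mathbf e_i,x']$ in $\mathcal O(|\mathcal U|)$ time. Then, for a tuple $t$, the inner minimum over $x'$ equals $\min\{m_i,\ R[\vec v-\mathbf e_i,\fst(t)]-1\}$, where the second term is present only when $\fst(t)\neq\lozenge$. This is because the bonus $\langle x'=\fst(t)\neq\lozenge\rangle$ is nonzero only at the single choice $x'=\fst(t)$.

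So for fixed $(\vec v,i)$, a single pass over $T_i$ costs $\mathcal O(|T_i|+|\mathcal U|)=\mathcal O(|\mathcal U|^2)$. In that pass I update a running minimum for the bucket $x=\snd(t)$ with $w_i(t)$ plus the value above, which produces candidate values for all $x\in\mathcal U$ at once. Taking the minimum over $i\in I(\vec v)$ gives all entries $R[\vec v,\cdot]$ in time $\mathcal O(d\cdot|\mathcal U|^2)$. Summed over the $2^d$ vectors, this yields $\mathcal O(d\cdot 2^d\cdot|\mathcal U|^2)$, and the final extraction over $\mathcal R\cap\mathcal U$ costs only $\mathcal O(|\mathcal U|)$.

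\textbf{Main obstacle.} The only subtle point is avoiding the extra $|\mathcal U|$ factor from minimizing over $x'$ separately for every tuple. The precomputed $m_i$ with the single special case $x'=\fst(t)$ resolves this. I also need $\infty$ arithmetic to behave correctly, which follows from the convention $\infty+a=\infty$.
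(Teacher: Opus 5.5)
Your proposal is correct and matches the paper's proof essentially step for step. The paper also fills the table in increasing binary order, handles $\vec v=\mathbf e_i$ by one scan over $T_i$, and for $\lVert\vec v\rVert_1\ge 2$ precomputes $\mu_i=\min_{x'}R[\vec v-\mathbf e_i,x']$ and uses $\min\{\mu_i,R[\vec v-\mathbf e_i,\fst(t)]-1\}$ (just $\mu_i$ when $\fst(t)=\lozenge$), so each pair $(\vec v,i)$ costs $\mathcal O(|\mathcal U|+|T_i|)\subseteq\mathcal O(|\mathcal U|^2)$; correctness likewise comes from \cref{lem:wts-dp-correct} and \cref{lem:wts-fpt-d-solution-extraction}.
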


\begin{proof}
    Compute the table $R$ by iterating over all nonzero vectors
    $\vec v\in\{0,1\}^d$ in increasing order of their binary representation. This is safe, as
    setting a bit from one to zero always yields a smaller integer.

    For $\vec v=\mathbf e_i$, set $R[\mathbf e_i,x]=\infty$ for all $x\in\mathcal U$. Then, for
    every tuple $t\in T_i$ with $\fst(t)\in\mathcal L$, set
    \begin{equation*}
        R[\mathbf e_i,\snd(t)]
        \coloneqq
        \min\{R[\mathbf e_i,\snd(t)],w_i(t)\}.
    \end{equation*}
    This takes time $\mathcal O(|T_i|+|\mathcal U|)$.

    Now let $\lVert\vec v\rVert_1\ge 2$. Set $R[\vec v,x]=\infty$ for all $x\in\mathcal U$.
    For every $i\in I(\vec v)$, put $\vec v^{-i}\coloneqq \vec v-\mathbf e_i$ and compute
    \begin{equation*}
        \mu_i\coloneqq \min_{x'\in\mathcal U} R[\vec v^{-i},x']
    \end{equation*}
    in time $\mathcal O(|\mathcal U|)$. For every $a\in\mathcal U$, define
    \begin{equation*}
        \operatorname{best}_i(a)
        \coloneqq
        \begin{cases}
            \mu_i, & \text{if } a=\lozenge,\\
            \min\{\mu_i, R[\vec v^{-i},a]-1\}, & \text{if } a\neq\lozenge.
        \end{cases}
    \end{equation*}
    These values are exactly
    \begin{equation*}
        \min_{x'\in\mathcal U}
        \bigl(R[\vec v^{-i},x']-\langle\,x'=a\neq \lozenge\,\rangle\bigr).
    \end{equation*}
    Computing all values $\operatorname{best}_i(a)$ takes time $\mathcal O(|\mathcal U|)$.
    Then, for every tuple $t\in T_i$, set
    \begin{equation*}
        R[\vec v,\snd(t)]
        \coloneqq
        \min\{R[\vec v,\snd(t)],w_i(t)+\operatorname{best}_i(\fst(t))\}.
    \end{equation*}

    For fixed $\vec v$ and $i\in I(\vec v)$, the work is
    $\mathcal O(|\mathcal U|+|T_i|)\subseteq \mathcal O(|\mathcal U|^2)$, since
    $|T_i|\le |\mathcal U|^2$. Hence the total time over all pairs $(\vec v,i)$ with
    $i\in I(\vec v)$ is
    \begin{equation*}
        \mathcal O\left(
        |\mathcal U|^2
        \sum_{\vec v\in\{0,1\}^d} |I(\vec v)|
        \right)
        =
        \mathcal O(d\cdot 2^d\cdot |\mathcal U|^2).
    \end{equation*}
    By \cref{lem:wts-dp-correct}, this correctly computes the record table $R$.
    Finally, by \cref{lem:wts-fpt-d-solution-extraction}, the optimal value of the instance is
    $\min_{x\in\mathcal R\cap\mathcal U} R[\mathbf{1}, x]$.
\end{proof}

We can now apply this result to compute the $\copt$ recurrence given by \cref{lem:compute-copt} efficiently in a bottom-up manner, and thus obtain our second main result:

\thmdegree*

\begin{proof}
    Let $(G,T,k)$ be an instance of \prob. We compute the boundary sets $\partial B_G(X)$ for all
    full subtrees $X$ of $T$ via \cref{lem:compute-boundaries}. If the algorithm rejects, then the
    instance is a negative instance. Otherwise, this step takes time $\mathcal O(|E|+k|V_t|)$.

    We evaluate all compressed values $\copt_X$ bottom-up over $T$.
    For a leaf $X=r$, all values $\copt_X(v_L,v_R)$ with
    $v_L,v_R\in \partial B_G(X)\cup\{\lozenge\}$ can be computed directly from
    \cref{def:opt_star,def:compressed-opt,lem:opt_equals_optstar} in time
    \[
        \mathcal O\bigl(|N_G(r)|+(k+3)^2\bigr)
        =
        \mathcal O\bigl(|N_G(r)|+k^2\bigr).
    \]
    Over all leaves, this takes time
    \[
        \mathcal O\bigl(|E|+k^2|V_t|\bigr).
    \]

    Let $X$ be a full subtree with children $X_1,\dots,X_d$, where $d\le \Delta$.
    For each $i\in[d]$, set
    \[
        \mathcal U_{X_i}\coloneqq \partial B_G(X_i)\cup\{\lozenge\},
        \qquad
        \mathcal U_X\coloneqq \bigcup_{i=1}^d \mathcal U_{X_i}.
    \]
    Since $|\partial B_G(X_i)|\le k+2$ by \cref{lem:compute-boundaries}, we have
    $|\mathcal U_{X_i}|\le k+3$ for every $i$, and hence
    \[
        |\mathcal U_X|
        \le
        \sum_{i=1}^d |\partial B_G(X_i)|+1
        \le
        d(k+2)+1
        =
        \mathcal O(\Delta k).
    \]

    \begingroup
    \emergencystretch=2em
    Fix $v_L\in \partial B_G(X)\cup\{\lozenge\}$. We run the dynamic program from
    \cref{thm:wts-fpt-d} on the \wtsLong instance with tuple sets
    $\mathcal U_{X_1}^2,\dots,\mathcal U_{X_d}^2$, weight functions
    $\copt_{X_1},\dots,\copt_{X_d}$, left set
    $\rho_X^{-1}(v_L)\cap \mathcal U_X$, and universe $\mathcal U_X$.
    This computes, for every $x\in\mathcal U_X$, the optimum value $F_x$ for the corresponding
    singleton right set $\{x\}$ in time
    \par
    \endgroup
    \[
        \mathcal O\bigl(d\cdot 2^d\cdot |\mathcal U_X|^2\bigr)
        =
        \mathcal O\bigl(2^\Delta\cdot \Delta^3\cdot k^2\bigr).
    \]

    For each $v_R\in \partial B_G(X)\cup\{\lozenge\}$, \cref{lem:compute-copt} gives
    \[
        \copt_X(v_L,v_R)
        =
        \min_{x\in \rho_X^{-1}(v_R)\cap \mathcal U_X} F_x.
    \]
    Computing these minima for all choices of $v_R$ is dominated by the WTS computation.

    Since $|\partial B_G(X)\cup\{\lozenge\}|\le k+3$, we perform at most $k+3$ such WTS runs for
    the node $X$. Thus the total time spent on $X$ is
    \[
        \mathcal O\bigl(k\cdot 2^\Delta\cdot \Delta^3\cdot k^2\bigr)
        =
        \mathcal O\bigl(2^\Delta\cdot \Delta^3\cdot k^3\bigr).
    \]

    Since every internal node of $T$ has at least two children, we have
    $|V(T)|=\mathcal O(|V_t|)$. Consequently, the total time for evaluating all tables is
    \[
        \mathcal O\bigl(2^\Delta\cdot \Delta^3\cdot k^3\cdot |V_t|\bigr).
    \]
    Together with the boundary computation and the leaf computations, this yields
    \[
        \mathcal O\bigl(2^\Delta \cdot \Delta^3 \cdot k^3 \cdot |V_t| + |E|\bigr)
        \subseteq
        \mathcal O\bigl(2^\Delta \cdot \Delta^3 \cdot k^3 \cdot n + m\bigr),
    \]
    as claimed.

    Finally, since $\partial B_G(T)=\emptyset$, the instance is a positive instance if and only if
    \[
        \copt_T(\lozenge,\lozenge)-|B_G(T)|\le k. \qedhere
    \]
\end{proof}

We note that \cref{thm:thmdegree} cannot be improved to $2^{o(\Delta)}$ unless the \ETH fails (cf.~\cref{lem:eth-lower-probstar}).

\section{A Single-Exponential Algorithm for the Unconstrained Problem}

In this section, we show that \probStar{} admits a single-exponential fixed-parameter algorithm, which---recalling \cref{lem:eth-lower-probstar}---is tight under the \ETH.
To this end, we derive a linear kernel for \probStar{} and then solve it using our algorithm for \prob{} parameterized by the maximum degree of the tree.

We use the pathwidth-one splitting problem introduced by Baumann, Pfretzschner, and Rutter~\cite{povs} as a starting point: \povs{} asks, given a graph $G$, a set $S\subseteq V(G)$ and an integer $k$, whether $G$ can be transformed into a graph of pathwidth at most $1$ by at most $k$ vertex splits applied only to vertices of $S$ and the descendants produced from them by those splits. It is known (cf.\ \cref{prop:pw1-char} and~\cite{povs}) that
\probStar{} is equivalent to the restriction of \povs{} to bipartite graphs $G$ with bipartition $V_t,V_b$ and splittable set $S=V_b$.
Hence, throughout this section, we prove correctness of our reduction rules in the equivalent pathwidth-one formulation.
We remark that the linear kernel of Baumann, Pfretzschner, and Rutter~\cite{povs} (recently improved in~\cite{lwh-ikavstp-26}) is not directly suitable for our purposes, as some of Baumann, Pfretzschner, and Rutter's reduction rules do not preserve bipartiteness and the set $S$.
However, we can (and do) reuse those rules that preserve both properties.

\subparagraph{Preliminaries for the Kernel.}
Our notation follows that employed in the preceding work~\cite{povs}.
For $v\in V(G)$, let $d_G(v)\coloneqq |N_G(v)|$. A vertex is \emph{pendant} if it has degree $1$. For $A\subseteq V(G)$, let $\dist_G(v,A)$ denote the length of a shortest path from $v$ to a vertex of $A$, and let $\dist_G(v,A)=\infty$ if no such path exists. Define
$d_G^*(v)\coloneqq \bigl|\{u\in N_G(v)\mid d_G(u)>1\}\bigr|$,
$V_3(G)\coloneqq \{v\in V(G)\mid d_G^*(v)\ge 3\}$,
and
$\mu(G)\coloneqq \sum_{v\in V(G)} \max(d_G^*(v)-2,0)$.
An \emph{$N_2$-substructure} is a subgraph consisting of a vertex $r$, called its \emph{root}, adjacent to three vertices with degree at least two. Thus, a vertex $r$ is the root of an $N_2$-substructure if and only if $d_G^*(r)\ge 3$.
A connected graph is a \emph{caterpillar} if deleting all pendant vertices leaves a path. A connected graph is a \emph{pseudo-caterpillar} if deleting all pendant vertices leaves a cycle. The following lemma is adapted from~\cite{povs} and is in particular based on~\cite{emw-ecp-86}.

\begin{lemma}[Alternative Characterization]
\label{prop:pw1-char}
A bipartite graph $G=(V_t\cup V_b,E)$ admits a crossing-free two-layer drawing iff $\operatorname{pw}(G)\le 1$ iff $G$ is acyclic and contains no $N_2$-substructure.
\end{lemma}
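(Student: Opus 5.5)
I will prove the two equivalences by a cycle of implications: (a) crossing-free two-layer drawing $\Rightarrow$ acyclic and no $N_2$-substructure; (b) acyclic and no $N_2$-substructure $\Rightarrow$ every component is a caterpillar $\Rightarrow$ crossing-free two-layer drawing; and separately (c) $\operatorname{pw}(G)\le 1$ $\iff$ every component is a caterpillar, i.e., acyclic with no $N_2$-substructure. Since these are classical facts (Eades, McKay and Wormald~\cite{emw-ecp-86}; the pathwidth-one characterization as used in~\cite{povs}), I will mostly take care that the definitions here match: $V_t,V_b$ are the two sides, crossings are defined combinatorially via $\pi_t,\pi_b$, and an $N_2$-substructure is a root with three neighbours of degree at least~$2$.

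For (a), I fix a crossing-free drawing $(\pi_t,\pi_b)$. Suppose first that $G$ contains a cycle. I take its leftmost vertex in the combined left-to-right sense, say on the top layer, and follow both cycle edges. Each cycle vertex has two cycle edges, and I would argue that the ``rightmost'' and ``leftmost'' edges along the cycle must interleave somewhere. Concretely, I expect the cleanest route is: the edges of a cycle in a two-layer drawing, read as intervals, force two edges $us,vt$ with $u\prec v$ and $t\prec s$. Now suppose $r$ is the root of an $N_2$-substructure, with non-pendant neighbours $a\prec b\prec c$ on the opposite layer. The middle neighbour $b$ has a further neighbour $b'\neq r$. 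If $b'\prec r$, then $b'b$ crosses $ra$; if $r\prec b'$, it crosses $rc$. This step is short.

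For (b), I use acyclicity to write each component as a tree. Removing its pendant vertices leaves a tree $P$ in which every vertex has $P$-degree at most $2$, since $d^*\le 2$; hence $P$ is a path and the component is a caterpillar. Next I draw each caterpillar by walking along the spine $p_1,\dots,p_\ell$, which alternates layers. At each spine vertex, its pendant neighbours are placed on the opposite layer between the previous spine neighbour and the next one. A direct check then shows that every edge is either a spine edge or a pendant edge, and that edges are ordered monotonically on both layers, so the drawing is crossing-free. Finally, the components are concatenated left to right; the $\pi_t$ and $\pi_b$ blocks follow the same order, so no crossings arise between components.

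For (c), caterpillars have pathwidth at most $1$ via bags $\{p_j,p_{j+1}\}$, interleaved with bags $\{p_j,x\}$ for each pendant $x$. Conversely, a cycle has pathwidth $2$. For the $N_2$ case, the spider obtained by subdividing each edge of $K_{1,3}$ has pathwidth $2$ by the standard argument: the middle of the three branches must be separated by a bag containing the root together with another vertex. Since pathwidth is minor-monotone, any graph containing a cycle or an $N_2$-substructure has pathwidth at least $2$. The $N_2$-substructure contains that spider as a subgraph when $G$ is acyclic; if $G$ is not acyclic, the cycle case already applies.

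I expect the main obstacle to be the cycle case of (a): rigorously extracting a crossing pair from a cycle in purely combinatorial terms. I will do this by taking the cycle vertex $u$ that is leftmost on the top layer, with its two cycle neighbours $s\prec_{\pi_b} t$. The path from $t$ back to $s$ avoiding $u$ must at some point use an edge from a bottom vertex at or left of $s$ to a top vertex right of $u$, or it must re-enter around $u$. Such an edge crosses $ut$ or $us$; this is an intermediate-value argument on the positions along the path.
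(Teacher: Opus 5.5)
The paper gives no proof of this lemma. It imports the statement, saying it is adapted from Baumann, Pfretzschner, and Rutter~\cite{povs} and based on the caterpillar characterization of Eades, McKay, and Wormald~\cite{emw-ecp-86}. Your proposal instead proves it from scratch, and the argument is correct:
\begin{itemize}
\item (a) rules out cycles and $N_2$-substructures directly from the combinatorial crossing definition.
\item (b) uses $d^*_G\le 2$ to show every component is a caterpillar, then draws it explicitly and concatenates component blocks in the same order on both layers.
\item (c) gets $\operatorname{pw}\le 1$ for caterpillar forests from the explicit width-$1$ decomposition. For the converse it uses subgraph-monotonicity of pathwidth together with $\operatorname{pw}(C_n)=2$ and the subdivided $K_{1,3}$, which acyclicity lets you extract from any $N_2$-substructure.
\end{itemize}
Your route checks the lemma against exactly this paper's conventions (fixed bipartition, combinatorial crossings, $N_2$ defined via $d^*_G$), which the citation leaves implicit. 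The paper's route buys brevity for a classical fact.

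The cycle case you flag as the main obstacle is actually immediate. Let $u$ be the leftmost top vertex of the cycle, with cycle neighbours $s\prec_{\pi_b} t$. Let $a$ be the other cycle neighbour of $s$; then $a\neq u$, so $u\prec_{\pi_t} a$, and the edge $as$ crosses $ut$. No intermediate-value argument or ``re-entering'' case is needed.

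Your one-line justification that the subdivided $K_{1,3}$ has pathwidth $2$ deserves to be spelled out. In a width-$1$ decomposition, pick bags $\{r,a\},\{r,b\},\{r,c\}$, and assume w.l.o.g.\ that $\{r,b\}$ lies between the other two. Every bag in between contains $r$, so a bag containing $b$ and its other neighbour $b'$ must lie outside this range. Then the interval of bags containing $b$ passes through $\{r,a\}$ or $\{r,c\}$, which yields a bag of size $3$, a contradiction.
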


\begin{figure}
    \centering
    \includegraphics[page=7]{figures.pdf}
    \caption{Example applications of Rules 5 and 6.}
    \label{fig:kernel_rules}
\end{figure}

\begin{definition}[Reduction rules for {\normalfont\probStar}]\label{def:classic-rules}
Let $(G=(V_t\cup V_b,E),k)$ be an instance of \probStar.
We use the following reduction rules. Rule~6 is applied only when Rules~1 and~5 are not applicable.
\begin{description}
    \item[Rule 1.] If $G$ contains a vertex with at least two pendant neighbors, delete all but one of them.
    \item[Rule 2.] If $G$ contains a connected component that is a caterpillar, delete it.
    \item[Rule 3.] If $G$ contains a connected component that is a pseudo-caterpillar, then if $k=0$, reject; otherwise delete it and decrease $k$ by $1$.
    \item[Rule 4.] If $\mu(G)>2k$, reject.
    \item[Rule 5.] If $G$ contains a vertex $v$ with $\dist_G(v,V_3(G))\ge 2$, delete all pendants adjacent to $v$.     \item[Rule 6.] If $G$ contains an induced path $v_0v_1v_2v_3v_4$ such that $v_0,v_2,v_4\in V_t$, $v_1,v_3\in V_b$, every $v_i$ has distance at least $2$ from $V_3(G)$, and $v_0,v_4$ are not pendant, delete $v_1,v_2,v_3$, add a new vertex $b\in V_b$, and add the edges $v_0b$ and $bv_4$.
\end{description}
\end{definition}

See~\cref{fig:kernel_rules} for an example of the application of Rules~5 and 6.

\begin{longlemma}\label{lem:old-rules-safe}
Rules~1--4 are safe.
\end{longlemma}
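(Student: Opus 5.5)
We prove safety of each rule separately, working throughout in the equivalent pathwidth-one formulation of \cref{prop:pw1-char}: a split graph admits a crossing-free two-layer drawing iff it is acyclic and has no $N_2$-substructure. For every rule we must show that the reduced instance is positive iff the original one is. In all cases we will also note that bipartiteness and the restriction of splits to $V_b$ are preserved. This holds because each rule only deletes vertices, or deletes a component while decreasing $k$.

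\textbf{Rule 1.} Let $v$ have pendant neighbors $p_1,\dots,p_r$ with $r\ge 2$. The forward direction is immediate. Deleting vertices never creates cycles or $N_2$-substructures, and a drawing restricts to a drawing.

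For the converse, take a solution of the reduced instance that keeps $p_1$. Let $p_1$ be adjacent to the descendant $v^*$ of $v$. Reinsert $p_2,\dots,p_r$ and, when splitting $v$, assign their edges to $v^*$. The added vertices are pendant, so no cycle appears. They also do not increase $d^*$ of any vertex, since $d^*$ counts only neighbors of degree at least~$2$. Hence no new $N_2$-substructure arises. Splitting $p_i$ is never needed, because pendants have a single edge.

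One subtlety remains: a pendant $p_i$ lying in $V_b$ could, in principle, have been split in the original solution. A degree-one vertex cannot be split without creating an isolated vertex, so this case does not occur.

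\textbf{Rules 2 and 3.} A caterpillar component already has pathwidth at most~$1$. Components are independent, so deleting such a component preserves the minimum number of splits exactly.

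For a pseudo-caterpillar component $C$, we show that exactly one split is both necessary and sufficient on $C$. It is necessary because $C$ contains a cycle and $k=0$ allows no splits, which is why the rule rejects in that case. It is sufficient because of the following argument. The cycle is even, being in a bipartite graph, so it contains a vertex $b\in V_b$. Split $b$ so that one copy keeps exactly one cycle edge and the other copy keeps the remaining edges. This opens the cycle into a path. The pendants stay attached to path vertices, so the result is a caterpillar.

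The one point to check is that this split creates no $N_2$-substructure and leaves no isolated copy. If $b$ has pendant neighbors, we give them to the copy that keeps the larger share of edges. Since the spine becomes a path, every spine vertex has $d^*\le 2$.

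Independence of components then shows that the optimum for $G$ equals the optimum for $G-C$ plus one.

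\textbf{Rule 4.} We use the counting argument of Baumann et al.~\cite{povs}: a single split decreases $\mu$ by at most~$2$. The reason is that splitting $v$ into $v_1,v_2$ redistributes $d^*(v)$ between the copies. This reduces $\max(d^*-2,0)$ at $v$ by at most $2$ in total, and it cannot increase $d^*$ anywhere else.

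The end state has $\mu=0$, since it contains no $N_2$-substructure. Hence at least $\mu(G)/2$ splits are needed, and $\mu(G)>2k$ certifies a negative instance.

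The main obstacle is this per-split bound on the drop of $\mu$. The difficulty is that a split can lower a copy's degree to~$1$. That in turn can decrease $d^*$ at a neighbor $u$ by one, which would be an additional decrease of $\mu$. We handle this by a direct local case analysis, following~\cite{povs}, showing that the total decrease per split is still at most~$2$. Alternatively, we can cite their lemma verbatim, since Rule~4 is theirs and their argument is oblivious to bipartiteness and to $S$.
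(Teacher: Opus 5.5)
Your proposal is correct, but it takes a different route from the paper. The paper's proof is almost entirely by citation. It identifies Rules~1, 2, 3, and~4 with Reduction Rules~1, 2, 3, and~9 of Baumann, Pfretzschner, and Rutter~\cite{povs}, restricted to bipartite instances with $S=V_b$. Its only new argument is the one you also give for Rule~3: the spine of a pseudo-caterpillar is an even cycle and hence contains a vertex of $V_b$. So the ``no splittable spine vertex'' rejection case of the original rule never occurs, and the component always costs exactly one split.

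You instead reprove Rules~1--3 directly from \cref{prop:pw1-char}:
\begin{itemize}
\item for Rule~1, you reinsert the deleted pendants at the descendant $v^*$ of $v$;
\item for Rule~3, you split a $V_b$-vertex of the cycle so that one copy keeps a single cycle edge;
\item throughout, you use independence of components.
\end{itemize}
This makes the lemma self-contained and shows explicitly that nothing depends on $S$ beyond $S=V_b$. The paper's route is shorter, but it inherits correctness from~\cite{povs} and needs the remark that the reused rules preserve bipartiteness and $S$, which you also check.

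There are two small points to tidy. First, in Rule~1 it is not quite true that reinsertion increases no $d^*$. If $v^*$ was pendant in the reduced witness, then $d^*(p_1)$ rises from $0$ to $1$. This is harmless, since $p_1$ has degree one.

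Second, the obstacle you flag for Rule~4 closes in three short cases. Note that any split vertex $v$ has degree at least two, and $d^*(v_1)+d^*(v_2)=d^*(v)$ because neighbors keep their degrees.
\begin{itemize}
\item If both copies become pendant, then $d_G(v)=2$. So $v$ contributed $0$, and only two neighbors lose one each.
\item If exactly one copy $v_1$ becomes pendant, the loss at $v$ is at most $d^*(v_1)\le 1$, and the loss at the unique neighbor of $v_1$ is at most $1$.
\item Otherwise, no neighbor's $d^*$ changes, and the loss at $v$ is at most $2$.
\end{itemize}
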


\begin{proof}
Rules~1 and~2 are precisely Reduction Rules~1 and~2 of~\cite{povs}, restricted to bipartite instances with splittable set $S=V_b$.

Rule~3 is Reduction Rule~3 of~\cite{povs}. In our setting, the rejection case of that rule never occurs because the spine of a pseudo-caterpillar in a bipartite graph is an even cycle and hence contains a vertex of $S=V_b$. Thus the rule always costs exactly one split; if $k=0$, the instance is negative, and otherwise deleting the component and decreasing $k$ by $1$ is safe.

Rule~4 is Reduction Rule~9 of~\cite{povs}.
\end{proof}

\begin{longlemma}\label{lem:rule5-safe}
Rule~5 is safe.
\end{longlemma}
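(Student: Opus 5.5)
The plan is to argue both directions through the pathwidth-one formulation of \cref{prop:pw1-char}: a split graph is a solution if and only if it is acyclic and contains no $N_2$-substructure. Let $G^-$ be the graph obtained from $G$ by deleting the set $P$ of pendants adjacent to $v$.

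\emph{Forward direction.} Suppose $(G,k)$ is positive, witnessed by a split graph $H$. Apply the same splits, restricted to $G^-$; splits of vertices in $P$ can be dropped, since pendants need no splitting. The resulting graph $H^-$ is an induced subgraph of $H$. Acyclicity and the absence of $N_2$-substructures are closed under taking subgraphs, because $d^*$ can only decrease. So $(G^-,k)$ is positive.

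\emph{Backward direction.} Let $H'$ be obtained from $G^-$ by at most $k$ splits and have pathwidth at most $1$. Reinsert all of $P$ as pendants of a single, carefully chosen descendant $v^*$ of $v$ in $H'$; this uses no additional splits. Attaching pendants never creates cycles. It also never changes $d^*$ of any vertex except possibly a neighbor $u'$ of $v^*$, and only if $v^*$ was itself pendant (or isolated) in $H'$. The choice of $v^*$ is as follows.
\begin{enumerate}
\item[(a)] If some descendant of $v$ has degree at least $2$ in $H'$, take it as $v^*$. Then $d^*_{H'}$ is unchanged everywhere, and pendants at $v^*$ do not count towards $d^*(v^*)$, so no $N_2$-substructure arises.
\item[(b)] If $v$ is isolated in $G^-$, its component in $G$ was a star. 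Reattaching $P$ restores that star, which is a caterpillar.
\item[(c)] Otherwise every descendant of $v$ is a pendant in $H'$. Take any such descendant $v^*$, with unique neighbor $u'$, a descendant of an original neighbor $u$ of $v$.
\end{enumerate}

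The key inequality for case (c) is $d^*_{H'}(u')\le d^*_{G}(u)-1$. It rests on two facts.
\begin{itemize}
\item[(i)] Distinct neighbors of $u'$ descend from distinct neighbors of $u$, because splits partition edges and $G$ is simple. Moreover, a neighbor that is non-pendant in $H'$ comes from a neighbor that is non-pendant in $G$.
\item[(ii)] The neighbor $v$ of $u$ is counted in $d^*_G(u)$, since $v$ is non-pendant in $G$: it has a pendant together with $u$. But its descendant $v^*$ is pendant in $H'$, so it is not counted in $d^*_{H'}(u')$.
\end{itemize}
Since $\dist_G(v,V_3(G))\ge 2$, we have $u\notin V_3(G)$, i.e., $d^*_G(u)\le 2$. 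Hence $d^*_{H'}(u')\le 1$, and after reattaching $P$ we get $d^*(u')\le 2$, so $u'$ is not the root of an $N_2$-substructure. No other vertex changes its $d^*$ value, so the resulting graph is acyclic and $N_2$-free.

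I expect the main obstacle to be justifying (i) and (ii), i.e., that $d^*$ cannot increase under splits and that it drops by exactly the contribution of $v$ in case (c). This is where the distance condition on $v$ is used: it controls the neighbors of $v$, not only $v$ itself. The remaining cases are routine.
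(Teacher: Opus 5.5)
Your proof is correct and takes essentially the same route as the paper. The forward direction restricts the split sequence. The backward direction attaches all deleted pendants to a single descendant $v^\star$ of $v$, with the same three cases: a descendant of degree at least $2$; all descendants isolated, giving a star; or a pendant descendant whose neighbor descends from some $u\notin V_3(G)$ that already counts $v$ as a non-pendant neighbor. Your facts (i) and (ii) spell out the edge-injectivity and degree-monotonicity details that the paper leaves implicit.
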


\begin{proof}
Let $G^-$ be obtained from $G$ by deleting the pendant neighbors of a vertex $v$ with $\dist_G(v,V_3(G))\ge 2$.

For the forward direction, apply the same split sequence to $G^-$. The resulting graph is an induced subgraph of a graph of pathwidth at most $1$, and hence also has pathwidth at most $1$.

For the reverse direction, let $H^-$ be obtained from $G^-$ by at most $k$ allowed splits with $\operatorname{pw}(H^-)\le 1$. We reinsert the deleted pendant vertices and assign all their edges to one descendant $v^\star$ of $v$.

If some descendant of $v$ has degree at least $2$ in $H^-$, choose it as $v^\star$. Then only pendant neighbors are added to a vertex that already has degree at least $2$ in $H^-$. Thus no vertex of $H^-$ changes from pendant to non-pendant, and no vertex of $H^-$ gains a new non-pendant neighbor.

Otherwise, every descendant of $v$ has degree at most $1$ in $H^-$. If every descendant of $v$ is isolated in $H^-$, then reinserting the deleted vertices creates a star centered at one descendant of $v$. Otherwise choose $v^\star$ adjacent to a vertex $u^\star$, and let $u$ be the ancestor of $u^\star$ in $G$. Since $uv\in E(G)$ and $\dist_G(v,V_3(G))\ge 2$, we have $u\notin V_3(G)$. As $v$ is a non-pendant neighbor of $u$ in $G$, the vertex $u$ has at most one non-pendant neighbor different from $v$. Hence, after $v^\star$ becomes non-pendant, the vertex $u^\star$ has at most two non-pendant neighbors.

The construction cannot create a cycle, since it only adds pendant vertices. The only vertex of $H^-$ that can gain a new non-pendant neighbor is $u^\star$, and the argument above shows that $u^\star$ has at most two non-pendant neighbors afterwards. Hence no $N_2$-substructure is created. By \cref{prop:pw1-char}, the resulting graph has pathwidth at most $1$. The number of splits is unchanged.
\end{proof}

\begin{longlemma}\label{lem:rule6-safe}
Rule~6 is safe.
\end{longlemma}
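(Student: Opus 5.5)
The plan is to argue in the pathwidth-one formulation of \cref{prop:pw1-char}: a graph has pathwidth at most $1$ if and only if it is acyclic and has no vertex $r$ with $d^*(r)\ge 3$. The first step is to pin down the local structure. All of $v_1,v_2,v_3$ have distance at least $2$ from $V_3(G)$. Rule~6 is only applied when Rules~1 and~5 are not applicable, and Rule~5 being inapplicable means none of them has a pendant neighbor. Since each has $d^*_G\le 2$, and the path is induced with non-pendant endpoints $v_0,v_4$, we get $N_G(v_1)=\{v_0,v_2\}$, $N_G(v_2)=\{v_1,v_3\}$ and $N_G(v_3)=\{v_2,v_4\}$. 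So Rule~6 simply shortens a chain of degree-$2$ vertices from three internal vertices to one bottom vertex $b$, keeping the parity of sides. Let $G^-$ be the reduced graph.

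For the forward direction, take a split sequence of $G$ producing $H$ with $\operatorname{pw}(H)\le 1$, and let $G^-$ mimic it on all vertices outside $\{v_1,v_3\}$.
\begin{itemize}
\item If neither $v_1$ nor $v_3$ is split, $H$ contains a chain $v_0'v_1v_2v_3v_4'$ whose internal vertices have degree $2$. Suppressing it to $v_0'bv_4'$ preserves acyclicity. It also preserves every value $d^*$, since $b$ is non-pendant exactly when $v_1$ and $v_3$ were.
\item If at least one of $v_1,v_3$ is split, we spend one split on $b$, separating its two edges into $b_1\sim v_0'$ and $b_2\sim v_4'$. The result is obtained from $H$ by deleting vertices and replacing a non-pendant neighbor of $v_0'$ or $v_4'$ by a pendant one. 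This can only decrease $d^*$ values and cannot create cycles, and the number of splits does not increase.
\end{itemize}

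For the backward direction, take $H^-$ from $G^-$.
\begin{itemize}
\item If $b$ is unsplit, subdivide $v_0'bv_4'$ back into $v_0'v_1v_2v_3v_4'$. Subdividing preserves acyclicity, the new vertices have $d^*\le 2$, and old $d^*$ values are unchanged.
\item If $b$ is split into $b_1\sim v_0'$ and $b_2\sim v_4'$, split $v_1$ into $v_1^a\sim v_0'$ and $v_1^b\sim v_2$, leaving $v_2,v_3$ intact. This uses the same number of splits. The vertex $v_4'$ now receives the non-pendant neighbor $v_3$ in place of the pendant $b_2$, and the new part is a tree hanging at $v_4'$, so no cycle arises.
\end{itemize}

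The main obstacle is ruling out an $N_2$-substructure in that last case. I would handle it exactly as in the proof of \cref{lem:rule5-safe}. Since $\dist_G(v_4,V_3(G))\ge 2$ and $v_3$ is a non-pendant neighbor of $v_4$ in $G$, $v_4$ has at most one other non-pendant neighbor, so $d^*(v_4')\le 2$ afterwards. If $v_4'$ was pendant in $H^-$ (adjacent only to $b_2$), nothing else changes. Otherwise $v_4'$ was already non-pendant, and the only other vertex whose $d^*$ could change is a neighbor $u^\star$ of $v_4'$. Its ancestor $u$ is not in $V_3(G)$ and has $v_4$ as a non-pendant neighbor, so $u^\star$ ends with at most two non-pendant neighbors. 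One detail to check carefully is that the distance conditions referring to $V_3(G)$ transfer correctly to $G^-$; this holds because the rule does not change any $d^*$ value outside the chain.
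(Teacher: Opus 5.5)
Your proposal is correct and follows the paper's proof essentially step for step. The paper uses the same four cases: suppressing or subdividing the chain when $v_1,v_3$ (resp.\ $b$) are unsplit, trading the split(s) of $v_1,v_3$ for one split of $b$, and replacing a split of $b$ by a split of $v_1$ into $v_1^0\sim v_0$ and $v_1^2\sim v_2$. The only differences are cosmetic: the paper first shows that $v_0,v_4$ also have degree exactly $2$ and that $b$ can be split at most once (since splits never create isolated vertices, a justification your case split leaves implicit). With these facts, your Rule-5-style check at $v_4$ and $u^\star$ becomes immediate, because $v_4$ is an unsplit top vertex of degree $2$ in $H^-$. Its degree is unchanged by the replacement, so its pendant case is vacuous and no neighbor's $d^*$ changes.
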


\begin{proof}
Let $G^-$ be obtained from $G$ by applying Rule~6 to the path $v_0v_1v_2v_3v_4$, and let $b$ be the new vertex. Since Rule~5 is not applicable, no $v_i$ has a pendant neighbor. Since every $v_i$ has distance at least $2$ from $V_3(G)$, no $v_i$ belongs to $V_3(G)$. Each internal vertex already has its two path neighbors, and $v_0$ and $v_4$ are not pendant by assumption; as a third neighbor would also be non-pendant and hence put the vertex in $V_3(G)$, it follows that each $v_i$ has degree exactly $2$ in $G$.

First suppose that $(G,k)$ is a yes-instance, and let $H$ be obtained from $G$ by at most $k$ allowed splits with $\operatorname{pw}(H)\le 1$.

If neither $v_1$ nor $v_3$ is split in $H$, replace the path $v_0v_1v_2v_3v_4$ by the path $v_0bv_4$. Any cycle using $v_0bv_4$ would give a cycle in $H$ by replacing $v_0bv_4$ with $v_0v_1v_2v_3v_4$, and no other cycle can be new. Moreover, no $N_2$-substructure is created: every old vertex keeps at most the same non-pendant neighbors, while $v_0$ and $v_4$ each only replace one non-pendant path neighbor by the non-pendant vertex $b$, and $b$ has degree $2$. Thus the resulting graph has pathwidth at most $1$ by \cref{prop:pw1-char}.

If at least one of $v_1$ and $v_3$ is split in $H$, delete $v_2$ and all descendants of $v_1$ and $v_3$, and add two descendants $b_0,b_4$ of $b$ with edges $v_0b_0$ and $b_4v_4$. The number of splits does not increase: the deleted vertices include the descendant of $v_2$, at least two descendants of one of $v_1,v_3$, and at least one descendant of the other, while the reduced witness uses only the two descendants $b_0,b_4$ for the single new vertex $b$. Deleting vertices cannot create a cycle, and adding pendant vertices $b_0,b_4$ to $v_0,v_4$ cannot create a cycle. Moreover, every old vertex has at most the same non-pendant neighbors as before; the only new neighbors are $b_0$ and $b_4$, and both are pendant. The new vertices themselves have at most one non-pendant neighbor. Hence no $N_2$-substructure is created, and $(G^-,k)$ is a yes-instance by \cref{prop:pw1-char}.

Conversely, suppose that $(G^-,k)$ is a yes-instance, and let $H^-$ be obtained from $G^-$ by at most $k$ allowed splits with $\operatorname{pw}(H^-)\le 1$. Since $b$ has degree $2$ in $G^-$ and splits creating isolated vertices are excluded, we may assume that either $b$ is not split, or it is split once into two descendants, one adjacent to $v_0$ and one adjacent to $v_4$.

If $b$ is not split, replace the path $v_0bv_4$ by the path $v_0v_1v_2v_3v_4$. This is a subdivision of the path between $v_0$ and $v_4$, so it creates no cycle; it also does not increase the number of non-pendant neighbors of any old vertex, and the inserted vertices have at most two non-pendant neighbors. Thus no $N_2$-substructure is created. By \cref{prop:pw1-char}, the resulting graph has pathwidth at most $1$.

If $b$ is split into descendants $b_0$ and $b_4$ adjacent to $v_0$ and $v_4$, respectively, delete $b_0$ and $b_4$, split $v_1$ into two descendants $v_1^0$ and $v_1^2$, and add the edges
\begin{equation*}
    v_0v_1^0,\qquad v_1^2v_2,\qquad v_2v_3,\qquad v_3v_4.
\end{equation*}
This replaces the split of $b$ by one split of $v_1$. The replacement attaches a pendant vertex to $v_0$ and replaces the pendant neighbor $b_4$ of $v_4$ by the path $v_4v_3v_2v_1^2$, so it cannot create a cycle. It also creates no $N_2$-substructure: all old vertices other than $v_4$ gain no non-pendant neighbor, $v_1^0$ and $v_1^2$ are pendant, $v_2$ has one non-pendant neighbor, $v_3$ has two non-pendant neighbors, and $v_4$ has at most two non-pendant neighbors because $v_4$ has exactly one neighbor outside the path in $G$. Hence the resulting graph has pathwidth at most $1$ by \cref{prop:pw1-char}.
\end{proof}

\begin{lemma}\label{lem:kernel-vertex-bound}
Assume Rules~1--6 have been applied exhaustively without rejecting. Then $|V_t| \le 17k$ and $|V_b| \le 17k$. Consequently, $|V(G)| \le 34k$.
\end{lemma}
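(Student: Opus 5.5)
We bound the size of the reduced graph by charging every vertex to the set $V_3(G)$, whose size and neighbourhood are controlled by $\mu(G)\le 2k$ (Rule~4 did not reject). First, since every vertex of $V_3(G)$ contributes at least $1$ to $\mu(G)$, we get $|V_3(G)|\le 2k$. Moreover,
\begin{equation*}
\sum_{v\in V_3(G)} d_G^*(v)\le \mu(G)+2|V_3(G)|\le 6k,
\end{equation*}
so $V_3(G)$ has at most $6k$ non-pendant neighbours in total. By Rule~1, each vertex has at most one pendant neighbour, which bounds the pendants attached to $V_3(G)$ and to its neighbours.

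Next we classify the remaining vertices by their distance to $V_3(G)$. Let $N_1$ be the vertices at distance exactly $1$. By Rule~5, pendants occur only at vertices of $V_3(G)\cup N_1$, and each such vertex carries at most one pendant by Rule~1. Hence the number of pendants is at most $|V_3(G)|+|N_1|$. All other vertices lie at distance at least $2$ from $V_3(G)$, are non-pendant, and (not being in $V_3(G)$) have at most two non-pendant neighbours. These vertices therefore induce a union of paths (cycles are impossible: a component that avoids $V_3(G)$ entirely would be a caterpillar or pseudo-caterpillar, removed by Rules~2--3). Each such path hangs between vertices of $N_1$. Rule~6 forbids an induced path on five far vertices with the prescribed side pattern, so each of these ``long chains'' has constant length, roughly at most $4$ internal far vertices. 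The number of chains is at most $|N_1|/2+$ the number of chains ending at the same vertex. Since vertices of $N_1$ are not in $V_3(G)$, each has at most one non-pendant neighbour besides its neighbour in $V_3(G)$, so it starts at most one chain. Also $|N_1|\le 6k$ from the sum bound above.

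Putting this together gives $|V(G)|\le |V_3(G)|+2|N_1|+c\cdot|N_1|/2$ for a small constant $c$. To get the exact bound $17k$ per side, I would redo the count separately for $V_t$ and $V_b$, exploiting bipartiteness: along a chain the sides alternate, and Rule~6 needs $v_0,v_2,v_4\in V_t$, so a chain contains at most a constant number of vertices of each side. I would then charge each side to the $\le 6k$ edges leaving $V_3(G)$ (chains pair up two such edges) plus $|V_3(G)|\le 2k$.

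The main obstacle is the precise case analysis of chain lengths under Rule~6. Its applicability depends on which side the chain's middle vertex lies on and on whether the endpoints are pendant, so the constant per chain differs by parity. I would first try charging each $N_1$-vertex a budget of about $2.5$ per side, i.e.\ itself, its pendant, and half a chain, and verify that $2k+6k\cdot 2.5\le 17k$. I would then adjust the accounting if the boundary cases (chains that attach back to the same $V_3$ vertex, or to two $N_1$ neighbours of the same root) require it.
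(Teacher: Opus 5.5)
Your approach is the paper's: the bounds $|V_3|\le 2k$ and $\sum_{v\in V_3}d_G^*(v)\le 6k$ from $\mu(G)\le 2k$; Rule~5 to confine pendants to $V_3\cup N_1$; the degree-two chains at distance at least $2$, whose length Rule~6 caps; at most $|N_1\setminus P|/2\le 3k$ chains, since each non-pendant $N_1$-vertex touches at most one chain; and a per-side tally of $2k+6k+9k$, which is your $2k+6k\cdot 2.5$. (Chains can have five vertices, in the pattern $V_b,V_t,V_b,V_t,V_b$, not just four. Your budget of $1.5$ per half-chain already allows for this.)

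There is one concrete error, and it matters because your tally $2k+15k=17k$ has no slack. The claim $|N_1|\le 6k$ is false. $d_G^*$ counts only non-pendant neighbours, so the up to $|V_3|$ pendant neighbours of $V_3$ lie in $N_1$ but are not covered by $\sum_{v\in V_3}d_G^*(v)\le 6k$; only $|N_1\setminus P|\le 6k$ holds, where $P$ is the set of pendant vertices. Your budget of $2.5$ (the vertex itself, its pendant, half a chain) is correct for non-pendant $N_1$-vertices. But then the pendants of $V_3$ are uncounted, and bounding them separately by $|V_3|\le 2k$ yields only $19k$.

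The missing idea is to count $V_3$ and its pendants jointly, per side. By bipartiteness, a pendant in $V_t$ adjacent to $V_3$ hangs off a vertex of $V_3\cap V_b$, and by Rule~1 that vertex carries at most one pendant. Hence
$|V_3\cap V_t|+|N_1\cap P\cap V_t|\le |V_3\cap V_t|+|V_3\cap V_b|=|V_3|\le 2k$.
With this, your $2k$ term absorbs those pendants, and the rest of your accounting goes through exactly as in the paper.
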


\begin{proof}
If $V_3=\emptyset$, then every vertex $v$ satisfies $d_G^*(v)\le 2$. Hence, in every connected component, deleting all pendant vertices leaves a graph of maximum degree at most two, that is, a path or a cycle. Therefore every connected component is either a caterpillar or a pseudo-caterpillar. Since Rules~2 and~3 have been applied exhaustively, it follows that $G$ is empty. In particular, $|V_t|=0$.

So assume from now on that $V_3\neq\emptyset$. Define
\begin{equation*}
V_3^1 \coloneqq \{v\in V(G)\mid \dist_G(v,V_3)=1\}
\quad\text{and}\quad
V_3^{\ge 2} \coloneqq \{v\in V(G)\mid \dist_G(v,V_3)\ge 2\},
\end{equation*}
where vertices in connected components disjoint from $V_3$ have distance $\infty$ from $V_3$.
Furthermore, let $P\subseteq V(G)$ be the set of pendant vertices.

Then
\begin{equation*}
V_3\cap V_t,\quad
V_3^1\cap P\cap V_t,\quad
(V_3^1\setminus P)\cap V_t,\quad
V_3^{\ge 2}\cap P\cap V_t,\quad
(V_3^{\ge 2}\setminus P)\cap V_t
\end{equation*}
forms a partition of $V_t$.

By Rule~4, we have $\mu(G)\le 2k$. Thus, by the inequalities proved at the beginning of \cite[Lemma~3]{povs},
\begin{equation}\label{eq:v3-bounds}
|V_3|\le 2k
\qquad\text{and}\qquad
\sum_{v\in V_3} d_G^*(v)\le 6k.
\end{equation}

We now bound the five parts of the partition.

\begin{description}
    \item[$V_3\cap V_t$ and $V_3^1\cap P\cap V_t$:]
    Every vertex in $V_3^1\cap P\cap V_t$ is a pendant neighbor of a vertex in $V_3\cap V_b$. By Rule~1, each vertex has at most one pendant neighbor. Hence
    \begin{equation*}
    |V_3\cap V_t| + |V_3^1\cap P\cap V_t|
    \le |V_3\cap V_t| + |V_3\cap V_b|
    = |V_3|
    \le 2k.
    \end{equation*}

    \item[$(V_3^1\setminus P)\cap V_t$:]
    Every vertex in $(V_3^1\setminus P)\cap V_t$ is adjacent to some vertex of $V_3\cap V_b$ and has degree greater than one. Therefore it is counted by $d_G^*(u)$ for at least one vertex $u\in V_3\cap V_b$. Thus
    \begin{equation*}
    |(V_3^1\setminus P)\cap V_t|
    \le \sum_{u\in V_3\cap V_b} d_G^*(u).
    \end{equation*}

    \item[$V_3^{\ge 2}\cap P\cap V_t$:]
    Let $x\in V_3^{\ge 2}\cap P\cap V_t$, and let $y$ be its unique neighbor.
    Since $x$ has distance at least two from $V_3$, we have $y\notin V_3$.
    Moreover, by Rule~5, no vertex of $V_3^{\ge 2}$ is adjacent to a pendant vertex. Since $x$ is pendant and adjacent to $y$, it follows that $y\notin V_3^{\ge 2}$.
    Hence $y\in V_3^1$.

    Also, $y\notin P$, since otherwise $x$ and $y$ would form a connected component on two vertices, which is a caterpillar and would have been deleted by Rule~2.
    Therefore $y\in (V_3^1\setminus P)\cap V_b$.

    By Rule~1, every vertex of $(V_3^1\setminus P)\cap V_b$ has at most one pendant neighbor. Hence
    \begin{equation*}
    |V_3^{\ge 2}\cap P\cap V_t|
    \le |(V_3^1\setminus P)\cap V_b|.
    \end{equation*}
    Every vertex in $(V_3^1\setminus P)\cap V_b$ is adjacent to some vertex of $V_3\cap V_t$ and has degree greater than one, so
    \begin{equation*}
    |(V_3^1\setminus P)\cap V_b|
    \le \sum_{u\in V_3\cap V_t} d_G^*(u).
    \end{equation*}
    Consequently,
    \begin{equation*}
    |V_3^{\ge 2}\cap P\cap V_t|
    \le \sum_{u\in V_3\cap V_t} d_G^*(u).
    \end{equation*}

    \item[$(V_3^{\ge 2}\setminus P)\cap V_t$:]
    Let
    \begin{equation*}
    H\coloneqq G[V_3^{\ge 2}\setminus P].
    \end{equation*}
    We first show that every vertex of $H$ has degree exactly two in $G$.

    Let $v\in V(H)$.
    If $d_G(v)=0$, then the connected component of $v$ is a single vertex, hence a caterpillar, contradicting the exhaustive application of Rule~2.
    If $d_G(v)=1$, then $v$ is pendant, contradicting $v\notin P$.
    If $d_G(v)\ge 3$, then Rule~5 implies that $v$ has no pendant neighbor, so $d_G(v)=d_G^*(v)\ge 3$, and therefore $v\in V_3$, contradicting $v\in V_3^{\ge 2}$.
    Hence $d_G(v)=2$ for every vertex $v\in V(H)$.

    It follows that every connected component of $H$ is a path or a cycle.
    A cycle component is impossible, as it would then also be a connected component of $G$, and as cycles are pseudo-caterpillars, this is impossible by Rule~3.
    Therefore every connected component of $H$ is a path.

    Now let $C$ be a connected component of $H$. Since every vertex of $C$ has degree two in $G$ and $C$ is a path, exactly two edges leave $C$, one from each endpoint of $C$.
    Their endpoints outside $C$ lie in $V_3^1\setminus P$: they cannot lie in $V_3$, since vertices of $C$ have distance at least two from $V_3$; they cannot lie in $V_3^{\ge 2}\setminus P$, since then they would belong to the same connected component of $H$; and they cannot be pendant by Rule~5.

    Moreover, every vertex of $V_3^1\setminus P$ is adjacent to vertices of at most one connected component of $H$. Indeed, let $x\in V_3^1\setminus P$. Since $x$ has distance one from $V_3$, it has a neighbor in $V_3$. If $x$ were adjacent to vertices from two distinct connected components of $H$, or to two vertices of one connected component of $H$, then $x$ would have at least three neighbors of degree greater than one, namely one neighbor in $V_3$ and two neighbors in $H$. Thus $d_G^*(x)\ge 3$, so $x\in V_3$, a contradiction.

    Hence distinct connected components of $H$ correspond to disjoint pairs of vertices of $V_3^1\setminus P$. Therefore the number of connected components of $H$ is at most
    \begin{equation*}
    \frac{|V_3^1\setminus P|}{2}\le 3k.
    \end{equation*}

    Finally, each connected component of $H$ has at most five vertices, for otherwise, Rule~6 applies, as every path on at least six vertices contains five consecutive vertices whose first, third, and fifth vertices lie in $V_t$.
    Therefore every connected component of $H$ has at most five vertices, and so each connected component contributes at most three vertices to $V_t$. Since $H$ has at most $3k$ connected components, it follows that
    \begin{equation*}
    |(V_3^{\ge 2}\setminus P)\cap V_t|
    \le 3\cdot 3k
    = 9k.
    \end{equation*}
\end{description}

Summing up the above bounds, we obtain
\begin{align*}
|V_t|
&= |V_3\cap V_t| + |V_3^1\cap P\cap V_t| + |(V_3^1\setminus P)\cap V_t| \\
&\qquad + |V_3^{\ge 2}\cap P\cap V_t| + |(V_3^{\ge 2}\setminus P)\cap V_t| \\
&\le 2k + \sum_{u\in V_3\cap V_b} d_G^*(u) + \sum_{u\in V_3\cap V_t} d_G^*(u) + 9k \\
&\le 2k + \sum_{u\in V_3} d_G^*(u) + 9k \\
&\le 2k + 6k + 9k \\
&= 17k.
\end{align*}

The same argument, with $V_t$ and $V_b$ exchanged, gives $|V_b|\le 17k$. Hence
\begin{equation*}
|V(G)| = |V_t| + |V_b| \le 34k. \qedhere
\end{equation*}
\end{proof}

\begin{theorem}

\label{thm:linear_kernel}
\probStar{} parameterized by $k$ admits a kernel $(G'=(V_t'\cup V_b',E'),k')$ with
    $|V_t'|\le 17k' \le 17k$
and
    $|V_b'|\le 17k' \le 17k$.
Moreover, the kernel can be computed in time $\mathcal O(m^2)$.
\end{theorem}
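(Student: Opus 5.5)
The kernel is obtained by applying Rules~1--6 of \cref{def:classic-rules} exhaustively, in the prescribed priority: Rule~6 fires only when Rules~1 and~5 do not. If some rule rejects, we output a trivial constant-size negative instance, for example a $6$-cycle (a pseudo-caterpillar) with $k'=0$. Otherwise we output the reduced instance $(G',k')$.

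Correctness follows from safety of each rule, given by \cref{lem:old-rules-safe,lem:rule5-safe,lem:rule6-safe}, together with \cref{prop:pw1-char}. The latter lets us reason in the pathwidth-one formulation throughout. We also check that every rule preserves bipartiteness with the same sides $V_t,V_b$. Rules~1--5 only delete vertices. Rule~6 replaces the path $v_1v_2v_3$ by a single vertex $b$ placed in $V_b$, adjacent to $v_0,v_4\in V_t$, so the bipartition is kept. Hence the output is a valid \probStar{} instance equivalent to the input.

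For the size bound, the parameter never increases: only Rule~3 changes it, and it decreases it by one, so $k'\le k$. Once no rule applies, \cref{lem:kernel-vertex-bound} applied to $(G',k')$ yields $|V_t'|\le 17k'$ and $|V_b'|\le 17k'$.

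For the running time, we show that only $\mathcal O(n+m)$ rule applications occur. Rules~1, 2, 3 and~5 each delete at least one vertex. Rule~6 deletes three vertices and adds one, a net loss of two vertices. Rule~4 either rejects or does nothing. Since $G$ has no isolated vertices, $n=\mathcal O(m)$, so there are $\mathcal O(m)$ applications in total. Between applications, we find an applicable rule from scratch in $\mathcal O(n+m)$ time. Degrees, $d^*_G$, pendant sets, $V_3(G)$ and $\mu(G)$ are all computable in linear time. Distances to $V_3(G)$ come from one multi-source BFS. Caterpillar and pseudo-caterpillar components are detected by deleting pendant vertices and checking whether each remaining component is a path or a cycle. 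Candidate paths for Rule~6 are located by scanning degree-$2$ vertices at distance at least $2$ from $V_3$ and checking the $V_t/V_b$ pattern and the induced-path condition locally. This gives $\mathcal O(m)\cdot\mathcal O(n+m)=\mathcal O(m^2)$.

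The main obstacle I expect is keeping this accounting honest. One point is that Rule~6 adds a vertex, so termination has to rest on the net decrease of two vertices rather than on counting deletions. The other is that the side condition ``Rules~1 and~5 not applicable'' must hold when Rule~6 is applied, since its safety proof uses it. Recomputing everything from scratch after each application, and always testing Rules~1--5 before Rule~6, takes care of both.
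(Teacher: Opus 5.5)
Your proposal is correct and follows the paper's proof essentially step for step. The shared route is: exhaustive application of Rules~1--6 with Rule~6 deferred until Rules~1 and~5 are inapplicable, safety from the three safety lemmas, $k'\le k$ plus the vertex-bound lemma for the size bound, and at most $n$ applications (each a net vertex deletion) detected in linear time. Your remark $n\le 2m$ cleanly reconciles the paper's $\mathcal O((n+m)^2)$ with the stated $\mathcal O(m^2)$.

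As a minor aside, a $6$-cycle with $k'=0$ does not itself satisfy $|V_t'|\le 17k'$. For $k\ge 1$, two disjoint $6$-cycles with $k'=1$ would satisfy it. When $k=0$, no negative instance can meet the bound at all, a degenerate case the paper also leaves implicit.
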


\begin{proof}
Apply Rules~1--6 exhaustively, always applying Rule~6 only when Rules~1 and~5 are not applicable. By \cref{lem:old-rules-safe,lem:rule5-safe,lem:rule6-safe}, all rules are safe.

We implement this as follows. After each non-rejecting rule application, recompute degrees, pendant vertices, $V_3(G)$, distances from $V_3(G)$, connected components, and $\mu(G)$ in time $\mathcal O(n+m)$. Rules~1--5 can then be tested in time $\mathcal O(n+m)$.

To test Rule~6, consider the graph
\begin{equation*}
    H\coloneqq G[V_3^{\ge 2}\setminus P],
\end{equation*}
where $P$ is the set of pendant vertices. If Rules~1 and~5 are not applicable, then every vertex of $H$ has degree exactly two in $G$. Hence every connected component of $H$ can be traversed in linear time. Rule~6 is applicable exactly when such a path component contains five consecutive vertices in the pattern $V_t,V_b,V_t,V_b,V_t$. Thus Rule~6 can also be tested in time $\mathcal O(n+m)$.

Each non-rejecting rule application decreases $|V(G)|$. Hence there are at most $|V(G)|$ such applications, and the exhaustive reduction takes time $\mathcal O((n+m)^2)$.

Let $(G'=(V_t'\cup V_b',E'),k')$ be the resulting instance. By \cref{lem:kernel-vertex-bound}, if the instance is not rejected, then
\begin{equation*}
    |V_t'|\le 17k'
    \qquad\text{and}\qquad
    |V_b'|\le 17k'.
\end{equation*}
Since $k'\le k$, the claimed bounds follow.
\end{proof}

Finally, applying the $\mathcal{O}^*(2^\Delta)$ algorithm of \cref{thm:thmdegree} to this (linear) kernel yields:

\thmclassic*

\begin{proof}
First compute the kernel from \cref{thm:linear_kernel} in time $\mathcal O((n+m)^2)$. Let $(G'=(V_t'\cup V_b',E'),k')$ be the resulting instance. We have $|V_t'|\le 17k'\le 17k$.

If $|V_t'|\le 1$, the instance is trivial. Otherwise, let $T$ be the star with leaf set $V_t'$. Then \probStar{} on $(G',k')$ is exactly \prob{} on $(G',T,k')$, where the maximum degree of $T$ is $|V_t'|$. Since $G'$ is bipartite and $|V_t'|,|V_b'|\le 17k$, we have $|V(G')|\le 34k$ and $|E'|\le |V_t'||V_b'|\le 289k^2$. Applying \cref{thm:thmdegree} gives running time
\begin{equation*}
    \mathcal O\bigl(2^{|V_t'|}\cdot |V_t'|^3\cdot (k')^3\cdot |V(G')| + |E'|\bigr)
    =
    \mathcal O\bigl(2^{17k}\cdot k^7\bigr).
\end{equation*}
Since $|V_t'|\le 17k$, $k'\le k$, and $|V(G')|\le 34k$, the product $|V_t'|^3(k')^3|V(G')|$ is $\mathcal O(k^7)$.
Thus, in total, we can solve the instance in time $\mathcal O(2^{17k}\cdot k^7+(n+m)^2)$.
\end{proof}

Moreover, \cref{thm:thmclassic} cannot be improved to $2^{o(k)}$ unless the \ETH fails (cf.~\cref{lem:eth-lower-probstar}).

\section{The Implementation}
\label{sec:implementation}

We implemented the $\mathcal{O}^*(2^\Delta)$ algorithm of
\cref{thm:thmdegree} in \texttt{C++}. The source code, instances, and result files are available in the
\href{https://doi.org/10.17605/OSF.IO/2VCY9}{\textcolor[RGB]{45,85,130}{OSF supplement}}~\cite{supplement}.
All experiments were run on an Apple M1 Pro processor.
We first tested the implementation on the Human Reference Atlas ASCT+B tables
discussed in the introduction. We obtained \num{36} real-world instances with up
to \num{942} anatomical vertices, \num{188} cell types, and \num{2301} edges,
all solved in at most \SI{220.83}{\milli\second} on a single core. One resulting drawing for a
medium-sized instance is shown in \cref{fig:large}. We also tested the
implementation on synthetic instances, some orders of magnitude larger than the
real-world ones, using all cores (see \cref{fig:benchmark}).

\begin{center}
    \centering

  \includegraphics[page=2,width=.965\linewidth]{figures.pdf}
    \captionsetup{hypcap=false}
    \captionof{figure}{An optimal solution for the lymph-node instance ($\Delta=6$, \num{26} splits), found in \SI{5.23}{\milli\second}.}
    \label{fig:large}
\end{center}

\begin{center}
    \centering
    \includegraphics[]{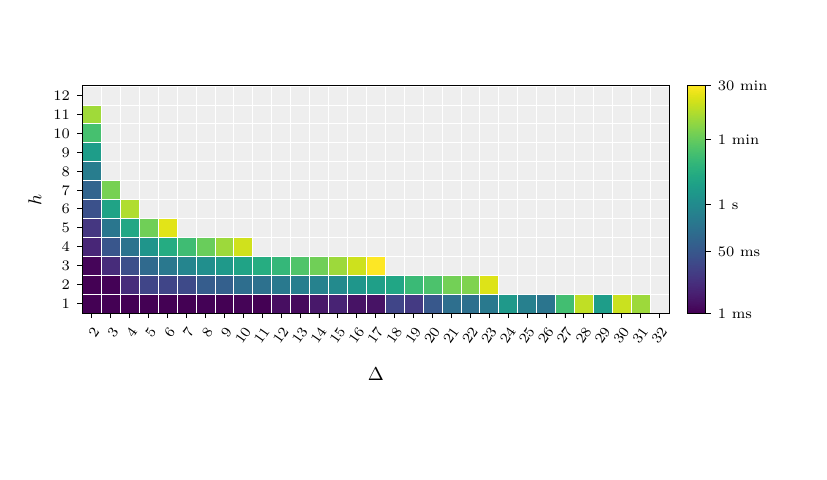}
    \captionsetup{hypcap=false}
\captionof{figure}{Benchmark on random instances with a full $\Delta$-ary tree $T$ of height
$h$ and $|V_t|=|V_b|$, where each bottom degree is sampled independently from a
geometric distribution with mean $2$. For each cell, one instance was sampled
and its running time is indicated by the color. Gray cells represent
computationally  infeasible runs, i.e., runs that did not finish within a
30-minute timeout or exhausted the available memory. For example, the cell
$h=4$, $\Delta=10$ corresponds to an instance with
$|V_t|=|V_b|=\Delta^h=10^4$ and $|E|=\num{19993}$, and was solved
in $10.33$ minutes with  split number~$\num{9724}$.
}

    \label{fig:benchmark}
\end{center}
\section{Concluding Remarks}
\looseness=-1
Our results delimit the state of the art for \problong, and moreover improve upon the previous algorithms~\cite{povs,lxhw-pap2dvsgs-24} for \probStar (which can be seen as a special case of \prob) by developing a single-exponential algorithm. For \cref{thm:thmsplits}, it would be interesting for future work to understand whether the running time could be improved to single-exponential, matching both the ETH lower bound from Lemma~\ref{lem:eth-lower-probstar} and the running time of our other two results. On the other hand, the running time for Theorem~\ref{thm:thmdegree} seems to be essentially optimal: it cannot be improved to subexponential under the Exponential Time Hypothesis, and moreover it seems challenging to reduce the base of the exponent below $2$ via the currently employed approaches. Indeed, it is not difficult to observe that an improvement to our \wts algorithm from $\mathcal{O}^*(2^d)$ to $\mathcal{O}^*(1.99^d)$ (where $d$ is the number of tuple sets) would yield an $O^*(1.99^n)$ algorithm for \textsc{Directed Hamiltonian Path}, on which a recent article~\cite{DBLP:journals/jcss/KowalikLNSW22} commented:

\begin{quote}
\emph{[T]he question about its existence is a major open problem which in the last 58 years has seen some progress
only for special graph classes, like bipartite graphs.}
\end{quote}

\clearpage
\newpage

\bibliography{refs}

\end{document}